\documentclass[12pt]{article}

\usepackage{mathrsfs}
\usepackage{appendix}
\usepackage{natbib}
\usepackage{extarrows}
\usepackage{graphicx,setspace,lscape,longtable,epstopdf,xr}
\usepackage{mathrsfs,amsmath,amsthm,amssymb,color}
\usepackage{epsfig,graphicx,pdfpages}
\usepackage{rotating}
\usepackage{float}
\usepackage{bm}
\usepackage{enumerate}
\usepackage{ragged2e}
\usepackage{todonotes}
\usepackage[hidelinks]{hyperref}
\usepackage{multirow}
\hypersetup{colorlinks=true, linkcolor=blue, citecolor=blue, urlcolor=blue}
\usepackage{booktabs}
\usepackage{algorithm}
\usepackage{algpseudocode}
\usepackage{amsmath}
\usepackage{graphicx}
\usepackage{subfigure}

\usepackage{hyperref}
	
\bibpunct{(}{)}{;}{a}{,}{,}

\setlength{\textwidth}{6in}
\setlength{\textheight}{8.75in}
\setlength{\topmargin}{-0.25in}
\setlength{\oddsidemargin}{0.25in}
\setlength{\evensidemargin}{0.25in}

\newtheorem{theorem}{Theorem}
\newtheorem{lemma}{Lemma}
\newtheorem{proposition}{Proposition}

\baselineskip = 7 mm
\parskip = 2.5 mm

\renewcommand\tablename{Table}

\def\one{{\bf 1}}
\def\bA{{\mathbf A}}
\def\ba{{\mathbf a}}
\def\bomega{{\boldsymbol{\omega}}}

\def\ind{\textup{ind}}
\def\sh{\textup{sh}}
\def\tr{\mbox{tr}}

\def\beq{\begin{equation}}
\def\eeq{\end{equation}}
\def\beqr{\begin{eqnarray}}
\def\eeqr{\end{eqnarray}}
\def\beqrs{\begin{eqnarray*}}
\def\eeqrs{\end{eqnarray*}}
\def\bet{\begin{theorem}}
\def\eet{\end{theorem}}
\def\bel{\begin{lemma}}
\def\eel{\end{lemma}}
\def\bep{\begin{proposition}}
\def\eep{\end{proposition}}
\def\bg{\begin{figure}[tbph]\begin{center}}
\def\eg{\end{center}\end{figure}}

\def\bc{\begin{center}}
\def\ec{\end{center}}

\def\BIC{\mbox{BIC}}
\def\TPR{\mbox{TPR}}
\def\FPR{\mbox{FPR}}
\def\SD{\mbox{SD}}
\def\BIC{\mbox{BIC}}

\def\wt{\widetilde}

\def\wh{\widehat}
\def\wb{\overline}

\def\bb{\mathbf{b}}
\def\bB{\mathbf{B}}
\def\mA{\mathbb A}

\def\C{\mathbb C}
\def\bD{\mathbb D}
\def\mN{\mathcal{N}}

\def\mR{\mathbb{R}}

\def\mS{\mathbb S}

\def\H{\mathbb H}

\def\bM{\mathbb M}

\def\bS{\mathbf S}
\def\mS{\mathcal S}
\def\mE{\mathcal E}

\def\bM{\mathbf{M}}
\def\mT{\mathcal T}
\def\bH{\mathbb H}

\def\V{\mathbb{V}}

\def\mZ{\mathbb{Z}}
\def\bx{\mathbf{x}}
\def\bZ{\mathbf{Z}}
\def\bV{\mathbf{V}}
\def\bu{\mathbf{u}}
\def\bbf{\mathbf{f}}
\def\bF{\mathbf{F}}

\def\bv{\mathbf{v}}
\def\var{\mbox{var}}
\def\supp{\mbox{supp}}

\def\cov{\mbox{cov}}
\def\lasso{\textup{lasso}}
\def\oracle{\textup{oracle}}

\def\ols{\text{OLS}}
\def\initial{\textup{initial}}
\def\argmin{\mbox{argmin}}

\def\diag{\mbox{diag}}

\def\CS{\mbox{CS}}
\def\Bias{\mbox{Bias}}

\def\rmse{\mbox{RMSE}}
\def\vec{\mbox{vec}}

\newcommand{\bbeta}{\boldsymbol{\beta}}

\def\bSigma{\boldsymbol{\Sigma}}
\def\bOmega{\boldsymbol{\Omega}}

\def\bdelta{\boldsymbol{\delta}}
\def\bDelta{\boldsymbol{\Delta}}

\newcommand{\RNum}[1]{\uppercase\expandafter{\romannumeral #1\relax}}
\def\bB{\mathbf{B}}
\def\bC{\mathbf{C}}
\def\bE{\mathbf{E}}
\def\bX{\mathbf{X}}
\def\bx{\mathbf{x}}
\def\bY{\mathbf{Y}}
\def\bZ{\mathbf{Z}}

\def\by{\mathbf{y}}
\def\bc{\mathbf{c}}
\def\bV{\mathbf{V}}
\def\bG{\mathbf{G}}
\def\bH{\mathbf{H}}
\def\bD{\mathbf{D}}

\def\bW{\mathbf{W}}
\def\bw{\mathbf{w}}
\def\bu{\mathbf{u}}

\def\bg{\mbox{\boldmath $g$}}
\def\bI{\mathbf{I}}

\def\zero{\mathbf{0}}
\def\defeq{\stackrel{\mathrm{def}}{=}}  
\def\sign{\mbox{sign}}

\textheight= 9 in \textwidth = 6.1 in \evensidemargin = 0 in

\setlength{\bibsep}{-4.5pt}
\def\boxit#1{\vbox{\hrule\hbox{\vrule\kern6pt\vbox{\kern6pt#1\kern6pt}\kern6pt\vrule}\hrule}}




\numberwithin{equation}{section}

\begin{document}
\begin{center}

{\bf\Large Penalized Sparse Covariance Regression with High Dimensional Covariates}\\
\bigskip

\iftrue{  
	
	Yuan Gao$^1$, Zhiyuan Zhang$^{2}$, Zhanrui Cai$^{4}$,
	Xuening Zhu$^{2,3*}$, \\
	Tao Zou$^{5}$ and Hansheng Wang$^{1}$
	
	{\it \small
		$^1$Guanghua School of Management, Peking University, Beijing China;\\
		$^2$School of Data Science, Fudan University, Shanghai, China;\\
		$^3$MOE Laboratory for National Development and Intelligent Governance, Fudan University, Shanghai, China;\\
		$^4$Faculty of Business and Economics, The University of Hong Kong, Hong Kong, China;\\
		$^5$Research School of Finance, Actuarial Studies and Statistics, Australian National University, Canberra, Australia
	}
	
}\fi
\end{center}

\begin{footnotetext}[1]{
	Xuening Zhu ({\it xueningzhu@fudan.edu.cn}) is the corresponding author.}
	\end{footnotetext}

\begin{singlespace}
\begin{abstract}

  Covariance regression offers an effective way to model the large covariance matrix with the auxiliary similarity matrices.
  In this work, we propose a sparse covariance regression (SCR) approach to handle the potentially high-dimensional predictors (i.e., similarity matrices).
  Specifically, we use the penalization method to identify the informative predictors and estimate their associated coefficients simultaneously.
  We first investigate the Lasso estimator and subsequently consider the folded concave penalized estimation methods (e.g., SCAD and MCP).
  However, the theoretical analysis of the existing penalization methods is primarily based on \textit{i.i.d.} data, which is not directly applicable to our scenario.
  To address this difficulty, we establish the non-asymptotic error bounds by exploiting the spectral properties of the covariance matrix and similarity matrices.
  Then, we derive the estimation error bound for the Lasso estimator and establish the desirable oracle property of the folded concave penalized estimator.
  Extensive simulation studies are conducted to corroborate our theoretical results. We also illustrate the usefulness of the proposed method by applying it to a Chinese stock market dataset.

\vskip 1em
\noindent {\bf KEYWORDS: } Covariance matrix estimation, covariance regression, folded concave penalty, high dimensional modeling

\end{abstract}
\end{singlespace}

\newpage

\section{Introduction}

Estimating the covariance matrix is an essential task for many statistical learning problems. For instance, for financial risk management, the covariance matrix estimated from the stock returns can be used to construct investment portfolios \citep{goldfarb2003robust,fan2012volatility,fan2012vast}.
In network data analysis, estimating the covariance matrix of the associated responses is helpful to understand the network structure  \citep{lan2018covariance, liu2020semiparametric}.
In addition, for many popular multivariate statistical methods like linear discriminant analysis (LDA),
the estimation of the covariance matrix is often a prerequisite operation \citep{johnson1992applied, pan2016ultrahigh}. Therefore, obtaining a reliable estimate of the covariance matrix is of great importance.

The main challenge of the covariance matrix estimation is that the number of unknown parameters can be huge, especially for large-scale covariance matrix \citep{bickel2008regularized, fan2016overview}.
To deal with this issue, two common approaches exist in the literature.
The first approach assumes a sparse or a low-rank structure for the covariance matrix \citep{bickel2008covariance,bickel2008regularized,lam2009sparsistency,cai2011adaptive, fan2011high,fan2013large,fan2018large}.
Consequently, specific regularization algorithms can be applied to recover the covariance matrix's intrinsic sparsity or low-rank structure.
However, this approach typically requires many repeated observations of the response vectors to obtain a reliable estimation result.
As an alternative approach, \cite{zou2017covariance} proposes a covariance regression framework, directly expressing the covariance matrix as a linear combination of known similarity matrices.
The similarity matrices can be constructed from auxiliary covariates or network structures among the subjects.
Take the stock returns as an example.
To estimate the covariance matrix for the stock returns,
we can collect a number of firms' fundamentals
as the auxiliary information.
In addition, we can use the industrial information and common shareholder relationship among the stocks
to construct networks.
One can easily construct many similarity matrices from the above auxiliary and network information.
This enables us to obtain a reliable estimation for the large-scale covariance matrix, especially when the number of periods is limited.

Despite the usefulness of the covariance regression model, its performance can be unstable when a large number of predictors (i.e., the similarity matrices) are available.
That is because estimating many regression coefficients simultaneously in the covariance regression model is challenging.
To deal with the potential high dimensionality of regression coefficients, a popular solution is to
impose the sparsity assumption on the coefficients \citep{fan2001variable,fan2004nonconcave,wang2009shrinkage}, which enables us to select the predictors with significant contributions.
Meanwhile, it allows us to obtain a more reliable estimate for the covariance matrix.

To achieve this goal, we consider using penalized estimation methods in the covariance regression model.
For the conventional regression models, the $L_1$-penalized (i.e., Lasso) regression \citep{tibshirani1996regression} is widely used due to its computational attractiveness and good performance in practice.
However, it has been shown that the Lasso estimator requires relatively strong conditions to achieve the variable selection consistency \citep{zou2006adaptive,zhao2006model}.
The folded concave penalized methods, such as SCAD \citep{fan2001variable} and MCP \citep{zhang2010nearly}, are proposed to achieve the desirable oracle property under milder conditions.
Namely, they could estimate the nonzero regression coefficients as if we knew the true sparsity pattern in advance.
The folded concave penalized regression model has been extensively studied in recent years \citep{fan2011nonconcave,zhang2012general,wang2013calibrating,fan2014strong,fan2017high}.
Various research studies \citep{wang2007tuning,zou2008one,fan2011sparse,zhu2020nonconcave} also illustrate its theoretical and practical advantages.

Although these penalized methods for conventional regression models have been well studied, to our best knowledge, they have not yet been applied to the covariance regression model discussed in this study.
The traditional regression model typically assumes that the data are independent and identically generated from the same underlying model\citep{fan2001variable,wang2013calibrating,fan2014strong}, or follow certain dependence structures, such as time series \citep{chan2014group}. However, the previous situations are distinctly different from the covariance regression model considered in the current paper. Although we can treat the covariance regression model as a particular type of matrix regression, it is important to note that the matrix entries are not independently distributed but have special dependence structures. The new structure presents significant challenges in deriving the estimation error bound, especially in high-dimensional settings.

This paper studies the properties of the penalized estimation methods for the sparse covariance regression (SCR) model.
To demonstrate the advantages of the SCR model, we first consider the most challenging situation where only a single observation of the response is available.
We investigate the Lasso estimator and derive the corresponding non-asymptotic error bound. The results demonstrate that the Lasso estimator is consistent, but unfortunately its oracle property is not guaranteed.
To address this limitation, we explore the folded concave penalized estimation method.
Specifically, we use the Lasso estimator as the initial value for the local linear approximation (LLA) algorithm to compute its solution.
Theoretically, we establish the strong oracle property for the resulting estimator, indicating that the LLA algorithm can converge exactly to the oracle estimator with an overwhelming probability.
Moreover, we demonstrate the asymptotic normality for the oracle estimator in a more general case.
Lastly, we extend the SCR model to the scenario with repeated observations of the response.
In this case, faster convergence rate can be obtained and heterogeneity can be well accommodated.
We also demonstrate that the SCR model can be naturally combined with the classical factor models.
This leads to a new class of factor composite models with better modeling flexibility.
We then apply those methods to analyze the returns of the stocks traded in the Chinese A-share market with encouraging feedback.

The rest of the article is organized as follows.
In Section 2, we introduce the penalized regression methods for the sparse covariance regression (SCR) model.
Section 3 investigates the theoretical properties of the proposed estimators.
Section 4 explores some extensions for the scenario involving repeated observations.
Numerical studies are given in Section 5.
Finally, we provide all technical proof details and additional numerical experiments in the Appendix.

\section{Sparse Covariance Regression}

\subsection{Model and Notations}

Let $\by = (Y_1,\cdots, Y_p)^\top \in \mR^p$ be a continuous $p$-dimensional vector
with mean $\zero$ and covariance $\bSigma = E(\by\by^\top)\in \mR^{p \times p}$.
In addition, for the $j$th subject, we collect a set of associated covariates
as $\bx_j = (X_{j1},\cdots , X_{jK})^\top\in \mR^{K}$.
For example, $Y_j$ can be the stock return of the $j$th firm,
and $\bx_j$ is the associated the financial fundamentals (e.g., market value, cash flow).

To model the covariance matrix $\bSigma$, we follow \cite{zou2017covariance} to consider a set of similarity matrices.
First, the similarity matrix can be constructed based on the covariate information $\bx_j\ (1\le j\le p)$.
Suppose the $k$th type of covariate is a continuous variable, then the similarity between the subject $j_1$ and $j_2$ can be defined as $w_{k,j_1j_2} = \exp\{-d(X_{j_1k}, X_{j_2k})\}$, where $d(X_{j_1k}, X_{j_2k})$ denotes certain type of distance function between
$X_{j_1k}$ and $ X_{j_2k}$. For a discrete covariate, the similarity between subject $j_1$ and $j_2$ can be defined if they share the same value.
For instance, in a stock network, we define
\begin{align*}
w_{k,j_1j_2} = \left \{
\begin{array}{cl}
  1 & \mbox{if the stocks $j_1$ and $j_2$ are in the same industry}  \\
  0 & \mbox{otherwise}
\end{array}
 \right.
\end{align*}
In social network analysis, the similarity matrix can also be defined by the friend relationships among the network users.
Then we express the covariance matrix by a linear combination of the similarity matrices, i.e.,
\begin{equation}
\bSigma(\bbeta) = \beta_0 \bI_p + \sum_{k = 1}^K\beta_k \bW_k,\label{eq:Sigma_0}
\end{equation}
where $\bW_k = (w_{k,j_1j_2}) \in \mR^{p \times p}$
is the similarity matrix constructed based on $k$th covariate $\bX_k = (X_{1k},\cdots, X_{pk})^\top \in \mR^p$.
Here $\beta_k$s ($0\le k\le K$) are corresponding covariance regression coefficients.
Note that similarity matrices typically have the same diagonal elements. For example, when using continuous covariates $\bX_k$s to construct similarity matrices as described above, all their diagonal elements are equal to $\exp(0)=1$. In this case, the model can be rewritten as $\bSigma(\bbeta) =  \sum_{k=0}^K \beta_k \bI_p + \sum_{k = 1}^K\beta_k (\bW_k-\bI_p)$. Then the diagonal elements of $\bW_k-\bI_p$ become zeros for each $1\le k\le K$. Therefore, for the similarity matrices $\bW_k\ (1\le k\le K)$ with the the same diagonal elements, we set them to be zeros as suggested by \cite{zou2017covariance}.
However, when $\bW_k$s have different diagonal elements, we can leave the diagonal elements of $\bW_k$s as they are. The numerical studies in Section \ref{subsec:empirical} and Appendix A.7 present some concrete examples.
Let $\bbeta^{(0)} = (\beta_0^{(0)},\cdots ,\beta_K^{(0)})^\top$ be the true regression vector of $\bbeta$ in \eqref{eq:Sigma_0} and we consider a sparse structure of $\bbeta^{(0)}$.
Specifically, let $\mS = \supp(\bbeta^{(0)})$ collects the indexes of nonzero coefficients.
Consequently, we have $\beta_k^{(0)}\ne 0$ for $k\in \mS$ and $\beta_k^{(0)} = 0$ for $k\not\in \mS$.
Given \eqref{eq:Sigma_0}, the sparse covariance regression (SCR) model can be expressed as
\begin{equation*}
\by\by^\top  = \beta_0 \bI_p+ \sum_{k = 1}^K \beta_k\bW_k + \mE,
\end{equation*}
where $\mE$ is a symmetric random matrix that satisfies $E(\mE) = \zero_{p\times p}$.
Without loss of generality, we let $\bW_0 = \bI_p$ in the following, and denote $\bSigma_0 = \bSigma(\bbeta^{(0)}) \defeq \sum_{k=0}^K \beta_k^{(0)} \bW_k$ as the true covariance matrix.

{\sc Notation.}
Throughout this paper, we denote the cardinality of a set $\mS$
by $|\mS|$.
In addition, let $\mS^c$ complement the set $\mS$.
For a vector $\bv = (v_1,\cdots, v_p)^\top\in\mR^p$, let
$\|\bv\|_q = (\sum_{j = 1}^p v_j^q)^{1/q}$ for $q>0$.
For convenience, we omit the subindex when $q = 2$.
Denote $\supp(\bv)$ as the support of the vector.
Particularly, we use $\|\bv\|_{\infty}$ to denote $\max_{j} |v_j|$, and $\| \bv\|_{\min}$ to denote $\min_{j} |v_j|$.
In addition, denote $\bv_{\mS}$ as a sub-vector of $\bv$ as
$\bv_{\mS} = (v_j: j\in\mS)^\top\in\mR^{|\mS|}$.
For symmetric matrix $\bA = (a_{ij}) \in \mR^{p\times p}$, we use $\lambda_{\max}(\bA)$ and $\lambda_{\min}(\bA)$ to denote the maximum and minimum eigenvalues of $\bA$, respectively.
For an arbitrary matrix $\bM = (m_{ij}) \in \mR^{p_1\times p_2}$, denote $\|\bM\|=\|\bM\|_2 = \lambda^{1/2}_{\max}(\bM^\top\bM)$, $\|\bM\|_1 = \max_{1\le j\le p_2}(\sum_{i = 1}^{p_1}|m_{ij}|)$, $\|\bM\|_\infty = \max_{1\le i\le p_1}(\sum_{j = 1}^{p_2}|m_{ij}|)$,
and $\| \bM\|_F = \left( \sum_{i, j} m_{ij}^2 \right)^{1/2}$.
For arbitrary two sequences $\{a_N\}$ and $\{b_N\}$, denote $a_N\gg b_N$ to mean that $a_N/b_N\rightarrow\infty$.

\subsection{Penalized Estimation}

To estimate the coefficients of the covariance regression model, \cite{zou2017covariance} proposed to use a least squares objective function,
\begin{equation}\label{eq:Qbeta}
    Q(\bbeta) = \frac{1}{2p} \Big\|\by\by^\top - \bSigma(\bbeta) \Big\|_F^2.
\end{equation}
Let $ \wh{\bbeta}_{\ols}= \arg\min Q(\bbeta)$ be the ordinary least squares (OLS) solution to \eqref{eq:Qbeta}. Then one can derive its analytical form as
\begin{equation}\label{eq:ols_est}
\wh{\bbeta}_{\ols}= \arg\min Q(\bbeta) =\bSigma_W^{-1}\bSigma_{WY},
\end{equation}
where $\bSigma_W = \{\tr(\bW_k \bW_l): 0\le k,l\le K\}\in \mR^{(K+1)\times (K+1)}$
and $\bSigma_{WY} = \{\by^\top \bW_k \by: 0\le k\le K\}^\top \in \mR^{K+1}$.
The OLS estimation is feasible when $K$ is of low dimension.
However, if the number of candidate similarity matrices is large,
one cannot obtain a reliable estimator of $\bbeta$ using the OLS method.

Considering the high dimensionality of the problem and the sparsity of the regression coefficients, we first consider the Lasso penalized estimator for the sparse covariance regression (SCR) model as follows:
\begin{equation}\label{eq:lasso}
    \wh \bbeta^{\lasso} = \argmin_{\bbeta} Q(\bbeta) + \lambda_0  \|\bbeta\|_1,
\end{equation}
where $Q(\cdot)$ is defined in \eqref{eq:Qbeta}, and $\lambda_0\ge 0$ is a tuning parameter. With $\lambda_0 = 0$, the estimator reduces to the OLS estimator as \eqref{eq:ols_est}.
In practice, if we have the preliminary information that some predictors (i.e., $\bW_k$s) are important, we can directly keep the corresponding coefficients unpenalized. For example, the intercept $\beta_0$ corresponding to $\bW_0 = \bI_p$ is usually left out of the penalty term.
To compute the Lasso estimator in \eqref{eq:lasso}, efficient algorithms like LARS \citep{efron2004least} and coordinate descent \citep{friedman2007pathwise} can be implemented. However, the Lasso estimator is not guaranteed to possess oracle property in general \citep{zou2006adaptive}.

To address this issue, we adopt the folded concave penalized SCR method. Specifically, we need to minimize the following penalized loss function as
\begin{equation}\label{eq:Q_lambda}
Q_\lambda(\bbeta) = Q(\bbeta)+ \sum_{k = 0}^K p_\lambda(|\beta_k|),
\end{equation}
where $p_\lambda(\cdot)$ is the folded concave penalty function and $\lambda\ge 0$ is a tuning parameter. 
Following \cite{fan2014strong}, throughout the article, we assume that the folded concave penalty function $p_\lambda(|t|)$ defined on $t \in (-\infty, \infty)$ satisfies:
\begin{enumerate}[(i)]
    \item $p_\lambda(t)$ is increasing and concave in $t\in[0, \infty)$ with $p_\lambda(0)=0$;
    \item $p_\lambda(t)$ is differentiable in $t\in (0, \infty)$ with derivative $p'_\lambda(0) \defeq p'_\lambda(0+)\ge a_1 \lambda$;
    \item $p'_\lambda(t)\ge a_1 \lambda$ for $t\in (0, a_2 \lambda]$;
    \item $p'_\lambda(t)=0 $ for $t\in[\gamma\lambda, \infty)$ with the prespecified constant $\gamma>a_2$.
\end{enumerate}
Here, $a_1$ and $a_2$ are two fixed positive constants.
The above definition includes and extends the popularly used SCAD penalty \citep{Fan:Li:2001} and MCP penalty \citep{zhang2010nearly}.
The SCAD penalty function takes the form as
\begin{align*}
    p_{\lambda, \gamma}(t) = \begin{cases}
      \lambda t & \text{if  $0\le t \le \lambda$,}\\
      \frac{2\gamma\lambda t - (t^2+\lambda^2)}{2(\gamma - 1)} & \text{if $\lambda < t \le \gamma \lambda$,}\\
      \frac{\lambda^2(\gamma^2 - 1)}{2(\gamma - 1)} & \text{if $t> \gamma\lambda$,}
    \end{cases}
\end{align*}
for some $\gamma >2$.
The MCP penalty function takes the form as
\begin{align*}
    p_{\lambda, \gamma}(t) = \begin{cases}
    \lambda t - \frac{t^2}{2\gamma} & \text{if $0\le t \le \gamma \lambda$,}\\
    \frac{1}{2} \gamma \lambda^2 & \text{if $t > \gamma \lambda$,}
    \end{cases}
\end{align*}
for some $\gamma>1$.
It is easy to verify that $a_1 = a_2 = 1$ for the SCAD penalty, and $a_1 = 1 - \gamma^{-1}$, $a_2 = 1$ for the MCP penalty, according to the previous definition.
We visualize the two penalty functions in Figure \ref{fig:scad_mcp}.

\begin{figure}[htbp]
	\centering
	\subfigure{
		\begin{minipage}[t]{0.5\linewidth}
			\centering
			\includegraphics[scale=0.4]{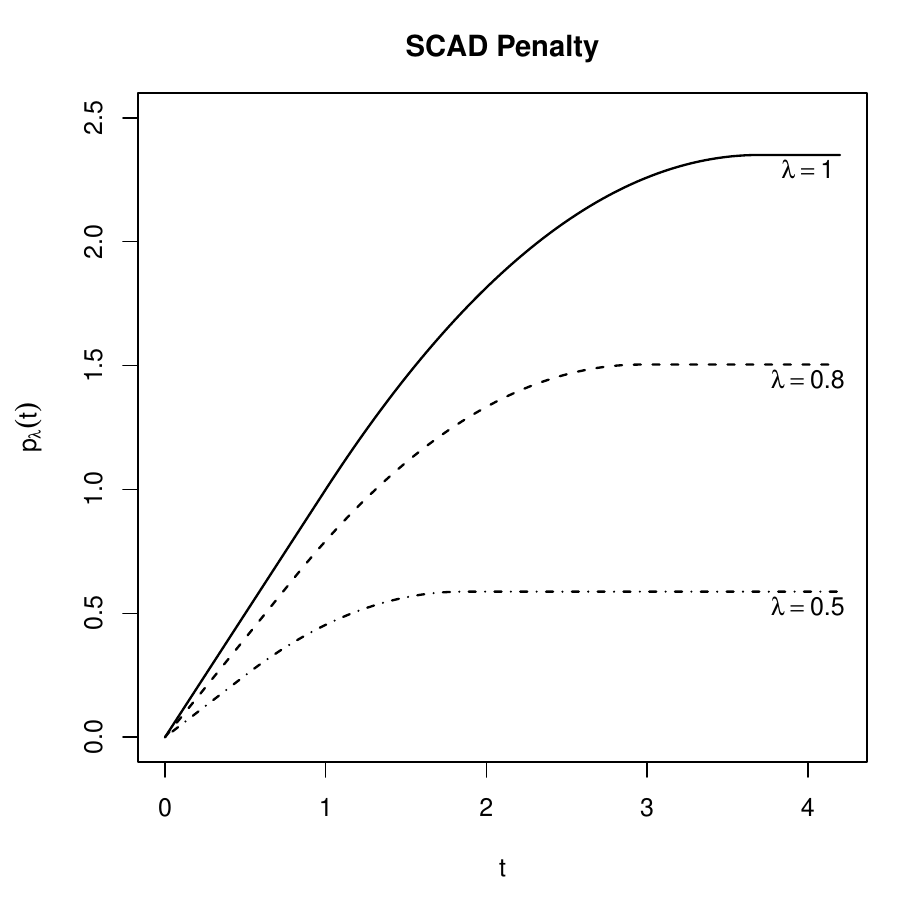}
		\end{minipage}%
	}%
	\subfigure{
		\begin{minipage}[t]{0.5\linewidth}
			\centering
			\includegraphics[scale=0.4]{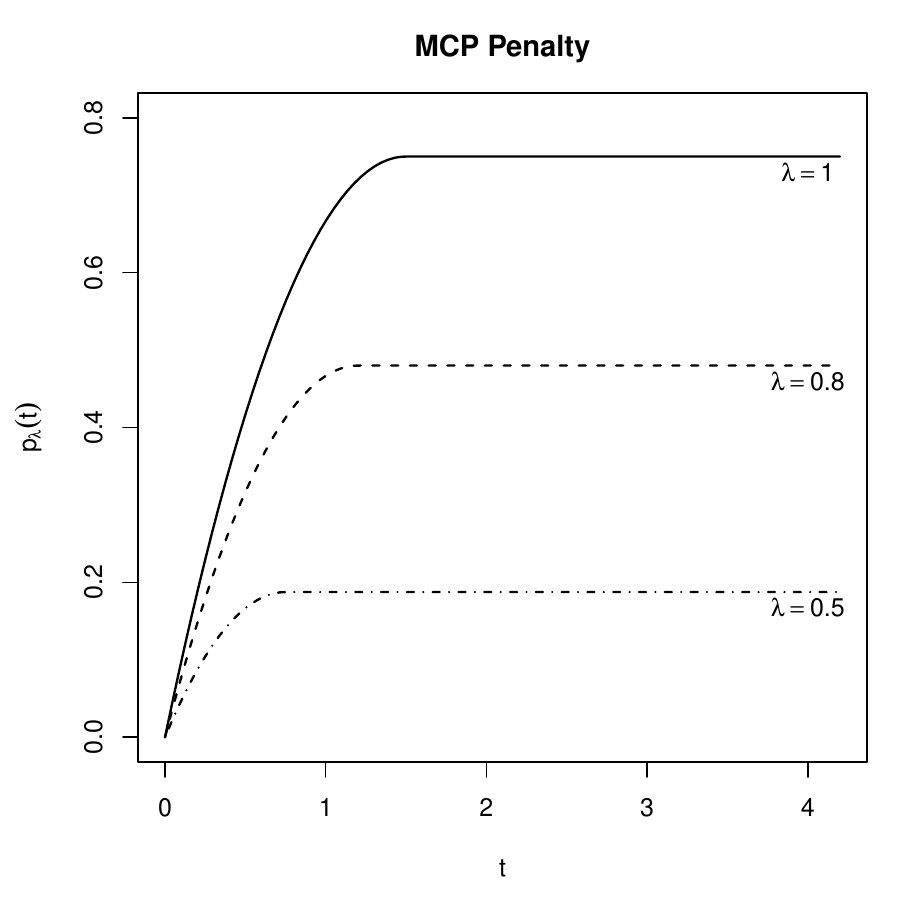}
		\end{minipage}
	}
	\centering
	\caption{The SCAD ($\gamma = 3.7$) and MCP ($\gamma = 1.5$) penalty functions with different values of $\lambda$.}\label{fig:scad_mcp}
\end{figure}

The local linear approximation (LLA) algorithm \citep{zou2008one} is adopted to minimize the objective function defined in \eqref{eq:Q_lambda}.
The algorithm details are summarized in Algorithm \ref{alg:LLA}.
To implement the LLA algorithm, an initial estimator $\wh\bbeta^{\initial}$ needs to be specified.
It can be observed that if the LLA algorithm is initialized by zero, then the one-step
estimator should be the solution to $\argmin_{\bbeta} \big\{Q(\bbeta) + p_\lambda'(0) \|\bbeta\|_1 \big\}$. This is actually a Lasso estimation problem equivalent to \eqref{eq:lasso}.
Consequently, we use the Lasso estimator $\wh \bbeta^{\lasso}$ to initialize the LLA algorithm.
In the next section, we investigate the theoretical properties of the Lasso estimator and the resulting estimator of the LLA algorithm.

\begin{algorithm}
\caption{The local linear approximation (LLA) algorithm}\label{alg:LLA}
\begin{enumerate}

    \item Initialize $\wh \bbeta^{(0)} = \wh \bbeta^{\initial}$, and compute the adaptive weights:
    \begin{equation*}
        \wh\bw^{(0)} = \Big(\wh w_0^{(0)}, \dots, \wh w_K^{(0)}\Big)^\top = \Big(p'_\lambda(| \wh\beta_0^{(0)}|), \dots, p'_\lambda(| \wh\beta_K^{(0)}|)\Big)^\top.
    \end{equation*}
    \item For $m=1,2,\dots$, repeat the LLA iteration till converge
    \begin{enumerate}[(2.a)]
        \item Obtain $\wh \bbeta^{(m)}$ by solving the following optimization problem:
        \begin{equation*}
            \wh \bbeta^{(m)} = \argmin_{\bbeta} Q(\bbeta) +  \sum_{k = 0}^K \wh w_j^{(m-1)}|\beta_k|;
        \end{equation*}
        \item Update the adaptive weight vector $\wh\bw^{(m)}$ with $\wh w_k^{(m)} = p'_\lambda(| \wh\beta_k^{(m)}|)$ for $0\le k \le K$.
    \end{enumerate}
\end{enumerate}
\end{algorithm}

\section{Theoretical Properties}

Recall that $\mS = \supp(\bbeta^{(0)})$ collects the indexes of nonzero coefficients of the true coefficient $\bbeta^{(0)}$. 
Without loss of generality, we assume $\mS = \{0,1,\dots, s\}$ with $  |\mS|=s+1>0$. Obviously, the complement of $\mS$ should be $\mS^c = \{s+1, \dots, K\}$.
If we know the true support set $\mS$ in advance, then we can define the oracle estimator for SCR model as
\begin{equation}\label{eq:oracle}
     \wh \bbeta^{\oracle} = (\wh \bbeta^{\oracle\top}_{\mS}, \zero^\top)^\top = \argmin_{\bbeta: \bbeta_{\mS^c=\zero}} Q(\bbeta),
\end{equation}
where $Q(\bbeta)$ is the unpenalized loss defined in \eqref{eq:Qbeta}.
Similar to \eqref{eq:ols_est}, we can compute that $\bbeta^{\oracle}_{\mS} = \bSigma_{W,\mS}^{-1} \bSigma_{WY, \mS}$,
provided $\bSigma_{W,\mS}$ is invertible. Here, $ \bSigma_{W, \mS} = \{\tr(\bW_k\bW_l):k,l\in \mS \} \in \mR^{(s+1)\times (s+1)}$ and $\bSigma_{WY,\mS} = (\by^\top\bW_k\by: k\in \mS )^\top \in \mR^{s+1}$.

To facilitate the theoretical investigation, we specify some technical conditions as follows.
\begin{enumerate} [(C1)]
    \item ({\sc Minimal Signal Strength}) Assume $\| \bbeta_{\mS }^{(0)}\|_{\min} > (\gamma+1)\lambda $. \label{cond:minimal_signal}

    \item({\sc Minimal Eigenvalue}) Assume that $\inf_p \lambda_{\min}(p^{-1} \bSigma_{W, \mS}) \ge \tau_{\min}$ holds for some positive constant $\tau_{\min}$, where $\bSigma_{W, \mS} = \{\tr(\bW_k\bW_l):k,l\in \mS \} \in \mR^{(s+1)\times (s+1)}$.    \label{cond:minimal_eigen}

    \item ({\sc Sub-Gaussian Distribution}) Assume $\by = \bSigma_0^{1/2}\bZ$ with $\bZ=(Z_1,\dots, Z_p)^\top \in \mR^{p}$, where $Z_j$'s are independent and identically distributed mean zero sub-Gaussian random variables, that is, $E(e^{t Z_j}) \le e^{c^2 t^2 / 2},\ \forall t$ for some constant $c>0$. Further assume that, for each $1\le j\le p$, $\var(Z_j)=1$ and $E(Z_j^4) = \mu_4$.
    In addition, we assume that there exists a positive constant $\sigma_{\min}$ such that $\inf_p \lambda_{\min}(\bSigma_0)>\sigma_{\min}$.\label{cond:distribution}

    \item ({\sc Bounded $\ell_1$-Norm}) For all symmetric matrices in $\{\bW_k\in\mR^{p\times p}: 0\le k \le K \}$, there exists $w>0$ such that $\sup_{p,k} \|\bW_k\|_1 \le w < \infty $.
    Further assume that $\sup_{p} \|\bSigma_0^{1/2}\|_1 \le \sigma_{\max}^{1/2}$ for some finite positive constant $\sigma_{\max}$.
    \label{cond:bounded_norm}

    \item ({\sc Restricted Eigenvalue}) Define the set $\C_3(\mS) \defeq \{\bdelta \in \mR^{K+1}: \|\bdelta_{\mS^c}\|_1 \le 3 \|\bdelta_{\mS}\|_1\}$.
    Assume $\{\bW_k\}_{0\le k \le K}$ satisfies the restricted eigenvalue (RE) condition, that is,
    \begin{equation*}
        \frac{1}{p} \left\|\sum_{k=0}^K \delta_k \bW_k \right\|_F^2 \ge  \kappa \| \bdelta\|^2, \quad \text{for all $\bdelta \in \C_3(\mS)$}
    \end{equation*}
    for some constant $\kappa>0$.\label{cond:RE}

    \item ({\sc Convergence}) Assume that (i) $\bG_{d,p}\defeq
    p^{-1} \big\{ \tr(\bSigma_0^d \bW_k\bSigma_0^d \bW_l): k,l\in \mS \big\}$ converges to a positive definite matrix $\bG_d \in \mR^{(s+1)\times (s+1)}$ for $d=0,1$ in the Frobenius norm, that is,
    $
        \| \bG_{d,p} - \bG_d\|_F \to 0 \text{ as $p \to \infty$},
    $
    where $\bSigma_0^0\defeq\bI_p$.
    Furthermore, assume $\lambda_{\min}(\bG_d)\ge \tau_0$ for some finite positive constant $\tau_0$; (ii) $\bH_p \defeq p^{-1} \big\{\tr[(\bSigma_0^{1/2}\bW_k\bSigma_0^{1/2})\circ (\bSigma_0^{1/2}\bW_l \bSigma_0^{1/2})]:k,l\in \mS \big\}$ converges to a matrix $\bH \in\mR^{(s+1)\times (s+1)}$ in Frobenius norm, where $\circ$ denotes the Hadamard product. \label{cond:convergence}

\end{enumerate}
We comment on these conditions in the following.
Condition (C\ref{cond:minimal_signal}) imposes a constraint on the minimum signal strength of the nonzero coefficients, which is necessary for establishing the oracle property. Similar conditions have been commonly used in previous literature on sparse regression; see for example \cite{fan2004nonconcave}, \cite{wang2013calibrating}, and \cite{fan2014strong}.
Condition (C\ref{cond:minimal_eigen}) ensures that the oracle estimator in \eqref{eq:oracle} is uniquely defined.
By this condition, the informative similarity matrices $\bW_k$s $(0\le k\le s)$ should not be severely correlated with each other.
Condition (C\ref{cond:minimal_eigen}) has been rigorously and theoretically verified by an important example in Appendix A.6. 
Condition (C\ref{cond:distribution}) assumes a sub-Gaussian distribution condition on the response variable.
This condition is necessarily needed for deriving some non-asymptotic probability bounds by the Hanson-Wright type inequality. 
The additional minimal eigenvalue condition in Condition (C\ref{cond:distribution}) ensures the positive definiteness of $\bSigma_0$.
Condition (C\ref{cond:bounded_norm}) imposes bounded $\ell_1$-norm condition on matrices $\bW_k$s and $\bSigma_0$, which implies the bounded operator norm conditions as assumed in \cite{zou2017covariance}.
This condition is helpful for deriving the non-asymptotic probability bounds and establishing the asymptotic normality.
We can also allow the upper bound $w$ to slowly diverge to infinity as $p\to\infty$ at an appropriate rate. Then more sophisticated theoretical treatments are needed.
Condition (C\ref{cond:RE}) is a restricted eigenvalue (RE) type condition, which is used to derive the $\ell_2$-error bound for the Lasso estimator.
Condition (C\ref{cond:RE}) has been theoretically verified by an important example in Appendix A.6. 
Lastly, Condition (C\ref{cond:convergence}) is a law of large numbers type assumption, which is used to form the asymptotic covariance matrix of the oracle estimator.
Similar conditions are imposed in \cite{zou2017covariance} and \cite{zou2022inference}.
Condition (C\ref{cond:convergence}) has also been theoretically verified for a special case in Appendix A.6. 

We first give the error bound for the Lasso estimator in the following theorem.

\begin{theorem} 
\label{thm:lasso_est}
Assume Conditions (C\ref{cond:distribution})--(C\ref{cond:RE}). Then $ \| \wh\bbeta^{\lasso} - \bbeta^{(0)}\| \le (3/\kappa)\sqrt{s+1} \lambda_0$ holds with probability at least $1-\delta_0'$, where
\begin{align*}
    \delta_0' = 2(K+1)\exp\left\{ -\min\left( \frac{C_1 p \lambda_0^2}{w^2\sigma_{\max}^2},\frac{C_2 p \lambda_0}{w\sigma_{\max}} \right)\right\},
\end{align*}
and $C_1, C_2$ are two positive constants.
\end{theorem}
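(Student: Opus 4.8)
The plan is to follow the classical primal Lasso analysis, the one genuinely new ingredient being the probabilistic control of the score at the truth, which must be handled through a Hanson--Wright argument rather than the usual \emph{i.i.d.} empirical-process bound. The proof therefore splits cleanly into a deterministic part (an optimality/cone argument) and a probabilistic part (bounding a centered quadratic form), and the whole difficulty concentrates in the latter.

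First I would exploit the optimality of $\wh\bbeta^{\lasso}$ to write the basic inequality $Q(\wh\bbeta^{\lasso}) + \lambda_0\|\wh\bbeta^{\lasso}\|_1 \le Q(\bbeta^{(0)}) + \lambda_0\|\bbeta^{(0)}\|_1$. Since $Q$ is an exact quadratic with constant Hessian $p^{-1}\bSigma_W$, its second-order expansion is exact, giving $Q(\bbeta)-Q(\bbeta^{(0)}) = \langle \nabla Q(\bbeta^{(0)}), \bdelta\rangle + \frac{1}{2p}\|\sum_{k} \delta_k\bW_k\|_F^2$, where $\bdelta = \wh\bbeta^{\lasso}-\bbeta^{(0)}$ and the $k$th coordinate of $\nabla Q(\bbeta^{(0)})$ is $-p^{-1}\tr\{(\by\by^\top-\bSigma_0)\bW_k\}$, which has mean zero because $E(\by\by^\top)=\bSigma_0$. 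Combining this expansion with the basic inequality and bounding $|\langle \nabla Q(\bbeta^{(0)}), \bdelta\rangle|\le \|\nabla Q(\bbeta^{(0)})\|_\infty\,\|\bdelta\|_1$ reduces everything to controlling $\|\nabla Q(\bbeta^{(0)})\|_\infty$.

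The main obstacle is precisely this score bound, and it is where the non-\emph{i.i.d.} structure bites. Using the representation $\by=\bSigma_0^{1/2}\bZ$ from Condition (C3), the $k$th coordinate equals $-p^{-1}\{\bZ^\top\bA_k\bZ - E(\bZ^\top\bA_k\bZ)\}$ with $\bA_k=\bSigma_0^{1/2}\bW_k\bSigma_0^{1/2}$, a centered quadratic form in the \emph{i.i.d.} sub-Gaussian vector $\bZ$. The Hanson--Wright inequality then supplies a two-regime (sub-Gaussian/sub-exponential) tail governed by $\|\bA_k\|_F$ and $\|\bA_k\|_2$. I would bound both via the $\ell_1$-norm control of Condition (C4): since the operator norm of a symmetric matrix is dominated by its $\ell_1$ norm, $\|\bA_k\|_2\le\|\bSigma_0^{1/2}\|_2^2\|\bW_k\|_2\le\sigma_{\max}w$ and $\|\bA_k\|_F\le\sqrt{p}\,\|\bA_k\|_2\le\sqrt{p}\,\sigma_{\max}w$. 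Plugging the deviation level $t=p\lambda_0/2$ into Hanson--Wright yields $P(|\nabla_k Q(\bbeta^{(0)})|>\lambda_0/2)\le 2\exp\{-\min(C_1 p\lambda_0^2/(w^2\sigma_{\max}^2),\, C_2 p\lambda_0/(w\sigma_{\max}))\}$; a union bound over the $K+1$ coordinates produces exactly $\delta_0'$, and on the complementary event $\|\nabla Q(\bbeta^{(0)})\|_\infty \le \lambda_0/2$.

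Finally, on that event the standard cone argument closes the proof. Splitting the $\ell_1$ norms over $\mS$ and $\mS^c$ (using $\bbeta^{(0)}_{\mS^c}=\zero$ and the triangle inequality) and discarding the nonnegative quadratic term gives $\|\bdelta_{\mS^c}\|_1 \le 3\|\bdelta_{\mS}\|_1$, i.e.\ $\bdelta\in\C_3(\mS)$. I would then go back and \emph{retain} the quadratic term in the basic inequality, apply the restricted eigenvalue Condition (C5) on the cone to get $\frac{\kappa}{2}\|\bdelta\|^2 \le \frac{3\lambda_0}{2}\|\bdelta_{\mS}\|_1$, and finish with Cauchy--Schwarz $\|\bdelta_{\mS}\|_1\le\sqrt{s+1}\,\|\bdelta\|$, which rearranges to $\|\bdelta\|\le(3/\kappa)\sqrt{s+1}\,\lambda_0$. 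The only subtlety in this last stage is bookkeeping: one must keep the quadratic term rather than dropping it at the first step, so that the RE lower bound can be matched against it to cancel one power of $\|\bdelta\|$.
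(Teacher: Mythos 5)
Your proposal is correct and follows essentially the same route as the paper's proof: the same basic-inequality/cone decomposition with the quadratic term retained for the RE step, the same Hanson--Wright control of the score $p^{-1}\{\by^\top\bW_k\by-\tr(\bW_k\bSigma_0)\}$ with the deviation level $p\lambda_0/2$, and the same operator-norm bounds $\|\bW_k\|\le\|\bW_k\|_1\le w$ and $\|\bSigma_0\|\le\sigma_{\max}$ followed by a union bound over the $K+1$ coordinates. No gaps.
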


The proof of Theorem \ref{thm:lasso_est} is given in the Appendix. 
From Theorem \ref{thm:lasso_est} we can see that, if $K$ is fixed and we take $\lambda_0 = C_0 p^{-1/2} $ for some positive constant $C_0$, then we should have $\| \wh\bbeta^{\lasso} - \bbeta^{(0)}\| = O_p(p^{-1/2})$. In other words, the Lasso estimator is $\sqrt{p}$-consistent in the finite parameter setting, which aligns with the results in \cite{zou2017covariance}.
By this result, we can find that the dimension $p$ here plays a role like ``sample size'' as in the conventional regression models.
The larger $p$ we have, the more information we collect, and then the more accurate estimator can be obtained.
We then use the Lasso estimator as the initial estimator for the LLA algorithm to compute the folded concave penalized estimator.
The properties of the LLA algorithm and the resulting estimator are given in the following theorem.

\begin{theorem}
\label{thm:convergence}
Assume Conditions (C\ref{cond:minimal_signal}) and (C\ref{cond:minimal_eigen}). Then the LLA algorithm initialized by $\wh \bbeta^{\initial} $ converges to $\wh \bbeta^{\oracle}$ after two iterations with probability at least $1 - \delta_0 - \delta_1 - \delta_2$, where $\delta_0 = P\Big(\| \wh \bbeta^{\initial} - \bbeta^{(0)} \|_{\infty} > a_0 \lambda\Big)$, $\delta_1 = P\Big(\|\nabla_{\mS^c} Q(\wh \bbeta^{\oracle}_{\mS})\|_{\infty}\ge a_1 \lambda\Big)$, $\delta_2 = P\Big(\|\wh \bbeta^{\oracle}_{\mS} \|_{\min} \ge \gamma\lambda\Big)$, and $a_0 = \min\{1, a_2\}$. Moreover, $a_1, a_2, \gamma$ are constants specified in (i)--(iv).
Suppose we use Lasso estimator $\wh\bbeta^{\lasso}$ as the initial estimator and pick $\lambda\ge (3\sqrt{s+1}\lambda_0 ) / (a_0 \kappa)$. Further assume Conditions (C\ref{cond:distribution})--(C\ref{cond:RE}).
Then, it holds that
\begin{align*}
    & \delta_0 \le 2(K+1)\exp\left\{ -\min\left( \frac{C_1 p \lambda_0^2}{w^2\sigma_{\max}^2},\frac{C_2 p \lambda_0}{w\sigma_{\max}} \right)\right\}, \\
    & \delta_1 \le 2 (K-s) \exp\left\{-\min\left(\frac{C_3 a_1^2 p \lambda^2 }{w^2\sigma_{\max}^2 }, \frac{C_4 a_1 p \lambda}{ w\sigma_{\max}}\right) \right\} \\ &~~~~~+2(K-s)(s+1)\exp\left[-\min\left\{\frac{C_5 a_1^2 \tau_{\min}^2 p \lambda^2}{w^6\sigma_{\max}^2(s+1)^2}, \frac{C_6 a_1 \tau_{\min}p \lambda}{w^3 \sigma_{\max}(s+1)} \right\} \right],\\
     &  \delta_2 \le 2 (s+1)\exp \left[-\min\left\{\frac{C_7 \tau_{\min}^2 p (\| \bbeta_{\mS}^{(0)}\|_{\min} - \gamma \lambda)^2 }{w^2\sigma_{\max}^2 (s+1)}, \frac{C_8 \tau_{\min}  p(\| \bbeta_{\mS}^{(0)}\|_{\min} - \gamma \lambda) }{w\sigma_{\max}(s+1)^{1/2}} \right\} \right],
\end{align*}
where $C_1,\dots,C_8$ are some positive constants.
In particular, if $p\lambda_0^2 / \{s\log(K)\} \to \infty$, then we have $\delta_0 + \delta_1 + \delta_2 \to 0$ as $p \to \infty$.
\end{theorem}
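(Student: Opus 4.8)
The plan is to split the argument into a deterministic analysis of the LLA iteration and a probabilistic control of the three failure events, adapting the strong-oracle framework of \cite{fan2014strong} to the quadratic loss $Q$ and the single-observation design. First I would fix the favorable event $\mathcal{A}=\mathcal{A}_0\cap\mathcal{A}_1\cap\mathcal{A}_2$, where $\mathcal{A}_0=\{\|\wh\bbeta^{\initial}-\bbeta^{(0)}\|_\infty\le a_0\lambda\}$, $\mathcal{A}_1=\{\|\nabla_{\mS^c}Q(\wh\bbeta^{\oracle})\|_\infty<a_1\lambda\}$, and $\mathcal{A}_2=\{\|\wh\bbeta^{\oracle}_{\mS}\|_{\min}\ge\gamma\lambda\}$, whose complements carry probabilities $\delta_0,\delta_1,\delta_2$. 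On $\mathcal{A}$ I will show, purely deterministically, that the LLA reaches $\wh\bbeta^{\oracle}$ in two iterations, so that the success probability is at least $1-\delta_0-\delta_1-\delta_2$ by the union bound.

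For the deterministic step I trace the adaptive weights. On $\mathcal{A}_0$, for $k\in\mS$ Condition (C\ref{cond:minimal_signal}) gives $|\wh\beta_k^{(0)}|\ge|\beta_k^{(0)}|-a_0\lambda\ge(\gamma+1)\lambda-\lambda=\gamma\lambda$ (using $a_0\le1$), so the penalty derivative vanishes there; for $k\in\mS^c$ we have $|\wh\beta_k^{(0)}|\le a_0\lambda\le a_2\lambda$, so the weight is at least $a_1\lambda$ by properties (ii)--(iii). The first weighted-Lasso subproblem therefore leaves the $\mS$-block unpenalized and penalizes the $\mS^c$-block by at least $a_1\lambda$; its KKT system is solved exactly by $\wh\bbeta^{\oracle}$ provided the off-support gradient check $\|\nabla_{\mS^c}Q(\wh\bbeta^{\oracle})\|_\infty<a_1\lambda$ holds, i.e.\ on $\mathcal{A}_1$, while Condition (C\ref{cond:minimal_eigen}) secures uniqueness of the $\mS$-block through $\lambda_{\min}(\bSigma_{W,\mS})>0$. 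Hence $\wh\bbeta^{(1)}=\wh\bbeta^{\oracle}$. Recomputing the weights at $\wh\bbeta^{(1)}=\wh\bbeta^{\oracle}$ using $\mathcal{A}_2$ reproduces the identical zero-on-$\mS$, $\ge a_1\lambda$-on-$\mS^c$ weight pattern, so the same KKT check yields $\wh\bbeta^{(2)}=\wh\bbeta^{\oracle}$, confirming convergence to a fixed point after two iterations.

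The probabilistic stage bounds the three complements. For $\delta_0$, the choice $\lambda\ge3\sqrt{s+1}\lambda_0/(a_0\kappa)$ makes $a_0\lambda\ge(3/\kappa)\sqrt{s+1}\lambda_0$, so $\mathcal{A}_0^c\subseteq\{\|\wh\bbeta^{\lasso}-\bbeta^{(0)}\|>(3/\kappa)\sqrt{s+1}\lambda_0\}$ because $\|\cdot\|_\infty\le\|\cdot\|$; Theorem \ref{thm:lasso_est} then delivers $\delta_0\le\delta_0'$ verbatim. For $\delta_1$ I would differentiate $Q$ to obtain, for $k\in\mS^c$, $[\nabla Q(\wh\bbeta^{\oracle})]_k=-p^{-1}\{\by^\top\bW_k\by-\tr(\bSigma_0\bW_k)\}+p^{-1}\sum_{l\in\mS}(\wh\beta_l^{\oracle}-\beta_l^{(0)})\tr(\bW_l\bW_k)$, splitting it into a centered-quadratic-form ``noise'' term and an ``oracle-error'' term. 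The first term is controlled directly; for the second I use the identity $\wh\bbeta_{\mS}^{\oracle}-\bbeta_{\mS}^{(0)}=\bSigma_{W,\mS}^{-1}\{\by^\top\bW_k\by-\tr(\bSigma_0\bW_k)\}_{k\in\mS}$ with $\|\bSigma_{W,\mS}^{-1}\|\le(p\tau_{\min})^{-1}$ from (C\ref{cond:minimal_eigen}) and $|\tr(\bW_l\bW_k)|\le pw^2$ from (C\ref{cond:bounded_norm}), which after a union bound over the $(K-s)(s+1)$ index pairs produces the second summand with its $\tau_{\min}$, $w^6$, and $(s+1)$ factors. The event $\mathcal{A}_2^c$ reduces to an oracle-error event via $\{\|\wh\bbeta^{\oracle}_{\mS}\|_{\min}<\gamma\lambda\}\subseteq\{\|\wh\bbeta^{\oracle}_{\mS}-\bbeta^{(0)}_{\mS}\|_\infty>\|\bbeta^{(0)}_{\mS}\|_{\min}-\gamma\lambda\}$, the threshold being positive by (C\ref{cond:minimal_signal}), and is then bounded by the same mechanism to give $\delta_2$.

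The crux is controlling the centered quadratic forms $\by^\top\bW_k\by-\tr(\bSigma_0\bW_k)=\bZ^\top(\bSigma_0^{1/2}\bW_k\bSigma_0^{1/2})\bZ-\tr(\bSigma_0^{1/2}\bW_k\bSigma_0^{1/2})$: these are \emph{not} sums of independent terms, so I would invoke a Hanson--Wright-type inequality for the sub-Gaussian vector $\bZ$ of Condition (C\ref{cond:distribution}), which yields the two-regime bound $2\exp\{-c\min(t^2/\|\bSigma_0^{1/2}\bW_k\bSigma_0^{1/2}\|_F^2,\,t/\|\bSigma_0^{1/2}\bW_k\bSigma_0^{1/2}\|_2)\}$ and explains the $\min\{\cdot,\cdot\}$ structure of every displayed bound. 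The remaining work is to replace these norms by uniform constants: using $\|\bSigma_0^{1/2}\|_1\le\sigma_{\max}^{1/2}$ and $\sup_{p,k}\|\bW_k\|_1\le w$ from (C\ref{cond:bounded_norm}), together with $\|\bM\|_2\le\|\bM\|_1$ and $\|\bM\|_F\le\sqrt{p}\,\|\bM\|_2$ for symmetric $\bM$, bounds the two norms by $w\sigma_{\max}$ and $\sqrt{p}\,w\sigma_{\max}$, converting the threshold $t\asymp pa_1\lambda$ (or its analogue) into the stated exponents. This norm conversion under dependence is the main obstacle; once it is in place the claimed $\delta_1,\delta_2$ follow by union bounds, and the final assertion $\delta_0+\delta_1+\delta_2\to0$ is a matter of checking that, under $\lambda\asymp\sqrt{s}\,\lambda_0$, every exponent grows like $p\lambda_0^2$ up to the fixed constants $\tau_{\min},w,\sigma_{\max},\kappa$, which dominates the logarithms of the polynomial prefactors exactly when $p\lambda_0^2/\{s\log K\}\to\infty$.
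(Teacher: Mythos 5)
Your proposal is correct and follows essentially the same route as the paper: the same three-event decomposition with a deterministic two-iteration LLA argument (weights vanish on $\mS$, stay above $a_1\lambda$ on $\mS^c$, oracle verified as the unique solution of each weighted subproblem), the same splitting of $\nabla_{\mS^c}Q(\wh\bbeta^{\oracle})$ into a centered quadratic form plus an oracle-error term controlled via $\|\bSigma_{W,\mS}^{-1}\|\le(p\tau_{\min})^{-1}$ and $|\tr(\bW_l\bW_k)|\le pw^2$, and the same Hanson--Wright bound with the $\ell_1$-norm conversions from (C\ref{cond:bounded_norm}). The only cosmetic difference is that you phrase the optimality of the oracle in each subproblem via KKT conditions while the paper writes out the global convexity inequality explicitly; these are equivalent here since $Q$ is convex.
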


The proof of Theorem \ref{thm:convergence} is given in the Appendix. 
From Theorem \ref{thm:convergence}, we can see that, if we use Lasso estimator as the initial estimator, then the LLA algorithm can converge exactly to the oracle estimator with overwhelming probability under appropriate conditions. This property is referred to as the strong oracle property in \cite{fan2014strong}.
In addition, if we take $\lambda = (3\sqrt{s+1}\lambda_0 ) / (a_0 \kappa)$, then $p\lambda_0^2 / \{s\log(K)\} \to \infty$ is equivalent to $\lambda \gg s \sqrt{\log(K) / p}$.
Consequently, to fulfill $\| \bbeta_{\mS }^{(0)}\|_{\min} > (\gamma+1)\lambda $ in Condition (C\ref{cond:minimal_signal}), we require that $K = o\big(\exp(p \|\bbeta^{(0)}\|_{\min}^2 / s^2) \big)$.
We remark that this is not a very stringent requirement.
For example, if $s$ is fixed and the minimal signal $\| \bbeta_{\mS }^{(0)}\|_{\min} > c$ for some constant $c>0$, then the number of similarity matrices (i.e., $K$) is allowed to diverge in a rate extremely close to $O(\exp(p))$.
Further note that the strong oracle property implies the resulting estimator of the LLA algorithm should have the same asymptotic distribution as the oracle estimator \citep{Fan:Li:2001}.
In this regard, we establish the asymptotic normality of the oracle estimator in the following theorem.

\bet 
\label{thm:asymptotic_normality}
Assume Conditions (C\ref{cond:minimal_eigen})--(C\ref{cond:bounded_norm}) and (C\ref{cond:convergence}).
Let $\bA \in \mR^{L \times (s+1)}$ be an arbitrary matrix with $\sup_{s} \|\bA\|<\infty$,  where $L>0$ is a fixed integer.
Suppose (i) $(s+1)^{-1} \bA \{2\bG_1 + (\mu_4 -3) \bH\} \bA^\top \to \bC$ if $s\to \infty$ or (ii) $\bC \defeq (s+1)^{-1} \bA \{2\bG_1 + (\mu_4 -3) \bH\} \bA^\top $ if $s$ is fixed, where $\bC \in\mR^{L\times L}$ is a positive definite matrix.
Then we have,
\begin{align*}
    \sqrt{p/(s+1)}\bA\bG_0 \Big(\wh\bbeta^{\oracle}_{\mS} - \bbeta_{\mS}^{(0)}\Big) \to_d \mN\big(\zero,\bC\big),\text {as $p\to\infty$.}
\end{align*}
\eet

The proof of Theorem \ref{thm:asymptotic_normality} is given in the Appendix. 
This theorem generalizes the result in \cite{zou2017covariance} by allowing diverging feature dimension $s$ and relaxing the normal distribution assumption.
In fact, if $s$ is fixed and $\by$ follows $\mN(\zero, \bSigma_0)$, we can take $\bA = \bI_{s+1}$. Then we should have $\sqrt{p} \big(\wh\bbeta^{\oracle}_{\mS} - \bbeta_{\mS}^{(0)}\big)  \to_d \mN\big(\zero, 2\bG_0^{-1} \bG_1 \bG_0^{-1}\big) $. This result echoes Theorem 2 in \cite{zou2017covariance}.
On the other hand, if $s$ is diverging as $p\to \infty$, one can take $\bA$ to be any appropriate matrix for finite dimension projection. Then we should have $\sqrt{p/(s+1)}\bA\bG_0 \big(\wh\bbeta^{\oracle}_{\mS} - \bbeta_{\mS}^{(0)}\big)$ is asymptotically normal. By Theorem \ref{thm:convergence}, we know that the resulting estimator of the LLA algorithm should enjoy the same asymptotic properties as the oracle estimator under the regularity conditions.

\section{Some Extensions for Repeated Observations}

\subsection{SCR Model for Repeated Observations}
\label{subsec:scr_n}
In the previous sections, we focus on the case where $n=1$ and $p$ tends to infinity.
In practice, we often encounter the situations, where repeated observations of the response vector can be obtained.
Then, how to use all these observations to improve the estimation accuracy of the SCR model becomes an important problem.
We first remark that model \eqref{eq:Sigma_0} implies a homogeneous variance structure of $\bSigma$, since the similarity matrices $\bW_k\ (0\le m\le K)$ typically have the same diagonal elements. In fact, we can allow for a heterogeneous variance structure by replacing the identity matrix $\bI_p$ with a general diagonal matrix $\bD = \diag \{\sigma_1^2, \dots, \sigma_p^2\} $, if the diagonal matrix $\bD$ is known as a prior knowledge.
However, when $\bD$ is unknown, repeated observations are inevitably needed for consistently estimating the heterogeneous variance structure.
Specifically, with repeated observations $\{Y_{ji}: 1\le i \le n \}$ for each $1\le j\le p$, we are able to estimate $\var(Y_{ji}) = \sigma_j^2$ by $\wh \sigma_j^2 = n^{-1}\sum_{i=1}^n (Y_{ji} - \wb Y_j)^2 $, where $\wb Y_j = n^{-1}\sum_{i=1}^n Y_{ji}$. Next, we can standardize $Y_{ji}$ as $\wt Y_{ji} = (Y_{ji}-\wb Y_j) / \wh \sigma_j$ so that the equal variance assumption implied by \eqref{eq:Sigma_0} holds approximately.
Subsequently, we should always assume that $Y_{ji}$s have been standardized appropriately so that model \eqref{eq:Sigma_0} holds.
We need to remark that the homogeneous variance structure of $\bSigma$ is an assumption for technical convenience. With the help of this assumption, we might show that the $\wh\bbeta_n^{\lasso} $ is $\sqrt{np}$-consistent with a fixed $K$ as in the following Theorem \ref{thm:lasso_est_n}.  However, if the estimation errors of those variances estimator $\wh \sigma_j^2 $ are taken into consideration, the conclusions become questionable and need to be further investigated.

We next consider how to extend our results to $n\to \infty$.
Specifically, let $\by_i\ (1\le i \le n)$ be the $n$ independent and identically distributed response vectors.
Then we can modify the original least squares objective function in \eqref{eq:Qbeta} to be $Q_n(\bbeta) = (2np)^{-1} \sum_{i=1}^n \big\|\by_i\by_i^\top - \bSigma(\bbeta) \big\|_F^2$.
Similarly, we use the LLA algorithm to find the solution to the following folded concave penalized loss function $Q_{n,\lambda}(\bbeta) = Q_n(\bbeta)+ \sum_{k = 0}^K p_\lambda(|\beta_k|)$.
Note that the only modification needed for Algorithm \ref{alg:LLA} is to replace $Q(\bbeta)$ with $Q_n(\bbeta)$.
We still use the Lasso penalized estimator $\wh \bbeta_n^{\lasso} = \argmin_{\bbeta} Q_n(\bbeta) + \lambda_0  \|\bbeta\|_1$ as the initial estimator for the LLA algorithm. The error bound for the Lasso estimator is given in the following theorem.

\begin{theorem} 
\label{thm:lasso_est_n}
Assume Conditions (C\ref{cond:distribution})--(C\ref{cond:RE}). Then $\| \wh\bbeta_n^{\lasso} - \bbeta^{(0)}\| \le (3/\kappa)\sqrt{s+1} \lambda_0$
holds with probability at least $1-\delta_0'$, where
\begin{align*}
    \delta_0' = 2(K+1)\exp\left\{ -\min\left( \frac{C_1 n p \lambda_0^2}{w^2\sigma_{\max}^2},\frac{C_2 n p \lambda_0}{w\sigma_{\max}} \right)\right\},
\end{align*}
and $C_1, C_2$ are two positive constants.
\end{theorem}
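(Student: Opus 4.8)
The plan is to follow the proof of Theorem~\ref{thm:lasso_est} almost verbatim, exploiting the fact that the only structural change from $n=1$ to general $n$ is that the effective sample size becomes $np$. Write $\bdelta = \wh\bbeta_n^{\lasso} - \bbeta^{(0)}$. The first observation is that $Q_n$ is quadratic in $\bbeta$ with Hessian $\nabla^2 Q_n = p^{-1}\bSigma_W$, which is \emph{identical} to the Hessian of $Q$ in the single-observation case (the average over $i$ of the constant second derivative collapses the factor $n$). Consequently, the entire deterministic part of the argument carries over unchanged, and the whole probabilistic content is concentrated in a single event
$$\mathcal{A} \defeq \Big\{\, 2\,\|\nabla Q_n(\bbeta^{(0)})\|_\infty \le \lambda_0 \,\Big\}.$$

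On $\mathcal{A}$, I would run the standard basic-inequality argument. From the optimality of $\wh\bbeta_n^{\lasso}$ one gets $Q_n(\wh\bbeta_n^{\lasso}) + \lambda_0\|\wh\bbeta_n^{\lasso}\|_1 \le Q_n(\bbeta^{(0)}) + \lambda_0\|\bbeta^{(0)}\|_1$; expanding the quadratic, bounding the linear term by $\|\nabla Q_n(\bbeta^{(0)})\|_\infty\|\bdelta\|_1 \le (\lambda_0/2)\|\bdelta\|_1$, and using $\bbeta^{(0)}_{\mS^c} = \zero$ yields first the cone membership $\bdelta \in \C_3(\mS)$ and then, through the restricted eigenvalue condition (C\ref{cond:RE}) together with $\|\bdelta_{\mS}\|_1 \le \sqrt{s+1}\,\|\bdelta\|$, the desired bound $\|\bdelta\| \le (3/\kappa)\sqrt{s+1}\,\lambda_0$. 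This step is word-for-word the $n=1$ proof, since it depends on $Q_n$ only through its (unchanged) Hessian and through the event $\mathcal{A}$.

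The heart of the matter is therefore to show $P(\mathcal{A}^c) \le \delta_0'$ with the advertised $np$ in the exponent. Here the $k$th component of the gradient at the truth is $\nabla_k Q_n(\bbeta^{(0)}) = -(np)^{-1}\sum_{i=1}^n \tr[\bW_k(\by_i\by_i^\top - \bSigma_0)]$. Using (C\ref{cond:distribution}) to write $\by_i = \bSigma_0^{1/2}\bZ_i$ and setting $\bA_k \defeq \bSigma_0^{1/2}\bW_k\bSigma_0^{1/2}$, this equals $-(np)^{-1}\sum_{i=1}^n [\bZ_i^\top\bA_k\bZ_i - \tr(\bA_k)]$. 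Stacking $\bZ \defeq (\bZ_1^\top,\dots,\bZ_n^\top)^\top \in \mR^{np}$, I would rewrite it as the single centered quadratic form $-(np)^{-1}[\bZ^\top(\bI_n\otimes\bA_k)\bZ - \tr(\bI_n\otimes\bA_k)]$ in $np$ i.i.d. mean-zero sub-Gaussian coordinates, and apply the Hanson--Wright inequality. The two governing quantities are the operator and Frobenius norms of the block-diagonal matrix $\bI_n\otimes\bA_k$. Using (C\ref{cond:bounded_norm}) and $\|\bM\|_2 \le \|\bM\|_1$ for symmetric $\bM$, one bounds $\|\bA_k\|_2 \le \|\bSigma_0^{1/2}\|_1\|\bW_k\|_1\|\bSigma_0^{1/2}\|_1 \le w\sigma_{\max}$ and hence $\|\bA_k\|_F^2 \le p\|\bA_k\|_2^2 \le pw^2\sigma_{\max}^2$. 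The crucial point is that stacking leaves the operator norm unchanged, $\|\bI_n\otimes\bA_k\|_2 = \|\bA_k\|_2 \le w\sigma_{\max}$, while it scales the squared Frobenius norm linearly, $\|\bI_n\otimes\bA_k\|_F^2 = n\|\bA_k\|_F^2 \le np\,w^2\sigma_{\max}^2$. Choosing $t = np\lambda_0/2$ in Hanson--Wright then produces precisely $\min\{C_1 np\lambda_0^2/(w^2\sigma_{\max}^2),\, C_2 np\lambda_0/(w\sigma_{\max})\}$, and a union bound over $k=0,\dots,K$ supplies the prefactor $2(K+1)$, giving $P(\mathcal{A}^c)\le\delta_0'$.

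I expect the only genuinely delicate step to be this concentration bound — specifically the bookkeeping that isolates the effective dimension $np$. The key structural insight, which is what upgrades the $p$-rate of Theorem~\ref{thm:lasso_est} to the $np$-rate here, is that under the independence of the $\by_i$ across $i$ in (C\ref{cond:distribution}) the gradient is a quadratic form in $np$ i.i.d. sub-Gaussian variables whose governing matrix has operator norm that does \emph{not} grow with $n$ but squared Frobenius norm that grows exactly like $n$. Everything else — the optimization-theoretic derivation of the cone condition and the invocation of the restricted eigenvalue bound — is inherited unchanged from the single-observation proof, because $Q_n$ shares its Hessian with $Q$.
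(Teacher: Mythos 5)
Your proposal is correct and follows essentially the same route as the paper: the deterministic cone/basic-inequality argument is inherited verbatim from the $n=1$ case, and the concentration step—stacking the $\bZ_i$ into an $np$-vector, applying Hanson--Wright to the block-diagonal matrix $\bI_n\otimes\bA_k$, and noting that the operator norm is unchanged while the squared Frobenius norm scales by $n$—is exactly the computation the paper packages into the second statement of its Hanson--Wright lemma. The only difference is that you carry out that Kronecker bookkeeping inline rather than citing it as a separate lemma.
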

The proof of Theorem \ref{thm:lasso_est_n} is given in the Appendix. 
Compared with Theorem \ref{thm:lasso_est}, we find that $\wh\bbeta_n^{\lasso} $ is $\sqrt{np}$-consistent for $\bbeta^{(0)}$, if $K$ is fixed and $\lambda_0 = C_0 (np)^{-1/2}$ for some positive constant $C_0$. This indicates that a faster convergence rate can be achieved with repeated observations.
Note that the oracle estimator is defined as $\wh \bbeta_n^{\oracle} = (\wh \bbeta_{n,\mS}^{\oracle\top}, \zero^\top)^\top = \argmin_{\bbeta: \bbeta_{\mS^c=\zero}} Q_n(\bbeta)$.
We next summarize the properties of the LLA algorithm in the following theorem, whose proof is given in the Appendix. 
Compared with Theorem \ref{thm:convergence}, we can find that the main difference is the factor $p$ in the probability upper bounds is replaced by $np$.
This indicates that the LLA algorithm can still converge to the oracle estimator with high probability. Then we can expect that the resulting estimator should be $\sqrt{np}$-consistent when $K$ is fixed.

\bet
\label{thm:convergence_n}
Assume Conditions (C\ref{cond:minimal_signal})--(C\ref{cond:RE}).
Suppose we use Lasso estimator $\wh\bbeta_n^{\lasso}$ as the initial estimator and pick $\lambda\ge (3\sqrt{s+1}\lambda_0 ) / (a_0 \kappa)$.
Then the LLA algorithm converges to $\wh \bbeta_n^{\oracle}$ after two iterations with probability at least $1 - \delta_0 - \delta_1 - \delta_2$ with
\begin{align*}
    & \delta_0 \le 2(K+1)\exp\left\{ -\min\left( \frac{C_1 n p \lambda_0^2}{w^2\sigma_{\max}^2},\frac{C_2 n p \lambda_0}{w\sigma_{\max}} \right)\right\}, \\
    & \delta_1 \le 2 (K-s) \exp\left\{-\min\left(\frac{C_3 a_1^2 n p \lambda^2 }{w^2\sigma_{\max}^2 }, \frac{C_4 a_1 n p \lambda}{ w\sigma_{\max}}\right) \right\} \\ &~~~~~+2(K-s)(s+1)\exp\left[-\min\left\{\frac{C_5 a_1^2 \tau_{\min}^2 n p \lambda^2}{w^6\sigma_{\max}^2(s+1)^2}, \frac{C_6 a_1 \tau_{\min} n p \lambda}{w^3 \sigma_{\max}(s+1)} \right\} \right],\\
     &  \delta_2 \le 2 (s+1)\exp \left[-\min\left\{\frac{C_7 \tau_{\min}^2
    n p (\| \bbeta_{\mS}^{(0)}\|_{\min} - \gamma \lambda)^2 }{w^2\sigma_{\max}^2 (s+1)}, \frac{C_8 \tau_{\min} n p(\| \bbeta_{\mS}^{(0)}\|_{\min} - \gamma \lambda) }{w\sigma_{\max}(s+1)^{1/2}} \right\} \right],
\end{align*}
where $C_1,\dots,C_8$ are some positive constants, and $a_0 = \min\{1, a_2\}$. Moreover, $a_1, a_2, \gamma$ are constants specified in (i)--(iv).
In particular, if $np\lambda_0^2 / \{s\log(K)\} \to \infty$, then we have $\delta_0 + \delta_1 + \delta_2 \to 0$ as $np \to \infty$.
\eet

\subsection{Factor Composite Models}
\label{subsec:factor_scr}

Factor models, such as the capital asset pricing model (CAPM) and the Fama-French three-factor (FF3) model, have been widely used in the economics and finance \citep{perold2004capital,fama1992cross,fama1993common}.
By using a few effective factors, we can significantly reduce the number of parameters in large scale covariance matrix estimation \citep{fan2008high}.
In this subsection, we attempt to combine the classical factor models with our SCR model.
This leads to a new class of models, which combine the strengths from both the classical factor models and our SCR model. For convenience, we refer to this new class of methods as factor composite models.
Specifically, let $\by_i\in \mR^p \ (1\le i \le n)$ be the $n$ observations of the response vectors, and assume that $\bbf_i \in \mR^M \ (1\le i \le n)$ are the vectors of $M$ observable common factors. Then a typical factor model can be written as \citep{fan2008high}:
\begin{align}  \label{eq:factor_model}
    \by_i = \bB \bbf_i + \bu_i,
\end{align}
where $\bB= (\bb_1,\bb_2,\dots,\bb_M) \in \mR^{p\times M}$ is the unknown factor loading matrix, and $\bu_i\in \mR^p$ is the idiosyncratic error uncorrelated with the common factors. Without loss of generality, we assume that both $\bbf_i$ and $\bu_i$ have zero means. Then we should have $\bSigma= E(\by_i\by_i^\top)= \bB \bSigma_{\bbf} \bB^\top + \bSigma_{\bu}$, where $\bSigma_{\bbf} =E(\bbf_i\bbf_i^\top) \in
\mR^{M\times M}$ and $\bSigma_{\bu} = E(\bu_i\bu_i^\top)\in \mR^{p\times p}$. In a strict factor model, the covariance matrix $\bSigma_{\bu}$ of the idiosyncratic error is typically assumed to be diagonal \citep{fan2008high}. To enhance the model flexibility, we can model $\bSigma_{\bu}$ by our SCR model. That is $\bSigma_{\bu}(\bbeta) =  \sum_{k=0}^K \beta_k \bW_k$, where $\bW_k$s are the similarity matrices, and $\beta_k$s are the unknown coefficients.
Consequently, the covariance matrix $\bSigma $ is expressed as
\begin{align}\label{eq:factor_scr}
    \bSigma = \bB\bSigma_{\bbf}\bB^\top + \sum_{k=0}^K \beta_k \bW_k.
\end{align}

By model \eqref{eq:factor_scr}, an interesting finding arises when the factors are mutually uncorrelated, indicated by $\bSigma_{\bbf}=\diag\{\alpha_1^2,\dots,\alpha_M^2\}$ as a diagonal matrix.
This leads us to express model \eqref{eq:factor_scr} in a unified form as
\begin{align*}
    \bSigma = \sum_{m=1}^M \alpha_m^2 \bW_{\bb_m} + \sum_{k=0}^K \beta_k \bW_k,
\end{align*}
where $\bW_{\bb_m} = \bb_m \bb_m^\top\ (1\le m\le M)$ are rank-one matrices constructed based on the factor loadings.
There are several important differences between the two regression components. For example, consider the stock market. Note that the matrices $\bW_{\bb_m}$s are typically unobserved and need to be estimated using market-specific factors, such as those in the FF3 model. On the other hand, the similarity matrices $\bW_k$s can be directly observed or constructed using the collected firm-specific covariates $\bX_k$s from the financial statements of the firms. Furthermore, the summation of $\bW_{\bb_m}$s captures the low-rank factor structure of $\bSigma$, with the number of factors $M$ being relatively small or moderate. In contrast, the summation of $\bW_k$s captures a certain $\ell_1$-sparse structure of $\bSigma$, as the boundedness of $\|\bW_k\|_1$ is assumed in Condition (C\ref{cond:bounded_norm}).
It is worth noting that our approach also allows for a potentially large number of similarity matrices, specifically $K+1$, but only $s+1$ of them are actually useful.
In addition, the diagonal elements of $\bW_{\bb_m}$ can be distinct, which allows for modeling heterogeneous variance. On the other hand, the diagonal elements of matrix $\bW_k$ are usually the same, and in this case, we can model heterogeneous variance using the approach introduced in Section \ref{subsec:scr_n}. Lastly, while the elements of $\bW_{\bb_m}$s can be negative, similarity matrices $\bW_k$s often have non-negative elements.
Nevertheless, it is possible to construct similarity matrices with negative values using alternative approaches, as long as the regularity conditions as given before can be satisfied.
Inspired by an anonymous referee, we illustrate one possible approach by numerical studies in Section \ref{subsec:empirical} and Appendix A.7.

As we mentioned before, we refer to \eqref{eq:factor_scr} as a factor composite model.
To practically estimate the model \eqref{eq:factor_scr}, we adopt a similar procedures as suggested by \cite{fan2008high}. In the first step, we compute the least squares estimator of $\bB$ by $\wh\bB = (\bF^\top\bF)^\top \bF^\top \bY\in \mR^{M\times p}$, where $\bF = (\bbf_1^\top \dots,\bbf_n^\top)^\top \in \mR^{n\times M}$ and $\bY =(\by_1^\top \dots,\by_n^\top)^\top \in \mR^{n\times p} $.
Denote the residuals by $\wh\bu_i = \by_i - \wh \bB \bbf_i \in \mR^p$ for each $1\le i\le n$.
In the second step, we estimate the covariance of the residuals by the SCR method introduced in Section \ref{subsec:scr_n}. This yields the covariance matrix estimator $\wh \bSigma_{\bu} =\sum_{k=0}^K \wh\beta_k \bW_k$.
In the last step, we plug in all the components to obtain $\wh\bSigma = \wh\bB\wh\bSigma_{\bbf}\wh\bB^\top + \wh\bSigma_{\bu} $, where $\wh\bSigma_{\bbf} = n^{-1}\bF^\top\bF \in \mR^{M\times M}$ is the sample covariance matrix of the factors.
Numerical experiments as to be presented subsequently suggest that this factor model based SCR estimator works very well.

\section{Numerical Studies}

\subsection{Simulation Studies}
\label{subsec:simu}

\subsubsection{Simulation Settings and Algorithm Implementation}

In this section, we evaluate the finite sample performance of the folded concave penalized sparse covariance regression (SCR) method. The responses vector $\by$ is simulated by $\by = \bSigma_0^{1/2}\bZ$, where the components of the vector $\bZ$ are independently and identically generated from different distributions and will be specified later.
In addition, the true covariance matrix is set as $\bSigma_0 = \sum_{k=0}^{K} \bbeta_k^{(0)}\bW_k$, where $\bbeta^{(0)} = (\beta_0^{(0)}, \dots, \beta_K^{(0)})^\top = (8, 1, 1, 1, 0, \cdots, 0)^\top \in \mR^{K+1}$.
Then we have $\mS = \supp(\bbeta^{(0)}) = \{0, 1, 2, 3\}$ and $\mS^{c} = \{0, \dots, K\} \setminus \mS = \{4, \cdots, K\}$.
The off-diagonal elements of the similarity matrices $\bW_k = (w_{j_1j_2})\in\mR^{p\times p}, k = 1, \dots, K$ are independently and identically generated from Bernoulli distributions with probability $5p^{-1}$, and the diagonal elements are set to be 0.
We consider three different $(p,K)$ configurations, namely $(200,10), (500,100)$ and $(1000,1000)$ for the simulation.


For comparison, we consider both the SCAD penalty and the MCP penalty. We fix $\gamma = 3.7$  for the SCAD penalty as suggested by \cite{Fan:Li:2001}, and fix $\gamma = 1.5$ for the MCP penalty.
To choose an appropriate tuning parameter $\lambda$, we consider the following BIC-type criterion proposed in \cite{wang2009shrinkage}:
\beq
\BIC(\lambda) = \log\left( \left\|\by\by^\top - \sum_{k = 0}^K \wh\beta_k \bW_k \right\|_F^2\right) + \log\{\log(K+1)\}\frac{\log(p^2)}{p^2} \times df_\lambda,\label{eq:bic}
\eeq
where $df_\lambda$ is the number of nonzero coefficients in $\wh\bbeta=(\wh\beta_0,\dots,\wh\beta_K)^\top$.
Then we select $\lambda$ which minimizes the $\BIC(\lambda)$.
For the initial estimator in the LLA algorithm (i.e., Algorithm \ref{alg:LLA}), we use the Lasso estimator \eqref{eq:lasso} with the tuning parameter $\lambda_0$.
Our preliminary experiment showed that employing a single tuning parameter for both $\lambda_0$ and $\lambda$ yielded comparable results to selecting two separate tuning parameters. Therefore, to reduce computational costs, we set $\lambda_0=\lambda$ and select a single value for both $\lambda_0 $ and $\lambda$ using BIC.
Further details and discussion regarding this issue can be found in Appendix A.8. 
According to the discussion below \eqref{eq:lasso}, we do not penalize the intercept term $\beta_0$ in the numerical experiments.

\subsubsection{Performance Measurements and Simulation Results}

We then evaluate the sparse recovery and the estimation accuracy of the folded concave penalized SCR method. To obtain a reliable evaluation, the experiment is replicated for $R = 100$ times.
Let $\wh{\bbeta}^{(r)}$ be the estimated coefficients in the $r$th replication for $1\le r \le R$, and $\mS^{(r)} = \supp(\wh{\bbeta}^{(r)})$ be the corresponding set of indexes of nonzero estimated coefficients.
Then the covariance estimate in the $r$th replication can be written as  $\wh\bSigma^{(r)} = \bSigma(\wh{\bbeta}^{(r)}) = \sum_{k = 0}^K\wh{\beta}_k^{(r)} \bW_k$.
We first investigate the sparse recovery property of the folded concave penalized SCR method.
In this regard, we consider three measurements.
The first one is the true positive rate ($\TPR$), defined by $\TPR = R^{-1}\sum_{r=1}^R | \mS^{(r)}\cap\mS |/ |\mS|$.
The second one is the false positive value ($\FPR$), defined by $\FPR = R^{-1}\sum_{r=1}^R | \mS^{(r)}\setminus \mS| / |\mS^{(r)}|$.
We also report the fraction of corrected selection defined by $\CS = R^{-1}\sum_{r=1}^R I\{\mS^{(r)} = \mS\}$, where $I\{\cdot\}$ is the indicator function.
Next, we evaluate the estimation accuracy. To this end, we calculate the root mean squared error ($\rmse$) for the coefficient $\bbeta$ as $\rmse_{\bbeta} = \sqrt{(RK)^{-1}\sum_{k=0}^K\sum_{r=1}^R(\wh{\beta}_k^{(r)} - \beta_k^{(0)})^2}$, bias ($\Bias$) and the standard deviation ($\SD$) for the coefficient $\bbeta$ as
$\Bias_{\bbeta} = K^{-1}\sum_{k=0}^K
 \vert\bar\beta_k - \beta_k^{(0)}\vert$ and $\SD_{\bbeta} = \sqrt{(RK)^{-1}\sum_{k=0}^K\sum_{r=1}^R(\wh{\beta}_k^{(r)} - \bar\beta_k)^2}$, with $\bar\beta_k= R^{-1}\sum_{r=1}^R\wh{\beta}_k^{(r)}, 0\le k \le K$, respectively.
Lastly, we evaluate the performance of the estimated covariance matrices. Following \cite{zou2017covariance}, we consider the spectral error and the Frobenius error of the estimated covariance matrices measured under the spectral norm and the Frobenius norm, i.e., $R^{-1}\sum_{r=1}^R\|\wh\bSigma^{(r)}-\bSigma_{0}\|_{2}$ and $R^{-1}\sum_{r=1}^Rp^{-1 / 2}\|\wh\bSigma^{(r)}-\bSigma_{0}\|_{F}$.
For comparison, we also compute the corresponding performance measurements for the OLS estimator \eqref{eq:ols_est} and the oracle estimator \eqref{eq:oracle}.

\begin{table}[htbp]
    \setlength{\abovecaptionskip}{1.0cm}
    \setlength{\belowcaptionskip}{0.3cm}
	\begin{center}
		\renewcommand\tablename{Table}
		\caption{Simulation results for $\bZ$ generated from the standard normal distribution.}
		\label{tab:sim_sn}
  \resizebox{\textwidth}{!}{
		\begin{tabular}{cc|ccc|ccc|cc}
			\hline
$(p,K)$&Penalty&$\TPR$&$\FPR$&$\CS$&$\rmse$&Bias&$\SD$&$\|\cdot\|_2$&$\|\cdot\|_F$\\
\hline
\multirow{4}{*}{(200,10)}
    &SCAD    &  0.800 &  0.091 &  0.190 & 0.471 & 0.051 & 0.465 &  8.026 &  2.732 \\
    &MCP     &  0.795 &  0.091 &  0.170 & 0.473 & 0.052 & 0.467 &  8.095 &  2.754 \\
    &OLS     & --     & --     & --     & 0.480 & 0.032 & 0.479 &  8.596 &  2.898 \\
    &ORACLE  &  1.000 &  0.000 &  1.000 & 0.363 & 0.016 & 0.361 &  4.902 &  1.731 \\
\hline
\multirow{4}{*}{(500,100)}
    &SCAD    &  0.940 &  0.049 &  0.580 & 0.090 & 0.005 & 0.087 &  4.582 &  1.524 \\
    &MCP     &  0.940 &  0.049 &  0.580 & 0.090 & 0.005 & 0.087 &  4.583 &  1.524 \\
    &OLS     & --     & --     & --     & 0.229 & 0.018 & 0.228 & 16.240 &  5.048 \\
    &ORACLE  &  1.000 &  0.000 &  1.000 & 0.067 & 0.002 & 0.065 &  2.921 &  1.011 \\
\hline
    \multirow{4}{*}{(1000,1000)}
    &SCAD    &  0.990 &  0.046 &  0.770 & 0.021 & 0.000 & 0.021 &  3.263 &  0.991 \\
    &MCP     &  0.990 &  0.048 &  0.760 & 0.021 & 0.000 & 0.021 &  3.324 &  1.003 \\
    &OLS     & --     & --     & --     & 0.160 & 0.013 & 0.159 & 30.888 & 11.282 \\
    &ORACLE  &  1.000 &  0.000 &  1.000 & 0.016 & 0.000 & 0.015 &  2.095 &  0.723 \\
\hline
		\end{tabular}}
	\end{center}
\end{table}

We consider that the components of $\bZ$ are independently and identically generated from (i) a standard normal distribution $\mN(0,1)$, (ii) a mixture normal distribution $\xi \cdot \mN(0, 5/9) + (1-\xi)\cdot \mN(0, 5)$ with $P(\xi=1)=0.9$ and $P(\xi=0)=0.1$, or (iii) a standardized exponential distribution Exp$(1)-1$.
The simulation results for the standard normal distribution are given in Table \ref{tab:sim_sn}.
Since all three distributions present similar results, to save space, we relegate the simulation results of the mixture normal and the standardized exponential distributions to the supplementary material; see Tables A.1--A.2 in Appendix A.7. 
We next focus on Table \ref{tab:sim_sn}.
Considering sparsity recovery, it can be observed that as $p$ increases, the $\TPR$ values of both SCAD and MCP estimators gradually increase, while the $\FPR$ values decrease. In addition, the proportion of correct selection of all non-zero coefficients also gradually increases.
This verifies the selection consistency of the proposed method and demonstrates the usefulness of the BIC criterion.
Regarding the accuracy of the coefficient estimation, we can see that the $\rmse$, $\Bias$, and $\SD$ values of all the estimators decrease as $p$ increases.
However, the $\rmse$ and $\SD$ values for the OLS estimator are much higher compared to the other three estimators, especially when both $p$ and $K$ are large.
In contrast, as $p$ increases, the estimation errors of SCAD and MCP estimators gradually approach those of the optimal oracle estimator.
This observation confirms the oracle property for the two penalized estimators obtained through the LLA algorithm.
Lastly, in terms of the estimation of the covariance matrix, we can see that as $p$ increases, both two error measurements of the two penalized estimators get close to those of the oracle estimator. In contrast, the estimation errors of the OLS estimator increase with the growth of both $p$ and $K$. This finding suggests that the covariance matrix obtained by the OLS method is inconsistent when the number of predictors $K$ diverges too fast.
All these results demonstrate the effectiveness of the folded concave penalized estimation for the SCR model.

\subsection{A Case Study with Stocks of Chinese A-Share Market}
\label{subsec:empirical}

In this subsection, we apply the proposed sparse covariance regression (SCR) model to analyze the returns of the stocks traded in the Chinese A-Share market.
We first describe the data and covariates used to construct the similarity matrix.
Subsequently, we employ the SCR method to select the similarity matrices for the corresponding covariance matrix estimation.
This allows us to construct a portfolio with the estimated covariance matrix.
We then evaluate the portfolio's investment performance and illustrate the proposed methodology's usefulness.

\subsubsection{Data Description}
\label{subsec:data_desc}

In this study, we collect quarterly returns of  $p = 667$ stocks of the Chinese A-share market
after the basic data cleaning procedure.
Specifically, the stocks are obtained with complete return and covariate information
during the year 2016 to 2020.
It
leads to a total of $T = 20$ quarters.
The stock information is collected from the Chinese Stock Market and Accounting Research (CSMAR) database (https://us.gtadata.com/csmar.html).
We first present some descriptive data analysis as follows.
First, for each stock $j$, we calculate the average return of the stock
as $T^{-1}\sum_t Y_{jt}$.
Then it yields the histogram in the left panel of Figure \ref{ts_hist}.
We can obtain that the average returns of stocks range from -0.1 to 0.2, with the majority lying between -0.05 and 0.05.
In addition, we calculate the average stock return for each time point
as $p^{-1}\sum_j Y_{jt}$, leading to the time series in the right panel of Figure \ref{ts_hist}. The average stock returns have the lowest level in the first quarter and reach their highest in the 13th quarter (i.e., the first quarter of 2019).
Indicated by the existing theoretical and empirical studies (e.g., \cite{ROLL1988R2} and \cite{zou2017covariance}), the stock return comovement can be closely related to the firm's fundamentals.
We are then motivated to consider several firms' fundamentals for constructing the similarity matrices in the covariance regression model.
Specifically, we collect $11$ covariates from the financial statements of the firms, including the
SIZE (logarithm of market value),
BM (book-to-market ratio),
CR (cash ratio of the firm, measuring the liquidity of the firm),
WARE (weighted return on equity),
OER (owner's equity ratio, measuring the firm's long-term solvency),
TAT (total asset turnover, measuring the firm's operational efficiency of assets),
RTA (return on total assets),
CF (cash flow of the firm),
LEV (leverage ratio),
CAAR (capital accumulation rate, measuring the firm's development ability),
and EPS (earning per share).
These covariates provide measurements of the firms'
performances in various aspects \citep{bodie2020investments,palepu2020business}.
Lastly, all covariates are standardized with mean 0 and variance 1.

\begin{figure}[htbp]
    \centering
    \subfigure{
        \begin{minipage}[t]{0.5\linewidth}
            \centering
            \includegraphics[scale=0.45]{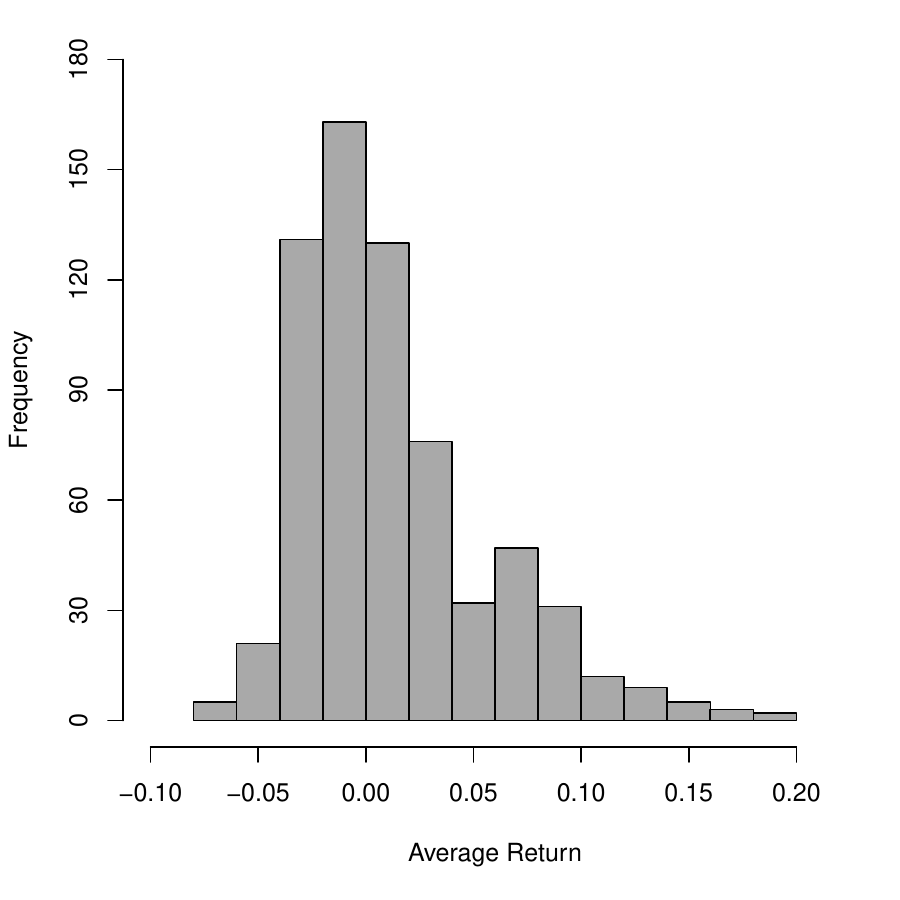}
        \end{minipage}%
    }%
    \subfigure{
        \begin{minipage}[t]{0.5\linewidth}
            \centering
            \includegraphics[scale=0.45]{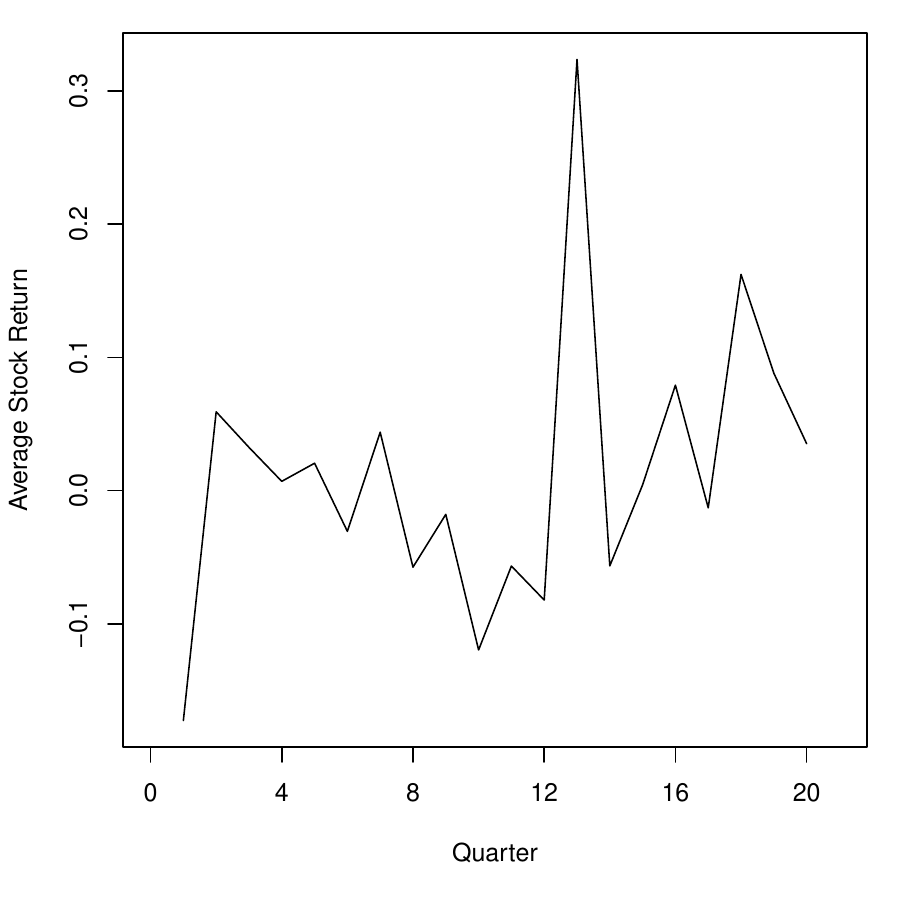}
        \end{minipage}
    }
    \centering
    \caption{The left panel: histogram of the average return of $p = 667$ stocks; The right panel: the time series of average stock returns over $T = 20$ quarters.}\label{ts_hist}
\end{figure}

Subsequently, we construct the similarity matrices as follows.
First, for the $k$th covariate $\bX_k = (X_{1k},\cdots, X_{pk})^\top \in \mR^p$, we construct the associated similarity matrices $\bW_k = (w_{k,j_1j_2}) \in \mR^{p\times p}$ using two different approaches.
Specifically, for the first approach, we define $ w_{k,j_1j_2} = \exp\{-10(X_{j_1k} - X_{j_2k})^2\}$ if $(X_{j_1k} - X_{j_2k})^2<\tau_k$, and $w_{k,j_1j_2}=0$ if $(X_{j_1k} - X_{j_2k})^2>\tau_k$ or $j_1=j_2$. Here, we choose $\tau_k>0$ such that each $\bW_k$ has $1/4$ nonzero elements.
For the second approach, we  define $ \bW_k = \bX_k \bX_k^\top / p $.
Then for the $11$ covariates, we can construct a total of $22$ similarity matrices.
Subsequently, we construct two additional similarity matrices based on the stock industrial network and common shareholder network.
For the stock industrial network, (denoted as $\bW_{\ind} = (w_{\ind,j_1j_2})$), we define $w_{\ind,j_1j_2} = 1$ if the stock $j_1$ and stock $j_2$ belong to the same industry, otherwise $w_{\ind,j_1j_2} = 0$.
Here, all stocks are categorized into 14 industries according to the China Securities Regulatory Commission (2012 edition).
In addition, we denote the common shareholder network as $\bW_{\sh} = (w_{\sh, j_1j_2})$, where $w_{\sh, j_1j_2} = 1$ if the stock $j_1$ and stock $j_2$ share at least one top ten shareholders, otherwise $w_{\sh, j_1j_2} = 0$. This leads to a total of $K = 24$ similarity matrices $\bW_k$ $(1\le k\le K)$. Lastly, we rescale the elements of similarity matrices so that $\|\bW_k\|_1 = 1$ for each $1\le k\le K$.

\begin{figure}[ht]
    \centering
    \subfigure{
        \begin{minipage}[t]{0.5\linewidth}
            \centering
            \includegraphics[scale=0.45]{./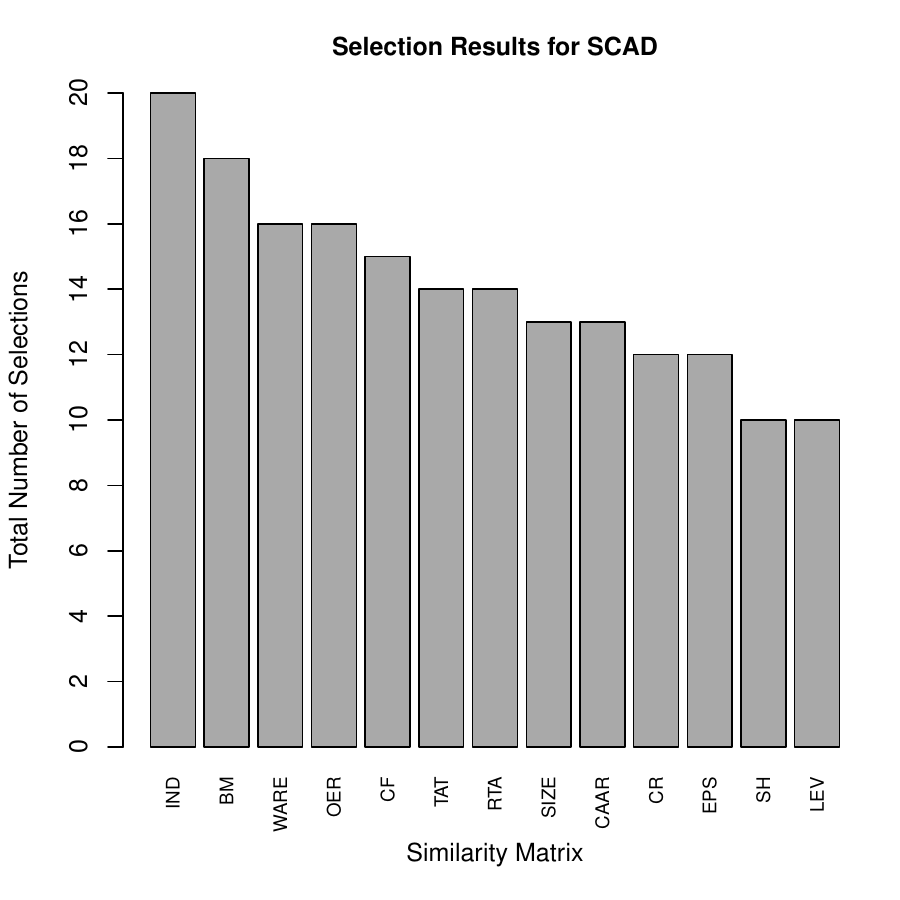}
        \end{minipage}%
    }%
    \subfigure{
        \begin{minipage}[t]{0.5\linewidth}
            \centering
            \includegraphics[scale=0.45]{./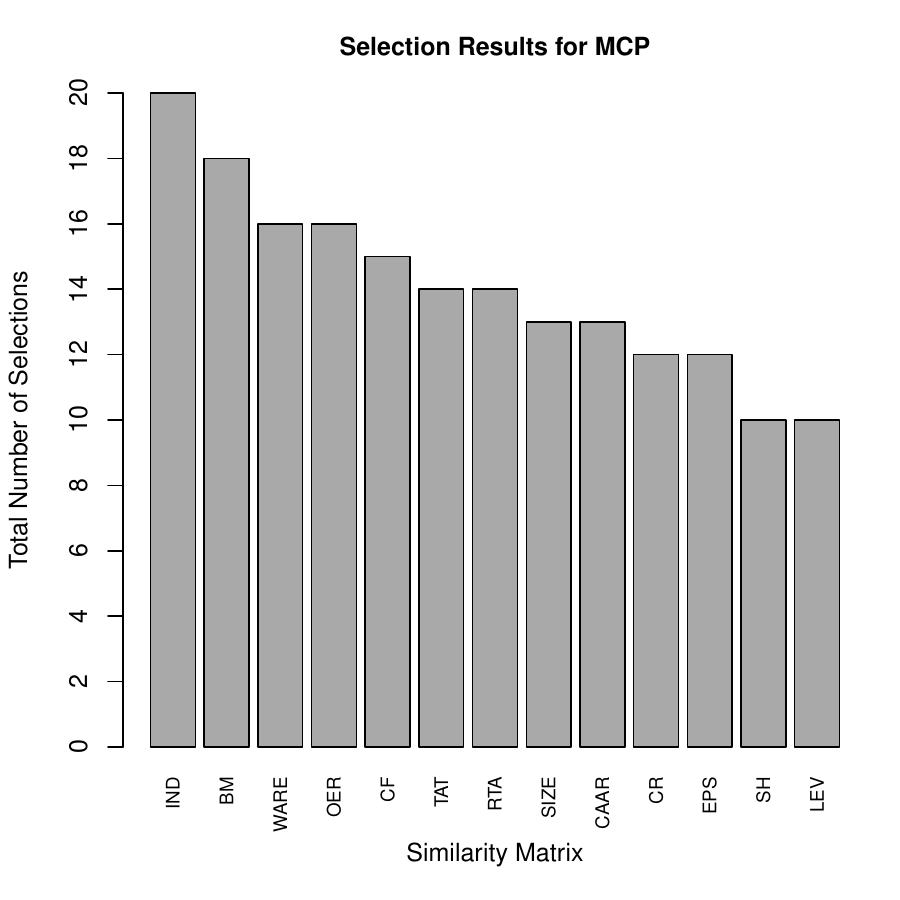}
        \end{minipage}
    }
    \centering
    \caption{The left panel: the total number of selections for each similarity matrix during all 20 fittings using the SCAD penalty; The right panel: the total number of selections for each similarity matrix during all 20 fittings using the MCP penalty.}\label{sim_sel}
\end{figure}

\subsubsection{Model Estimation and Evaluation}

Subsequently, we apply the SCR model with SCAD and MCP penalties to the stock return data.
We adopt a rolling window approach for model training and evaluation.
Specifically, we set $n = 1$ as the training window size and fit the model for $T = 20$ times.
We also calculate the total number of selections for these similarity matrices. Note that for the similarity matrices constructed from the same covariate, we only count them once.
The results are shown by bar plots in Figure \ref{sim_sel}.
Here, the left panel corresponds to the SCAD penalty, and the right panel corresponds to the MCP penalty. Both penalties yield nearly identical selection results.
In summary, IND, BM, WARE and OER are the top four most frequently selected matrices for both the SCAD penalty and the MCP penalty.
It reflects their importance in this covariance regression modeling problem.

Then we utilize the covariance regression result for the portfolio
construction and investment. After we obtain the fitted covariance matrix, to ensure its positive-definiteness, we set its non-positive eigenvalues to be $\epsilon = 10^{-6}$ and keep the eigenvectors unchanged.
Suppose the estimated covariance at the $t$th quarter is $\wh \bSigma_t$.
To construct the optimal portfolio, we solve the global minimal variance portfolio problem as
$\bomega_{t}^{*}=\arg \min _{\bomega^\top \one = 1} \bomega^{\top} \wh{\bSigma}_t \bomega$,
where $\bomega = (\omega_1,\cdots, \omega_p)^\top \in \mR^p$.
Then we assess the portfolio return in the subsequent quarter
by $ \bomega_t^{*\top} \by_{t+1}$.
For model comparison, we first calculate the market portfolio as a benchmark, which is
the average of all stock returns in the next quarter with weights proportional to their market capitalization.
Furthermore, we include the unpenalized OLS estimator \eqref{eq:ols_est} for the covariance regression model, including all the similarity matrices.

We examine the portfolio performance by five commonly used measures (e.g., see \cite  {bodie2020investments}).
They are,
Mean (the average return of investment portfolios);
SD (the standard deviation of the portfolio returns over the investing period, interpreted as the risk of the portfolio);
Sharpe ratio (excess return over the risk-free rate adjusted by SD); Alpha (the alpha coefficient is a
the risk-adjusted excess return of the investment portfolio over the benchmark);
Beta (the beta coefficient close to 1 indicates the out-of-sample portfolio has almost the same volatility as the benchmark).
Besides,  we further present the compound quarterly growth rate (CQGR) of the four portfolios,
which is calculated by $\big\{\prod_{t=2}^T (1 + r_{t})\big\}^{1/(T-1)} - 1$
and $r_{t}$ is the return of the $t$th quarter.

\begin{table}[htbp]
	{
	\setlength{\abovecaptionskip}{1.0cm}
	\setlength{\belowcaptionskip}{0.3cm}
	\begin{center}
		\caption{The quarterly Mean, SD, Sharpe ratio, Alpha, Beta, and compound quarterly growth rate (CQGR) of the two penalized, the unpenalized OLS, and the market portfolio returns (\%).}
		\label{inv}
		\begin{tabular}{ccccccc}
			\hline
			& Mean& SD& Sharpe Ratio& Alpha& Beta & CQGR\\
			\hline
			SCAD	& 4.206 & 10.647 & 0.360 & 1.869 & 0.803 & 3.717 \\	
			MCP		& 4.206 & 10.647 & 0.360 & 1.869 & 0.803 & 3.717 \\	
			OLS  	& 2.248 & 9.431 & 0.199 & -0.614 & 0.983 & 1.857 \\	
			Market	& 2.913 & 8.197 & 0.310 & 0.000 & 1.000 & 2.612 \\
			\hline
		\end{tabular}
	\end{center}
}
\end{table}

Table \ref{inv} presents the constructed four portfolios on the above measures.
We can observe that for both the SCAD penalty and the MCP penalty, the penalized portfolios have higher mean returns compared to the unpenalized OLS and the market portfolios, although their standard deviations are moderately higher than the market.
After adjusting for the risks, the two portfolios still have higher Sharpe ratios and alpha coefficients than the other competing methods, and their Beta coefficients are also smaller than one.
In particular, the two penalized portfolios have the CQGR of $3.717\%$, which is higher than the other two methods.
In summary, the above investment results demonstrate the superiority of the constructed portfolios with our proposed SCR method.

\subsubsection{Daily Return Data}
To further demonstrate the usefulness of the SCR model, we compare our method with some popularly used methods on daily stock returns data.
Specifically, we collected the daily returns for the same 667 stocks mentioned earlier, spanning 20 quarters from 2016 to 2020.
After data cleaning, a total of $p=283$ daily stock returns for $1218$ trading days are retained.
To apply the capital asset pricing model (CAPM) and the Fama-French three-factor (FF3) model, we also collect three common factors for each trading day from the RESSET financial research database (http://www.resset.cn/endatabases). They are, respectively, the market factor (\texttt{MKT}), the size factor (\texttt{SMB}), and the value factor (\texttt{HML}). We also construct the $K=24$ similarity matrices $\bW_k\in\mR^{p\times p}$ for the $p=283$ stocks as in the above subsection.

Then we adopt the rolling window approach for model training and evaluation. Specifically, at the first day of each quarter, we use the daily return data of the preceding one quarters (i.e., $n\approx 60$) as the training dataset to construct portfolios by different methods. We consider the following covariance matrix estimation methods. The first one is our SCR method for repeated responses as introduced in Section \ref{subsec:scr_n}. Since the two folded concave penalties have shown similar performance, we will only use the SCAD penalty for the SCR method.
We also consider two strict factor models to estimate the covariance matrix. The first one is the CAPM with the single market factor \texttt{MKT}. The second one is the FF3 model with all three factors \texttt{MKT}, \texttt{SMB}, and \texttt{HML}. In addition, the factor composite models as discussed in Section \ref{subsec:factor_scr} are also examined.
Another way to implement the factor model \eqref{eq:factor_model} is to treat the 11 covariates described in Section \ref{subsec:data_desc} as known factor loadings.
Then we run the cross-sectional regression on these loadings to obtain the factors and residuals.
The residual covariance can be estimated by two different methods.
The first one is to estimate the covariance of the residuals by a diagonal matrix, similar to the strict factor model. The second one is to use our SCR model with $K=24$ similarity matrices to estimate the covariance of the residuals. Finally, we obtained the complete covariance matrix of returns by adding the covariance of the factor part and the residual part. The two models are referred to as  characteristics-based factor (CBF) model and ``CBF + SCR'' model respectively.
Lastly, we consider the shrinkage method of \cite{ledoit2004well}, which will be referred to as the LW method. According to their approach, the covariance matrix can be estimated by $\wh \bSigma_\textup{LW} = \rho\{\tr(\wh\bS)/p\}\bI_p + (1-\rho) \wh\bS$, where $\wh \bS$ is the sample covariance matrix of the daily returns, and $\rho\in[0,1]$ can be calculated as in Section 3.3 of \cite{ledoit2004well}.
By replacing $\wh\bS$ with our SCR estimator, another composite estimator can be obtained.
After obtaining the covariance estimator $\wh{\bSigma}$, we then solve $\bomega^{*}=\arg \min _{\bomega^\top \one = 1} \bomega^{\top} \wh{\bSigma} \bomega$ to construct the portfolio.
Then we assess each portfolio return in the subsequent quarter. This leads to a total of $19$ quarterly investment returns for each portfolio.
The Mean, SD, and Sharpe ratio for the quarterly returns of each portfolio are presented in Table \ref{tab:inv2}. For comparison, we also calculate the market portfolio as a benchmark.

\begin{table}[htbp]
	{
	\setlength{\abovecaptionskip}{1.0cm}
	\setlength{\belowcaptionskip}{0.3cm}
		\begin{center}
			\caption{The Mean, SD, and Sharpe ratio of the quarterly returns for different portfolios (\%).}
			 \label{tab:inv2}
			 \resizebox{\textwidth}{!}{
	
			 \begin{tabular}{c|cccccc|cccc}
				\hline
				&\multicolumn{6}{c|}{Individual Methods} & \multicolumn{4}{c}{Composite Methods}\\
				 &Market & CAPM  & FF3 & CBF & LW  & SCR &  CAPM+SCR & FF3+SCR  & CBF+SCR & LW+SCR \\
				 \hline
				 Mean            & 3.029 & 1.646 & 1.694 & 2.390 & 2.377 & 3.940 & 3.001 & 2.963 & 3.494 & 3.596 \\
				 SD              & 7.555 & 5.431 & 5.022 & 8.707 & 5.327 & 8.819 & 5.345 & 5.022 & 7.541 & 7.414 \\
				 Sharpe Ratio    & 0.352 & 0.234 & 0.263 & 0.232 & 0.376 & 0.404 & 0.492 & 0.516 & 0.414 & 0.435 \\
	
			\hline
			\end{tabular}}
		\end{center}
	}
	\end{table}

From Table \ref{tab:inv2}, we can obtain the following observations.
First, for each individual method, it can be observed that the three strict factor models (i.e., CAPM, FF3 and CBF) have comparable performance, but their Sharpe ratios are much lower than that of the Market. In addition, the SCR and LW methods have better performance than the Market in terms of Sharpe ratio. Furthermore, for these composite methods, it is evident that all the four composite models (i.e., CAPM+SCR, FF3+SCR,  CBF+SCR, and LW+SCR) show a great improvement in Sharpe ratio as compared with their non-composite counterparts. In particular, the combination of FF3 and SCR method yields the highest Sharpe ratio $0.516$.

\section{Conclusion}

This work investigates the penalized estimation of the sparse covariance regression (SCR) model.
Specifically, we first examine the Lasso estimator and derive its non-asymptotic error bound.
Subsequently, we compute the folded concave penalized estimator using the local linear approximation (LLA) algorithm, with the Lasso estimator as the initial value.
Theoretical analysis demonstrates that the resulting estimator can converge to the oracle estimator with overwhelming probability under appropriate regularity conditions.
Additionally, we establish the asymptotic normality of the oracle estimator under more general conditions.
We also extend the SCR method to the scenarios with repeated observations of the response.
Finally, we demonstrate the usefulness of the proposed method on a Chinese stock market dataset.

We briefly discuss possible future research directions.
Firstly, we provide a criterion to select the tuning parameters from the application point of view. It is also meaningful to investigate its theoretical performance rigorously.
Secondly, when dimension $p$ is very large, the computational burden of the SCR model becomes a crucial issue. Therefore, it is of great interest to design more computationally efficient methods.
Lastly, it is known that quantile regression is more robust to heavy-tailed noise than the ordinary least squares regression. Therefore, replacing the current quadratic loss with a check loss should also be a challenging but valuable extension.

\section*{Acknowledgment}

The authors are very grateful to the Editor, Associate Editor, and two anonymous reviewers for their constructive comments that greatly improved the quality of this paper. 
Yuan Gao's research is supported by the Postdoctoral Fellowship Program of CPSF (GZC20230111) and the National Natural Science Foundation of China  (No. 72471254). 
Xuening Zhu's research is supported by the National Natural Science Foundation of China (nos. 72222009, 71991472, 12331009), Shanghai International Science and Technology Partnership Project (No. 21230780200), Shanghai B\&R Joint Laboratory Project (No. 22230750300), MOE Laboratory for National Development and Intelligent Governance, Fudan University, IRDR ICoE on Risk Interconnectivity and Governance on Weather/Climate Extremes Impact and Public Health, Fudan University. 
Tao Zou’s research is supported by the ANU College of Business and Economics Early Career Researcher Grant, and the RSFAS Cross Disciplinary Grant. 
Hansheng Wang's research is partially supported by the National Natural Science Foundation of China (No. 12271012).

\section*{Disclosure Statement}
The author reports there are no competing interests to declare.

\bibliographystyle{asa}
\bibliography{xuening}

\newpage

\iftrue  
{
	
	\appendix
	
	\setcounter{table}{0}
	\renewcommand{\thetable}{A.\arabic{table}}
	
	\setcounter{equation}{0}
	\renewcommand{\theequation}{A.\arabic{equation}}
	
	\section{Appendix}

	\subsection{Proof of Theorem \ref{thm:lasso_est}}
	\label{proof of thm:lasso_est}
	
	\begin{proof}
		
		We follow the proof idea of Theorem 7.13 (a) in \cite{wainwright2019high}.
		Recall that $\by\by^\top = \sum_{k=0}^K \beta_k^{(0)} \bW_k + \mE$.
		Define $\wh \bdelta \defeq \wh\bbeta^{\lasso} - \bbeta^{(0)}$.
		We first show that, if $\lambda_0 \ge (2/p) \max_{0\le k\le K} |\tr(\bW_k\mE)|$ holds, then $\wh\bdelta \in \C_3(\mS) \defeq \{\bdelta \in \mR^{K+1}: \|\bdelta_{\mS^c}\|_1 \le 3 \|\bdelta_{\mS}\|_1\}$.
		Subsequently, we show that $\big\{\lambda_0 \ge (2/p) \max_{k\in\mS} |\tr(\bW_k\mE)|\big\}$ holds with high probability.
		
		\noindent
		{\bf Step 1.}
		Since $\wh \bbeta^{\lasso}$ is the solution to the problem \eqref{eq:lasso}, we have
		\begin{equation*}
			Q(\wh\bbeta^{\lasso}) +\lambda_0  \| \wh \bbeta^{\lasso} \|_1 =  \frac{1}{2 p}\left\|\mE - \sum_{k = 0}^K \wh\delta_k \bW_k \right\|_F^2 + \lambda_0  \| \wh \bbeta^{\lasso} \|_1 \le \frac{1}{2 p}\|\mE \|_F^2 + \lambda_0  \|\bbeta^{(0)} \|_1.
		\end{equation*}
		Rearranging the above inequality, we obtain that
		\begin{equation}\label{ineq:lagrangian}
			0 \le \frac{1}{2p} \left\| \sum_{k=0}^K \wh\delta_k \bW_k \right\|_F^2 \le \frac{1}{p} \tr\left( \mE \sum_{k=0}^K \wh\delta_k \bW_k \right) + \lambda_0 \Big\{\|\bbeta^{(0)}\|_1 - \|\wh\bbeta^{\lasso}\|_1 \Big\}
		\end{equation}
		Note that
		\begin{equation}\label{ineq:holder}
			\tr\left( \mE \sum_{k=0}^K \wh\delta_k \bW_k \right) \le \sum_{k=0}^K |\wh\delta_k | \cdot |\tr\left( \bW_k\mE \right)| \le \|\wh\bdelta\|_1 \max_{0\le k\le K} |\tr(\bW_k\mE)|.
		\end{equation}
		Since $\bbeta^{(0)}$ is supported on $\mS$, we can write $    \|\bbeta^{(0)}\|_1 - \|\wh\bbeta^{\lasso}\|_1 = \|\bbeta_{\mS}^{(0)}\|_1 - \|\bbeta_{\mS}^{(0)} + \wh\bdelta_{\mS}\|_1 - \|\wh\bdelta_{\mS^c}\|_1$.
		Substituting it into the inequality \eqref{ineq:lagrangian} and using the inequality \eqref{ineq:holder} yields
		\begin{align}
			0 \le& \frac{1}{p} \left\| \sum_{k=0}^K \wh\delta_k \bW_k \right\|_F^2 \le \frac{2}{p} \max_{0\le k\le K} |\tr(\bW_k\mE)| \cdot \|\wh\bdelta\|_1+ 2\lambda_0 \Big\{\|\bbeta_{\mS}^{(0)}\|_1 - \|\bbeta_{\mS}^{(0)} + \wh\bdelta_{\mS}\|_1 - \|\wh\bdelta_{\mS^c}\|_1\Big\} \nonumber\\
			\le & \lambda_0 \|\wh\bdelta\|_1+ 2\lambda_0 \Big\{ \|\wh\bdelta_{\mS}\|_1 - \|\wh\bdelta_{\mS^c}\|_1\Big\} \le \lambda_0 \Big\{3 \|\wh\bdelta_{\mS}\|_1 - \|\wh\bdelta_{\mS^c}\|_1 \Big\}, \label{ineq:bound}
		\end{align}
		where we have used the condition $\lambda_0 \ge  (2/p) \max_{0\le k\le K} |\tr(\bW_k\mE)|$ in the third inequality.
		Thus, we conclude that $\wh\bdelta \in \C_3(\mS)$.
		Then, by the RE Condition (C\ref{cond:RE}) and the inequality \eqref{ineq:bound}, we can obtain that
		\begin{align*}
			\kappa \|\wh\bdelta\|^2 \le \frac{1}{p} \left\| \sum_{k=0}^K \wh\delta_k \bW_k \right\|_F^2 \le \lambda_0 \Big\{3 \|\wh\bdelta_{\mS}\|_1 - \|\wh \bdelta_{\mS^c}\|_1 \Big\} \le 3 \lambda_0 \sqrt{s+1} \| \wh\bdelta\|,
		\end{align*}
		where the last inequality follows from \eqref{ineq:vector_21} in Lemma \ref{lemma:norm_ineq} with $\|\wh\bdelta_{\mS}\|_1 \le \sqrt{s+1} \|\wh\bdelta_{\mS}\| \le \sqrt{s+1} \|\wh\bdelta\|$.
		This implies the conclusion $\|\wh\bbeta^{\lasso} - \bbeta^{(0)} \| = \|\wh\bdelta\| \le (3/\kappa)\sqrt{s+1} \lambda_0 $.
		
		\noindent
		{\bf Step 2.}
		It remains to show that the event $\big\{\lambda_0 \ge  (2/p) \max_{0\le k\le K} |\tr(\bW_k\mE)| \big\}$ holds with high probability.
		Recall that $\tr(\bW_k\mE)=\by^\top\bW_k\by - \tr(\bW_k\bSigma_0) $.
		Further note that Condition (C\ref{cond:bounded_norm}) and norm inequality \eqref{ineq:matrix_21} in Lemma \ref{lemma:norm_ineq} imply that $\sup_{p,k}\|\bW_k\| \le \sup_{p,k}\|\bW_k\|_1 \le w $ and $\|\bSigma_0\| \le \|\bSigma_0^{1/2}\|^2 \le \|\bSigma_0^{1/2}\|_1^2 \le \sigma_{\max}$.
		Then by union bound and Lemma \ref{lemma:hanson_wright}, we have
		\begin{align*}
			P\left\{ \frac{2}{p} \max_{0\le k\le K} |\tr(\bW_k\mE)| \ge \lambda_0 \right\} \le& \sum_{k=0}^K P\left( \big|\by^\top\bW_k\by - \tr(\bW_k\bSigma_0) \big| \ge \frac{p\lambda_0}{2}\right) \\
			\le& 2(K+1)\exp\left\{ -\min\left( \frac{C_1 p \lambda_0^2}{w^2\sigma_{\max}^2},\frac{C_2 p \lambda_0}{w\sigma_{\max}} \right)\right\}.
		\end{align*}
		Thus, we should have the event $\big\{\lambda_0 \ge  (2/p) \max_{0\le k\le K} |\tr(\bW_k\mE)| \big\}$ holds with the probability at least $1 - 2(K+1)\exp\left\{ -\min\left( \frac{C_1 p \lambda_0^2}{w^2\sigma_{\max}^2},\frac{C_2 p \lambda_0}{w\sigma_{\max}} \right)\right\}$. This completes the proof of the theorem.
	\end{proof}

	\noindent\textbf{Remark.}
	In Theorem \ref{thm:lasso_est}, we establish the $\ell_2$-bound for the lasso estimator $\wh\bbeta^{\lasso}$.
	In the subsequent analysis for the LLA algorithm, this $\ell_2$-bound is used to obtain the $\ell_\infty$-bound $\| \wh\bbeta^{\lasso} - \bbeta^{(0)}\|_{\infty}$ by applying the norm inequality \eqref{ineq:vector_max2} in Lemma \ref{lemma:norm_ineq}.
	This will lead to an extra factor $\sqrt{s}$ between the two tuning parameters $\lambda_0$ and $\lambda$.
	In fact, we may get rid of the factor $\sqrt{s}$ by directly establishing the $\ell_\infty$-bound of the Lasso estimator.
	Then we can relax the the requirement of $\lambda$ in Theorem \ref{thm:convergence} to be $\lambda \ge c \lambda_0$ for some constant $c>0$.
	This can be done by replacing the restricted eigenvalue (RE) Condition (C\ref{cond:RE}) with a restricted invertibility factor (RIF) type condition \citep{zhang2012general}:
	\begin{enumerate} [(C1')]
		\setcounter{enumi}{4}
		\item (\textsc{Restricted Invertibility Factor}) Define the set $\C_3(\mS) \defeq \{\bdelta \in \mR^{K+1}: \|\bdelta_{\mS^c}\|_1 \le 3 \|\bdelta_{\mS}\|_1\}$.
		Assume $\{\bW_k\}_{0\le k \le K}$ satisfies the restricted invertibility factor (RIF) condition, that is,
		\begin{equation*}
			\frac{1}{p} \left\|\bSigma_{W} \bdelta \right\|_{\infty} \ge  \kappa' \| \bdelta\|_\infty, \quad \text{for all $\bdelta \in \C_3(\mS)$}
		\end{equation*}
		for some constant $\kappa'>0$, where $\bSigma_{W} = \{\tr(\bW_k\bW_l):0\le k,l\le K \} \in \mR^{(K+1)\times (K+1)}$.\label{cond:RIF}
	\end{enumerate}
	
	We next use Condition (C\ref{cond:RIF}') to establish the $\ell_{\infty}$-bound. By \eqref{ineq:bound} in the proof of Theorem \ref{thm:lasso_est}, we know that $\wh\bdelta = \wh\bbeta^{\lasso}-\bbeta^{(0)}\in \C_3(\mS)$. Thus, RIF condition implies that $\|\wh\bdelta\|_\infty \le \|\bSigma_{W} \wh\bdelta \|_{\infty} / (p\kappa')$.
	Note that
	\begin{align*}
		\bSigma_{W} \wh\bdelta = \bSigma_{W}(\wh\bbeta^{\lasso}-\bbeta^{(0)}) = \tr\left\{\bW_k \left(\sum_{l=0}^K \wh\beta^{\lasso}_l\bW_l   - \by\by^\top\right)\right\}_{0\le k\le K} + \tr(\bW_k \mE)_{0\le k\le K}.
	\end{align*}
	Since $ p^{-1} \max_{0\le k\le K} |\tr(\bW_k\mE)| \le \lambda_0 / 2$ by the assumption, we are left with bounding the first term.
	The optimality of $\wh\bbeta^{\lasso}$ implies that
	\begin{equation*}
		\frac{1}{2p}\left\|\by\by^\top - \sum_{l=0}^K \wh\beta^{\lasso}_l\bW_l \right\|_F^2 +\lambda_0  \| \wh \bbeta^{\lasso} \|_1 \le  \frac{1}{2p}\left\|\by\by^\top - \sum_{l=0}^K \wh\beta^{\lasso}_l\bW_l - t\bW_k \right\|_F^2 +\lambda_0  \| \wh \bbeta^{\lasso} \|_1 + \lambda_0 |t|,
	\end{equation*}
	for any $t \in \mR$ and $0\le k\le K$.
	Then we have
	\begin{align*}
		\frac{t}{p}\tr\left\{\bW_k \left(\by\by^\top - \sum_{l=0}^K \wh\beta^{\lasso}_l\bW_l\right)\right\} \le \frac{t^2}{2p} \|\bW_k\|_F^2 + \lambda_0 |t| \le \frac{w^2 t^2}{2} + \lambda_0 |t|,
	\end{align*}
	where we have used Condition (C\ref{cond:bounded_norm}) and $\|\bW_k\|_F^2 \le p \|\bW_k\|_1^2 \le pw^2$ in the last inequality.
	Since $t$ is arbitrary, we conclude that $\left|\tr\left\{\bW_k \left(\by\by^\top - \sum_{l=0}^K \wh\beta^{\lasso}_l\bW_l\right)\right\} \right| \le \lambda_0$ for each $0\le k \le K$.
	Arranging these results, we conclude that
	\begin{align*}
		\|\wh\bbeta^{\lasso}-\bbeta^{(0)}\|_\infty = \|\wh\bdelta\|_\infty \le \frac{1}{p\kappa'} \|\bSigma_{W} \wh\bdelta \|_{\infty}\le  \frac{1}{\kappa'}(\frac{\lambda_0}{2} + \lambda_0) = \frac{3}{2\kappa'}\lambda_0.
	\end{align*}
	This gives the desired $\ell_\infty$-bound for the Lasso estimator. We can see that the error bound $ \|\wh\bbeta^{\lasso}-\bbeta^{(0)}\|_\infty =O (\lambda_0)$ is free of the factor $\sqrt{s}$.

	\subsection{Proof of Theorem \ref{thm:convergence}}
	\label{proof of thm:convergence}
	
	Following the idea of \cite{fan2014strong}, we prove the results in two steps.
	In the first step, we prove that the LLA algorithm converges under the given event.
	In the second step, we give the upper bounds for the three probabilities.
	In the last step, we show that the LLA algorithm converges to the oracle estimator with probability tending to one under the assumed conditions.
	
	\noindent
	{\bf Step 1.}
	Recall that $a_0 = \min\{1, a_2 \}$. 
	We first define three events as
	\begin{align*}
		&E_0 = \Big\{\| \wh \bbeta^{\initial} - \bbeta^{(0)} \|_{\infty} \le a_0 \lambda \Big\},\\
		&E_1 = \Big\{\|\nabla_{\mS^c} Q(\wh \bbeta^{\oracle}_{\mS})\|_{\infty} < a_1 \lambda\Big\},\\
		&E_2 = \Big\{\|\wh \bbeta^{\oracle}_{\mS} \|_{\min} \ge \gamma\lambda\Big\}.
	\end{align*}
	In the following, we prove that the LLA algorithm converges under the event $E_1\cap E_2 \cap E_3$ in two further steps.
	We first show that the LLA algorithm initialized by $\wh \bbeta^{\initial}$ finds $\wh \bbeta^{\oracle}$ after one iteration, under the event $E_0 \cap E_1$.
	We next show that if $\wh \bbeta^{\oracle}$ is obtained, then the LLA algorithm will find $\wh \bbeta^{\oracle}$ again in the next iteration, under the event $E_1 \cap E_2$.
	Then, we can immediately obtain that the LLA algorithm initialized by $\wh \bbeta^{\initial}$ should converge to $\wh \bbeta^{\oracle}$ after two iterations with probability at least $ P(E_0 \cap E_1 \cap E_2) \ge 1 - P(E_0^c)-P(E_1^c)-P(E_2^c) = 1-\delta_0-\delta_1-\delta_2$.
	
	\noindent
	{\bf Step 1.1.}
	Recall that $\wh\bbeta^{(0)} = \wh \bbeta^{\initial}$. Under the event $E_0$, due to Assumption \ref{cond:minimal_signal}, we have $\wh\beta_k^{(0)}\le \| \wh \bbeta^{(0)} - \bbeta^{(0)} \|_{\infty} \le a_0 \lambda\le a_2\lambda$ for $k \in \mS^c$, and $\wh \beta_k^{(0)} \ge \|\bbeta_{\mS}^{(0)}\|_{\min} - \| \wh \bbeta^{(0)} - \bbeta^{(0)} \|_{\infty} > \gamma\lambda$ for $k \in \mS$.
	By property (iv) of $p_\lambda(\cdot)$, we have $p'_\lambda(|\wh \beta_k^{(0)}|) = 0 $ for $k \in \mS$.
	Thus, according to step (2.a) of the Algorithm \ref{alg:LLA}, $\wh\bbeta^{(1)}$ should be the solution to the problem
	\begin{align}\label{eq:first_iter}
		\wh\bbeta^{(1)} = \argmin_{\bbeta} Q(\bbeta) + \sum_{k\in \mS^c } p'_\lambda(|\wh\bbeta_k^{(0)}|) |\beta_k|.
	\end{align}
	By properties (ii) and (iii), $p'_\lambda(|\wh\bbeta_k^{(0)}|) \ge a_1 \lambda$ holds for $k\in\mS^c$.
	We next show that $\wh \bbeta^{\oracle}$ is the unique global solution to \eqref{eq:first_iter} under the event $E_1$.
	By Condition (C\ref{cond:minimal_eigen}), we can verify that $\wh\beta^{\oracle}$ is the unique solution to $\argmin_{\bbeta: \bbeta_{\mS^c=\zero}} Q(\bbeta)$ and
	\begin{equation}\label{eq:zero_gradient}
		\nabla_{\mS} Q(\wh \bbeta^{\oracle}) \defeq \Big(\nabla_{k} Q(\wh \bbeta^{\oracle}), k\in\mS \Big) = \zero.
	\end{equation}
	Thus, for any $\bbeta$ we have
	\begin{align}\label{eq:convexity_Q2}
		\begin{aligned}
			Q(\bbeta) \ge& Q(\wh \bbeta^{\oracle}) + \sum_{k=0}^K \nabla_k Q(\wh \bbeta^{\oracle}) (\beta_k - \wh \bbeta_{k}^{\oracle})\\
			=&Q(\wh \bbeta^{\oracle}) + \sum_{k\in \mS^c} \nabla_k Q(\wh \bbeta^{\oracle}) (\beta_k - \wh \bbeta_{k}^{\oracle}).
		\end{aligned}
	\end{align}
	By \eqref{eq:convexity_Q2}, $\wh \bbeta_{\mS^c}^{\oracle} = \zero$ and under the event $E_1$, for any $\bbeta$ we have
	\begin{align*}
		&\left\{Q(\bbeta) + \sum_{k\in \mS^c } p'_\lambda(|\wh\beta_k^{(0)}|) |\beta_k| \right\} - \left\{Q(\wh \bbeta^{\oracle}) + \sum_{k\in \mS^c } p'_\lambda(|\wh\beta_k^{(0)}|) |\wh\beta_k^{\oracle}| \right\} \\
		\ge& \sum_{k \in \mS^c} \left\{p'_\lambda(|\wh\beta_k^{(0)}|) +  \nabla_k Q(\wh \bbeta^{\oracle}) \sign(\beta_k)\right\} |\beta_k|\\
		\ge& \sum_{k \in \mS^c} \left\{a_1 \lambda + \nabla_k Q(\wh \bbeta^{\oracle}) \sign(\beta_k) \right\} |\beta_k| \ge 0.
	\end{align*}
	The strict inequality holds unless $\beta_k =0 $ for all $k \in \mS^c$. By uniqueness of the oracle estimator, we should have $\wh \bbeta^{\oracle}$ is the unique solution to \eqref{eq:first_iter}.
	This proves $\wh\bbeta^{(1)} = \wh \bbeta^{\oracle}$.
	
	\noindent
	{\bf Step 1.2.} Given the LLA algorithm finds the oracle estimator, we denote $\wh \bbeta $ as the solution to the optimization problem in the next iteration of the LLA algorithm.
	By using $\wh \bbeta_{\mS^c}^{\oracle} = \zero$ and  $\nabla_k Q(\wh \bbeta^{\oracle}) = 0, \forall k \in \mS$, then under the event $E_2= \big\{\|\wh \bbeta^{\oracle}_{\mS} \|_{\min} \ge \gamma\lambda\big\}$, we have
	\begin{equation}\label{eq:2nd_iter}
		\wh\bbeta = \argmin_{\bbeta} Q(\bbeta) + \sum_{k\in \mS^c } p'_\lambda(0) |\beta_k|.
	\end{equation}
	Recall that $p'_\lambda(0) \ge a_1 \lambda$.
	Then by similar procedures in Step 1, we can show that $\wh \bbeta^{\oracle}$ is the unique solution to \eqref{eq:2nd_iter}, under the event $E_1=\big\{\|\nabla_{\mS^c} Q(\wh \bbeta^{\oracle}_{\mS})\|_{\infty} < a_1 \lambda\big\}$.
	Hence, the LLA algorithm converges, under the event $E_1\cap E_2$.
	This completes the proof of Step 1.
	
	\noindent
	{\bf Step 2.}
	We next give the upper bounds for $\delta_0 = P(E_0^c)$, $\delta_1 = P(E_1^c)$ and $\delta_2 = P(E_2^c)$ under the additional conditions.
	The three bounds are derived in the three further steps.
	
	\noindent
	{\bf Step 2.1.} Note that we use $\wh\bbeta^{\lasso}$ as the initial estimator.
	Then by Theorem \ref{thm:lasso_est} and the condition $\lambda\ge (3\sqrt{s+1}\lambda_0 ) / (a_0 \kappa)$, we have
	\begin{align*}
		\| \wh\bbeta^{\initial} - \bbeta^{(0)}\|_{\infty} \le \| \wh\bbeta^{\lasso} - \bbeta^{(0)}\| \le \frac{3 }{\kappa}\sqrt{s+1} \lambda_0 \le a_0 \lambda
	\end{align*}
	holds with probability at least $1-\delta_0'$ with
	\begin{align*}
		\delta_0' = 2(K+1)\exp\left\{ -\min\left( \frac{C_1 p \lambda_0^2}{w^2\sigma_{\max}^2},\frac{C_2 p \lambda_0}{w\sigma_{\max}} \right)\right\}.
	\end{align*}
	Consequently, we should have $\delta_0 = P(E_0^c) = P(\|\wh\bbeta^{\initial} - \bbeta^{(0)}\|_{\infty} > a_0 \lambda)\le \delta_0'$.
	This completes the proof of Step 2.1.

	\noindent
	{\bf Step 2.2.}
	We next bound the probability $\delta_1 = P(E_1^c)=P\big(\|\nabla_{\mS^c} Q(\wh \bbeta^{\oracle}_{\mS})\|_{\infty}\ge a_1 \lambda\big)$.
	Let $\bY = \vec(\by\by^\top)\in \mR^{p^2}$, $\bE = \vec(\mE)\in \mR^{p^2}$, and $\bV_k = \vec(\bW_k)\in \mR^{p^2}$.
	Further define $\V = (\bV_k: 1\le k\le K)\in\mR^{p^2 \times K}$, $\V_{\mS} = (\bV_k: k\in\mS)\in\mR^{p^2 \times (s+1)}$, and $\V_{\mS^c} = (\bV_k: k\in\mS^c)\in\mR^{p^2 \times (K-s)}$.
	Then we should have $\bY = \V_{\mS}\bbeta_{\mS}^{(0)} +\bE$, and $Q(\bbeta) = (2 p)^{-1}\|\bY - \V\bbeta \|^2$.
	Let $\H_{\mS} \defeq \V_{\mS}(\V_{\mS}^\top \V_{\mS})^{-1} \V_{\mS}^\top \in \mR^{p^2 \times p^2}$.
	Then we can compute that $\nabla_{\mS^c}Q(\wh \bbeta^{\oracle} ) = \big\{\nabla_{k}Q(\wh \bbeta^{\oracle} ) , k\in \mS^c\big\} = - p^{-1}\V_{\mS^c}^\top (\bI_{p^2} - \H_{\mS}) \bE $.
	By union bound, we have
	\begin{align}
		\delta_1 =& P\big(\|\nabla_{\mS^c} Q(\wh \bbeta^{\oracle}_{\mS})\|_{\infty}\ge a_1 \lambda\big)
		\le \sum_{k \in \mS^c} P\Big(|\bV_k^\top (\bI_{p^2} - \H_{\mS}) \bE | \ge p a_1  \lambda\Big)\nonumber \\
		\le& \sum_{k\in \mS^c} \bigg\{ P\Big(|\bV_k^\top\bE | \ge p a_1  \lambda / 2\Big) + P\Big(|\bV_k^\top \H_{\mS} \bE | \ge p a_1  \lambda/2 \Big) \bigg\}. \label{eq:delta_1_1}
	\end{align}
	Note that $\bV_k^\top\bE  = \tr(\bW_k \mE) = \tr\{\bW_k (\by\by^\top - \bSigma_0)\} = \by^\top \bW_k \by - \tr(\bW_k \bSigma_0)$. Then by Lemma \ref{lemma:hanson_wright} and Conditions (C\ref{cond:distribution}) and (C\ref{cond:bounded_norm}), we have $P\Big(|\bV_k^\top\bE | \ge p a_1  \lambda / 2\Big)=$
	\begin{align*}
		P\Big( \big|\by^\top \bW_k \by - \tr(\bW_k \bSigma_0)\big| > pa_1\lambda/2 \Big) \le 2\exp\left\{-\min\left(\frac{C_3 a_1^2 p \lambda^2 }{w^2\sigma_{\max}^2 }, \frac{C_4 a_1 p \lambda}{ w\sigma_{\max}}\right) \right\}.
	\end{align*}
	By Condition (C\ref{cond:bounded_norm}) and inequality \eqref{ineq:matrix_21} in Lemma \ref{lemma:norm_ineq}, we have $\|\bW_k\|\le\|\bW_k\|_1\le w$ for each $1\le k \le K$.
	Then we can derive that
	\begin{align*}
		|\bV_k^\top \H_{\mS} \bE |
		\le & \| (\V_{\mS}^\top\V_{\mS})^{-1} \V_{\mS}^\top\bV_k \|\|\V_{\mS}^\top \bE\|
		\le \| (\V_{\mS}^\top\V_{\mS})^{-1}\| \|\V_{\mS}^\top\bV_k\| \| \V_{\mS}^\top \bE\|  \\
		\le& \| \bSigma_{W, \mS}^{-1}\| \Big\{\sqrt{s+1}\max_{l \in \mS}|\tr(\bW_l \bW_k)|\Big\} \Big\{\sqrt{s+1}\max_{l \in \mS}|\tr(\bW_l \mE)|\Big\} \\
		\le&\Big\{ (p\tau_{\min})^{-1} \Big\} \Big\{\sqrt{s+1} (pw^2)\Big\} \Big\{\sqrt{s+1} \max_{l \in \mS}|\tr(\bW_l \mE)|\Big\}\\
		=  & \tau_{\min}^{-1} w^2 (s+1) \max_{l \in \mS} \big|\by^\top \bW_l \by - \tr(\bW_l \bSigma_0)\big|,
	\end{align*}
	where the third inequality is due to inequality \eqref{ineq:vector_max2} in Lemma \ref{lemma:norm_ineq}, and the last inequality is due to the following two facts: (i) by Condition (C\ref{cond:bounded_norm}) and inequality \eqref{ineq:matrix_21} in Lemma \ref{lemma:norm_ineq}, we have $|\tr(\bW_l \bW_k)| \le p\|\bW_l\|\|\bW_k\|\le pw^2$; (ii) by Condition (C\ref{cond:minimal_eigen}), we have $\big\|\bSigma_{W, \mS}^{-1} \big\| = \lambda_{\min}^{-1}(\bSigma_{W, \mS})\le (p\tau_{\min})^{-1}$.
	Then by Lemma \ref{lemma:hanson_wright} and Conditions (C\ref{cond:distribution}) and (C\ref{cond:bounded_norm}), we have $P\Big(|\bV_k^\top\H_{\mS} \bE | \ge p^2 a_1  \lambda / 2\Big)\le$
	\begin{align*}
		&\sum_{l \in \mS} P\left\{ \big|\by^\top \bW_l \by - \tr(\bW_l \bSigma_0)\big| > \frac{a_1 \tau_{\min} p\lambda }{2(s+1) w^2} \right\}\\
		\le& 2(s+1) \exp\left[-\min\left\{\frac{C_5 a_1^2 \tau_{\min}^2 p \lambda^2}{w^6\sigma_{\max}^2(s+1)^2}, \frac{C_6 a_1 \tau_{\min}p \lambda}{w^3 \sigma_{\max}(s+1)} \right\} \right]
	\end{align*}
	Together with \eqref{eq:delta_1_1}, we have
	\begin{align*}
		\delta_1 \le& 2 (K-s) \exp\left\{-\min\left(\frac{C_3 a_1^2 p \lambda^2 }{w^2\sigma_{\max}^2 }, \frac{C_4 a_1 p \lambda}{ w\sigma_{\max}}\right) \right\} \\ &+2(K-s)(s+1)\exp\left[-\min\left\{\frac{C_5 a_1^2 \tau_{\min}^2 p \lambda^2}{w^6\sigma_{\max}^2(s+1)^2}, \frac{C_6 a_1 \tau_{\min}p \lambda}{w^3 \sigma_{\max}(s+1)} \right\} \right].
	\end{align*}

	\noindent
	{\bf Step 2.3.}
	We next bound $\delta_2=P(E_2^c) = P(\|\wh \bbeta^{\oracle}_{\mS} \|_{\min} < \gamma\lambda)$. Note that $\wh \bbeta_{\mS}^{\oracle}  = \bbeta_{\mS}^{(0)} +(\V_{\mS}^\top \V_{\mS})^{-1} \V_{\mS}^\top\bE $, and thus $\|\wh \bbeta_{\mS}^{\oracle}\|_{\min} \ge \| \bbeta_{\mS}^{(0)}\|_{\min} -\|(\V_{\mS}^\top \V_{\mS})^{-1} \V_{\mS}^\top\bE\|_{\infty} $.
	Then we have
	\begin{equation}\label{eq:delta_2_1}
		\delta_2 \le P\Big(\|(\V_{\mS}^\top \V_{\mS})^{-1} \V_{\mS}^\top\bE\|_{\infty} \ge \| \bbeta_{\mS}^{(0)}\|_{\min} - \gamma \lambda\Big).
	\end{equation}
	Note that
	\begin{align*}
		&\|(\V_{\mS}^\top \V_{\mS})^{-1} \V_{\mS}^\top\bE\|_{\infty}
		\le \|(\V_{\mS}^\top \V_{\mS})^{-1}\V_{\mS}^\top\bE\|\le\|(\V_{\mS}^\top \V_{\mS})^{-1}\| \|\V_{\mS}^\top\bE\|\\
		\le & (p\tau_{\min})^{-1} \sqrt{s+1}\|\V_{\mS}^\top\bE\|_{\infty} = \sqrt{s+1}(p\tau_{\min})^{-1} \max_{k\in \mS}|\by^\top \bW_k \by - \tr(\bW_k \bSigma_0)| ,
	\end{align*}
	where the first inequality is due to inequality \eqref{ineq:vector_max2} in Lemma \ref{lemma:norm_ineq}, and the third inequality is due to Condition (C\ref{cond:minimal_eigen}) and \eqref{ineq:vector_max2} in Lemma \ref{lemma:norm_ineq}.
	Together with \eqref{eq:delta_2_1} and using Lemma \ref{lemma:hanson_wright}, we have
	\begin{align*}
		\delta_{2}\le& \sum_{k\in \mS} P\left\{ \big|\by^\top \bW_k \by - \tr(\bW_k \bSigma_0) \big| \ge \frac{\tau_{\min}p}{(s+1)^{1/2}}(\| \bbeta_{\mS}^{(0)}\|_{\min} - \gamma \lambda) \right\}\\
		\le&2 (s+1)\exp \left[-\min\left\{\frac{C_5 \tau_{\min}^2 p (\| \bbeta_{\mS}^{(0)}\|_{\min} - \gamma \lambda)^2 }{w^2\sigma_{\max}^2 (s+1)}, \frac{C_6 \tau_{\min} p (\| \bbeta_{\mS}^{(0)}\|_{\min} - \gamma \lambda) }{w\sigma_{\max}(s+1)^{1/2}} \right\} \right].
	\end{align*}
	This competes the proof of Step 2.
	
	\noindent{\bf Step 3.}
	To obtain the desired result, it suffices to prove that $\delta_1$, $\delta_2$, and $\delta_0'$ tend to $0$ as $p\to \infty$ under the assumed conditions.
	By Condition (C\ref{cond:minimal_signal}), we know that $\| \bbeta_{\mS}^{(0)}\|_{\min} - \gamma \lambda > \lambda$.
	Then, by inspecting the forms of upper bounds of $\delta_0,\ \delta_1, \delta_2$, it remains to prove that
	\begin{align}\label{eq:probs}
		\min\left\{\frac{p\lambda^2}{s^2}, \frac{p\lambda}{s}, \frac{p\lambda^2}{s}, \frac{p\lambda}{\sqrt{s}}, p\lambda_0^2, p\lambda_0,\right\} \Big/ \log(K) \to 0
	\end{align}
	as $p\to\infty$.
	Further note $\lambda\ge (3\sqrt{s+1}\lambda_0 ) / (a_0 \kappa)$.
	Then we can easily verify that, \eqref{eq:probs} holds as long as $p\lambda_0^2/\{s\log(K)\}\to\infty$ as $p\to \infty$.
	This completes the proof of Step 3 and completes the proof of the theorem.

	\subsection{Proof of Theorem \ref{thm:asymptotic_normality}}
	\label{proof of thm:asymptotic_normality}

	Recall that the oracle estimator is computed with the knowledge of the true support set of $\bbeta^{(0)}$. That is, $    \wh\bbeta^{\oracle} = \argmin_{\bbeta: \bbeta_{\mS^c=\zero}} Q(\bbeta)$, where $Q(\bbeta)$ is defined in \eqref{eq:Qbeta}.
	Equivalently, we should have
	\begin{align*}
		\wh\bbeta^{\oracle}_{\mS} - \bbeta^{(0)}_{\mS} = \bSigma_{W,\mS}^{-1} \bSigma_{WY, \mS}- \bbeta^{(0)}_{\mS} =  \bSigma_{W,\mS}^{-1} S_p,
	\end{align*}
	where $\bSigma_{W,\mS} = \{\tr(\bW_k\bW_l): k,l\in \mS \}\in \mR^{(s+1)\times (s+1)}$, $\bSigma_{WY,\mS} = \{\by^\top\bW_k\by: k\in\mS \}^\top \in \mR^{s+1}$, and
	\begin{align*}
		S_p =  
		\begin{pmatrix}
			\vec^\top( \bW_0 ) \\
			\vdots\\
			\vec^\top( \bW_s  )
		\end{pmatrix}\vec(\by\by^\top - \bSigma_0) =
		\begin{pmatrix}
			\vec^\top(\bSigma_0^{1/2} \bW_0 \bSigma_0^{1/2} ) \\
			\vdots\\
			\vec^\top(\bSigma_0^{1/2} \bW_s \bSigma_0^{1/2} )
		\end{pmatrix}\vec(\bZ\bZ^\top - \bI_p).
	\end{align*}
	Here we have used the facts that $\by = \bSigma^{1/2}\bZ$, and $\vec(\bM_1 \bM_2 \bM_3) = (\bM_3^\top \otimes \bM_1) \vec(\bM_2) $ for three arbitrary matrices $\bM_1$, $\bM_2$, $\bM_3$ of shapes $p_1\times p_2$, $p_2\times p_3$, and $p_3\times p_4$ \citep[see, e.g., (1.3.6) in][p. 28]{golub2013matrix}.
	Re-express $\bA = (\ba_1, \dots, \ba_L)^\top$, where $\ba_l = (a_{l0},\dots, a_{ls})^\top \in\mR^{s+1}$.
	Let $\wt S_p = (s+1)^{-1/2} \bA \bSigma_{W,\mS} ( \wh\bbeta^{\oracle}_{\mS} - \bbeta^{(0)}_{\mS}) = (s+1)^{-1/2}\bA S_p$.
	Then we should have
	\begin{align*}
		\wt S_p =
		\begin{pmatrix}
			\vec^\top(\bDelta_1 ) \\
			\vdots\\
			\vec^\top(\bDelta_L)
		\end{pmatrix}\vec(\bZ\bZ^\top - \bI_p) \in\mR^{L},
	\end{align*}
	where $\bDelta_l = (s+1)^{-1/2} \sum_{k=0}^s a_{lk} (\bSigma_0^{1/2} \bW_k \bSigma_0^{1/2} )$ for $1\le l\le L$.
	Further note that
	\begin{align*}
		\frac{1}{\sqrt{s+1}}\max_{1\le l\le L} \sum_{k=0}^s |a_{lk}| = \frac{1}{\sqrt{s+1}} \|\bA\|_{\infty} \le \|\bA\| <\infty,
	\end{align*}
	where the first inequality follows from \eqref{ineq:matrix_21} in Lemma \ref{lemma:norm_ineq}.
	By Condition (C\ref{cond:bounded_norm}), we have $\sup_{p,k} \|\bSigma_0^{1/2} \bW_k \bSigma_0^{1/2}\|_1 <\infty $.
	Then it follows that
	\begin{align*}
		\sup_{p} \| \bDelta_l\|_1 \le &
		\sup_{p} \frac{1}{\sqrt{s+1}}\sum_{k=0}^s |a_{lk}| \cdot \|\bSigma_0^{1/2} \bW_k \bSigma_0^{1/2}\|_1 \\
		\le&  \bigg\{\frac{1}{\sqrt{s+1}}\max_{1\le l\le L} \sum_{k=0}^s |a_{lk}|\bigg\} \bigg\{ \sup_{p,k}\|\bSigma_0^{1/2} \bW_k \bSigma_0^{1/2}\|_1\bigg\} < \infty,
	\end{align*}
	for each $1\le l\le L$.
	By using Lemma \ref{lemma:normality}, we know that
	\begin{align*}
		\cov(\wt S_p) = 2 \{\tr(\bDelta_k\bDelta_l):1\le l \le L \} + ( \mu_4 - 3) \{\tr(\bDelta_k \circ \bDelta_l): 1\le k,l\le L \}.
	\end{align*}
	By assumed conditions in the theorem, we can verify that $p^{-1}\cov(\wt S_p) \to \bC$. Then by Lemma \ref{lemma:normality}, we should have
	\begin{align*}
		\sqrt{p/(s+1)} \bA (p^{-1}\bSigma_{W,\mS}) ( \wh\bbeta^{\oracle}_{\mS} - \bbeta^{(0)}_{\mS}) = p^{-1/2}\wt S_p \to_d \mN(\zero, \bC).
	\end{align*}
	By Condition (C\ref{cond:convergence}), we know that $p^{-1} \bSigma_{W,\mS} \to \bG_0$ in the Frobenius norm.
	With the help of Slutsky's theorem, we obtain that $    \sqrt{p/(s+1)}\bA\bG_0 \Big(\wh\bbeta^{\oracle}_{\mS} - \bbeta_{\mS}^{(0)}\Big) \to_d \mN(\zero,\bC)$ as $p\to\infty$.
	This completes the proof of the theorem.

	\subsection{Proofs of Theorems \ref{thm:lasso_est_n} and \ref{thm:convergence_n}}
	\label{proof of additional thms}

	\textbf{Proof of Theorem \ref{thm:lasso_est_n}.}
	The proof is very similar to the proof of Theorem \ref{thm:lasso_est} in Appendix \ref{proof of thm:lasso_est}.
	Note that $\by_i\by_i^\top = \sum_{k=0}^K \beta_k^{(0)} \bW_k + \mE_i$ for $1\le i \le n$.
	Define $\wh \bdelta \defeq \wh\bbeta_n^{\lasso} - \bbeta^{(0)}$.
	We first show that, if $\lambda_0 \ge (2/p) \max_{0\le k\le K} |n^{-1}\sum_{i=1}^n\tr(\bW_k\mE_i)|$ holds, then $\wh\bdelta \in \C_3(\mS) \defeq \{\bdelta \in \mR^{K+1}: \|\bdelta_{\mS^c}\|_1 \le 3 \|\bdelta_{\mS}\|_1\}$.
	Subsequently, we show that $\big\{\lambda_0 \ge (2/p) \max_{0\le k\le K} |n^{-1}\sum_{i=1}^n\tr(\bW_k\mE_i)|\big\}$ holds with high probability.
	
	\noindent
	{\bf Step 1.}
	Since $\wh \bbeta_n^{\lasso}$ is the solution to $\argmin_{\bbeta} Q_n(\bbeta) + \lambda_0  \|\bbeta\|_1$, we have
	\begin{align*}
		Q_n(\wh\bbeta_n^{\lasso}) +\lambda_0  \| \wh \bbeta_n^{\lasso} \|_1
		=&  \frac{1}{2n p}\sum_{i=1}^n\left\|\mE_i - \sum_{k = 0}^K \wh\delta_k \bW_k \right\|_F^2 + \lambda_0  \| \wh \bbeta_n^{\lasso} \|_1 \\
		\le& \frac{1}{2n p}\sum_{i=1}^n\|\mE_i \|_F^2 + \lambda_0  \|\bbeta^{(0)} \|_1.
	\end{align*}
	Rearranging the above inequality, we obtain that
	\begin{equation}\label{ineq:lagrangian_n}
		0 \le \frac{1}{2p} \left\| \sum_{k=0}^K \wh\delta_k \bW_k \right\|_F^2 \le \frac{1}{np}\sum_{i=1}^n \tr\left( \mE_i \sum_{k=0}^K \wh\delta_k \bW_k \right) + \lambda_0 \Big\{\|\bbeta^{(0)}\|_1 - \|\wh\bbeta^{\lasso}\|_1 \Big\}
	\end{equation}
	Note that
	\begin{equation}\label{ineq:holder_n}
		\frac{1}{n}\sum_{i=1}^n\tr\left( \mE_i \sum_{k=0}^K \wh\delta_k \bW_k \right) \le \sum_{k=0}^K |\wh\delta_k | \cdot \Big|n^{-1}\sum_{i=1}^n\tr\left( \bW_k\mE_i \right) \Big| \le \|\wh\bdelta\|_1 \max_{0\le k\le K} \Big|n^{-1}\sum_{i=1}^n\tr\left( \bW_k\mE_i \right)\Big|.
	\end{equation}
	Since $\bbeta^{(0)}$ is supported on $\mS$, we can write $    \|\bbeta^{(0)}\|_1 - \|\wh\bbeta^{\lasso}\|_1 = \|\bbeta_{\mS}^{(0)}\|_1 - \|\bbeta_{\mS}^{(0)} + \wh\bdelta_{\mS}\|_1 - \|\wh\bdelta_{\mS^c}\|_1$.
	Substituting it into the inequality \eqref{ineq:lagrangian_n} and using the inequality \eqref{ineq:holder_n} yields
	\begin{align}
		0 \le& \frac{1}{p} \left\| \sum_{k=0}^K \wh\delta_k \bW_k \right\|_F^2 \le \frac{2}{p} \max_{0\le k\le K} \Big|n^{-1}\sum_{i=1}^n\tr(\bW_k\mE_i) \Big| \cdot \|\wh\bdelta\|_1+ 2\lambda_0 \Big\{\|\bbeta_{\mS}^{(0)}\|_1 - \|\bbeta_{\mS}^{(0)} + \wh\bdelta_{\mS}\|_1 - \|\wh\bdelta_{\mS^c}\|_1\Big\} \nonumber\\
		\le & \lambda_0 \|\wh\bdelta\|_1+ 2\lambda_0 \Big\{ \|\wh\bdelta_{\mS}\|_1 - \|\wh\bdelta_{\mS^c}\|_1\Big\} \le \lambda_0 \Big\{3 \|\wh\bdelta_{\mS}\|_1 - \|\wh\bdelta_{\mS^c}\|_1 \Big\}, \label{ineq:bound_n}
	\end{align}
	where we have used the condition $\lambda_0 \ge (2/p) \max_{0\le k\le K} |n^{-1}\sum_{i=1}^n\tr(\bW_k\mE_i)|$ in the third inequality.
	Thus, we conclude that $\wh\bdelta \in \C_3(\mS)$.
	Then, by the RE Condition (C\ref{cond:RE}) and the inequality \eqref{ineq:bound_n}, we can obtain that
	\begin{align*}
		\kappa \|\wh\bdelta\|^2 \le \frac{1}{p} \left\| \sum_{k=0}^K \wh\delta_k \bW_k \right\|_F^2 \le \lambda_0 \Big\{3 \|\wh\bdelta_{\mS}\|_1 - \|\wh \bdelta_{\mS^c}\|_1 \Big\} \le 3 \lambda_0 \sqrt{s+1} \| \wh\bdelta\|,
	\end{align*}
	where the last inequality follows from \eqref{ineq:vector_21} in Lemma \ref{lemma:norm_ineq} with $\|\wh\bdelta_{\mS}\|_1 \le \sqrt{s+1} \|\wh\bdelta_{\mS}\| \le \sqrt{s+1} \|\wh\bdelta\|$.
	This implies the conclusion $\|\wh\bbeta_n^{\lasso} - \bbeta^{(0)} \| = \|\wh\bdelta\| \le (3/\kappa)\sqrt{s+1} \lambda_0 $.
	
	\noindent
	{\bf Step 2.}
	It remains to show that the event $\big\{\lambda_0 \ge (2/p) \max_{0\le k\le K} |n^{-1}\sum_{i=1}^n\tr(\bW_k\mE_i)| \big\}$ holds with high probability.
	Recall that $n^{-1}\sum_{i=1}^n\tr(\bW_k\mE_i)=n^{-1}\sum_{i=1}^n\by_i^\top\bW_k\by_i - \tr(\bW_k\bSigma_0) $.
	Further note that Condition (C\ref{cond:bounded_norm}) and norm inequality \eqref{ineq:matrix_21} in Lemma \ref{lemma:norm_ineq} imply that $\sup_{p,k}\|\bW_k\| \le \sup_{p,k}\|\bW_k\|_1 \le w $ and $\|\bSigma_0\| \le \|\bSigma_0^{1/2}\|^2 \le \|\bSigma_0^{1/2}\|_1^2 \le \sigma_{\max}$.
	Then by union bound and Lemma \ref{lemma:hanson_wright}, we have
	\begin{align*}
		P\left\{ \frac{2}{p} \max_{0\le k\le K} |n^{-1}\sum_{i=1}^n\tr(\bW_k\mE_i)| \ge \lambda_0 \right\} \le& \sum_{k=0}^K P\left( \Big|n^{-1}\sum_{i=1}^n\by_i^\top\bW_k\by_i - \tr(\bW_k\bSigma_0) \Big| \ge \frac{p\lambda_0}{2}\right) \\
		\le& 2(K+1)\exp\left\{ -\min\left( \frac{C_1 n p \lambda_0^2}{w^2\sigma_{\max}^2},\frac{C_2 n p \lambda_0}{w\sigma_{\max}} \right)\right\}.
	\end{align*}
	Thus, we should have the event $\big\{\lambda_0 \ge (2/p) \max_{0\le k\le K} |n^{-1}\sum_{i=1}^n\tr(\bW_k\mE_i)| \big\}$ holds with the probability at least $1 - 2(K+1)\exp\left\{ -\min\left( \frac{C_1 n p \lambda_0^2}{w^2\sigma_{\max}^2},\frac{C_2 n p \lambda_0}{w\sigma_{\max}} \right)\right\}$. This completes the proof of the theorem.

	\noindent \textbf{Proof of Theorem \ref{thm:convergence_n}.}
	The proof is very similar to the proof of Theorem \ref{thm:convergence} in Appendix \ref{proof of thm:convergence}.
	There are three steps.
	In the first step, we need to prove that the LLA algorithm converges under the event $E_1\cap E_2 \cap E_3$, where
	\begin{align*}
		&E_0 = \Big\{\| \wh \bbeta_n^{\lasso} - \bbeta^{(0)} \|_{\infty} \le a_0 \lambda \Big\},\\
		&E_1 = \Big\{\|\nabla_{\mS^c} Q(\wh \bbeta^{\oracle}_{\mS})\|_{\infty} < a_1 \lambda\Big\},\\
		&E_2 = \Big\{\|\wh \bbeta^{\oracle}_{\mS} \|_{\min} \ge \gamma\lambda\Big\}.
	\end{align*}
	In the second step, we derive the upper bounds for $P(E_0^c)$, $P(E_1^c)$ and $P(E_2^c)$.
	In the last step, we show that the LLA algorithm converges to the oracle estimator with probability tending to one under the assumed conditions.
	Since the first step is almost the same as that in Appendix \ref{proof of thm:convergence}, we omit the details.

	\noindent
	{\bf Step 2.}
	In this step, we give the upper bounds for $\delta_0 = P(E_0^c)$, $\delta_1 = P(E_1^c)$ and $\delta_2 = P(E_2^c)$ under the assumed conditions.
	The three bounds are derived in the three further steps.
	
	\noindent
	{\bf Step 2.1.} Note that we use $\wh\bbeta_n^{\lasso}$ as the initial estimator.
	Then by Theorem \ref{thm:lasso_est_n} and the condition $\lambda\ge (3\sqrt{s+1}\lambda_0 ) / (a_0 \kappa)$, we have
	\begin{align*}
		\| \wh\bbeta_n^{\lasso} - \bbeta^{(0)}\|_{\infty} \le \| \wh\bbeta_n^{\lasso} - \bbeta^{(0)}\| \le \frac{3 }{\kappa}\sqrt{s+1} \lambda_0 \le a_0 \lambda
	\end{align*}
	holds with probability at least $1-\delta_0'$ with
	\begin{align*}
		\delta_0' = 2(K+1)\exp\left\{ -\min\left( \frac{C_1 n p \lambda_0^2}{w^2\sigma_{\max}^2},\frac{C_2 n p \lambda_0}{w\sigma_{\max}} \right)\right\}.
	\end{align*}
	Consequently, we should have $\delta_0 = P(E_0^c) = P(\|\wh\bbeta_n^{\lasso} - \bbeta^{(0)}\|_{\infty} > a_0 \lambda)\le \delta_0'$.
	This completes the proof of Step 2.1.

	\noindent
	{\bf Step 2.2.}
	We next bound the probability $\delta_1 = P(E_1^c)=P\big(\|\nabla_{n,\mS^c} Q(\wh \bbeta^{\oracle}_{\mS})\|_{\infty}\ge a_1 \lambda\big)$.
	Let $\bY_i = \vec(\by_i\by_i^\top)\in \mR^{p^2}$, $\bE_i = \vec(\mE_i)\in \mR^{p^2}$, and $\bV_k = \vec(\bW_k)\in \mR^{p^2}$.
	Further define $\V = (\bV_k: 1\le k\le K)\in\mR^{p^2 \times K}$, $\V_{\mS} = (\bV_k: k\in\mS)\in\mR^{p^2 \times (s+1)}$, and $\V_{\mS^c} = (\bV_k: k\in\mS^c)\in\mR^{p^2 \times (K-s)}$.
	Then we should have $\bY_i = \V_{\mS}\bbeta_{\mS}^{(0)} +\bE_i$, and $Q_n(\bbeta) = (2 np)^{-1}\sum_{i=1}^n\|\bY_i - \V\bbeta \|^2$.
	Let $\H_{\mS} \defeq \V_{\mS}(\V_{\mS}^\top \V_{\mS})^{-1} \V_{\mS}^\top \in \mR^{p^2 \times p^2}$, and $\wb{\bE} = n^{-1}\sum_{i=1}^n \bE_i$.
	Then we can compute that $\nabla_{\mS^c}Q(\wh \bbeta_n^{\oracle} ) = \big\{\nabla_{k}Q(\wh \bbeta_n^{\oracle} ) , k\in \mS^c\big\} = - p^{-1}\V_{\mS^c}^\top (\bI_{p^2} - \H_{\mS}) \wb{\bE}$.
	By union bound, we have
	\begin{align}
		\delta_1 =& P\big(\|\nabla_{\mS^c} Q(\wh \bbeta^{\oracle}_{\mS})\|_{\infty}\ge a_1 \lambda\big)
		\le \sum_{k \in \mS^c} P\Big(|\bV_k^\top (\bI_{p^2} - \H_{\mS}) \wb\bE | \ge p a_1  \lambda\Big)\nonumber \\
		\le& \sum_{k\in \mS^c} \bigg\{ P\Big(|\bV_k^\top \wb\bE | \ge p a_1  \lambda / 2\Big) + P\Big(|\bV_k^\top \H_{\mS} \wb\bE | \ge p a_1  \lambda/2 \Big) \bigg\}. \label{eq:delta_1_1_n}
	\end{align}
	Note that $\bV_k^\top \wb\bE  = \tr(n^{-1}\sum_{i=1}^n \bW_k \mE_i) = \tr\{n^{-1}\sum_{i=1}^n\bW_k (\by_i\by_i^\top - \bSigma_0)\} = n^{-1}\sum_{i=1}^n\by_i^\top \bW_k \by_i - \tr(\bW_k \bSigma_0)$. Then by Lemma \ref{lemma:hanson_wright} and Conditions (C\ref{cond:distribution}) and (C\ref{cond:bounded_norm}), we have $P\Big(|\bV_k^\top \wb\bE | \ge p a_1  \lambda / 2\Big)=$
	\begin{align*}
		P\Big(n^{-1}\sum_{i=1}^n \big|\by_i^\top \bW_k \by_i - \tr(\bW_k \bSigma_0)\big| > pa_1\lambda/2 \Big) \le 2\exp\left\{-\min\left(\frac{C_3 a_1^2 n p \lambda^2 }{w^2\sigma_{\max}^2 }, \frac{C_4 a_1 n p \lambda}{ w\sigma_{\max}}\right) \right\}.
	\end{align*}
	By Condition (C\ref{cond:bounded_norm}) and inequality \eqref{ineq:matrix_21} in Lemma \ref{lemma:norm_ineq}, we have $\|\bW_k\|\le\|\bW_k\|_1\le w$ for each $1\le k \le K$.
	Then we can derive that
	\begin{align*}
		|\bV_k^\top \H_{\mS} \wb\bE |
		\le & \| (\V_{\mS}^\top\V_{\mS})^{-1} \V_{\mS}^\top\bV_k \|\|\V_{\mS}^\top \wb\bE\|
		\le \| (\V_{\mS}^\top\V_{\mS})^{-1}\| \|\V_{\mS}^\top\bV_k\| \| \V_{\mS}^\top \wb\bE\|  \\
		\le& \| \bSigma_{W, \mS}^{-1}\| \Big\{\sqrt{s+1}\max_{l \in \mS}|\tr(\bW_l \bW_k)|\Big\} \Big\{\sqrt{s+1}\max_{l \in \mS}|\tr(n^{-1}\sum_{i=1}^n \bW_l \mE_i)|\Big\} \\
		\le&\Big\{ (p\tau_{\min})^{-1} \Big\} \Big\{\sqrt{s+1} (pw^2)\Big\} \Big\{\sqrt{s+1} \max_{l \in \mS}|\tr(n^{-1}\sum_{i=1}^n\bW_l \mE_i)|\Big\}\\
		=  & \tau_{\min}^{-1} w^2 (s+1) \max_{l \in \mS} \big|n^{-1}\sum_{i=1}^n \by_i^\top \bW_l \by_i - \tr(\bW_l \bSigma_0)\big|,
	\end{align*}
	where the third inequality is due to inequality \eqref{ineq:vector_max2} in Lemma \ref{lemma:norm_ineq}, and the last inequality is due to the following two facts: (i) by Condition (C\ref{cond:bounded_norm}) and inequality \eqref{ineq:matrix_21} in Lemma \ref{lemma:norm_ineq}, we have $|\tr(\bW_l \bW_k)| \le p\|\bW_l\|\|\bW_k\|\le pw^2$; (ii) by Condition (C\ref{cond:minimal_eigen}), we have $\big\|\bSigma_{W, \mS}^{-1} \big\| = \lambda_{\min}^{-1}(\bSigma_{W, \mS})\le (p\tau_{\min})^{-1}$.
	Then by Lemma \ref{lemma:hanson_wright} and Conditions (C\ref{cond:distribution}) and (C\ref{cond:bounded_norm}), we have $P\Big(|\bV_k^\top\H_{\mS} \wb\bE | \ge p^2 a_1  \lambda / 2\Big)\le$
	\begin{align*}
		&\sum_{l \in \mS} P\left\{ \big|n^{-1}\sum_{i=1}^n \by_i^\top \bW_l \by_i - \tr(\bW_l \bSigma_0)\big| > \frac{a_1 \tau_{\min} p\lambda }{2(s+1) w^2} \right\}\\
		\le& 2(s+1) \exp\left[-\min\left\{\frac{C_5 a_1^2 \tau_{\min}^2 n p \lambda^2}{w^6\sigma_{\max}^2(s+1)^2}, \frac{C_6 a_1 \tau_{\min} n p \lambda}{w^3 \sigma_{\max}(s+1)} \right\} \right]
	\end{align*}
	Together with \eqref{eq:delta_1_1_n}, we have
	\begin{align*}
		\delta_1 \le& 2 (K-s) \exp\left\{-\min\left(\frac{C_3 a_1^2 p \lambda^2 }{w^2\sigma_{\max}^2 }, \frac{C_4 a_1 p \lambda}{ w\sigma_{\max}}\right) \right\} \\ &+2(K-s)(s+1)\exp\left[-\min\left\{\frac{C_5 a_1^2 \tau_{\min}^2 p \lambda^2}{w^6\sigma_{\max}^2(s+1)^2}, \frac{C_6 a_1 \tau_{\min}p \lambda}{w^3 \sigma_{\max}(s+1)} \right\} \right].
	\end{align*}

	\noindent
	{\bf Step 2.3.}
	We next bound $\delta_2=P(E_2^c) = P(\|\wh \bbeta_{n,\mS}^{\oracle} \|_{\min} < \gamma\lambda)$. Note that $\wh \bbeta_{n,\mS}^{\oracle}  = \bbeta_{\mS}^{(0)} +(\V_{\mS}^\top \V_{\mS})^{-1} \V_{\mS}^\top \wb\bE $, and thus $\|\wh \bbeta_{n,\mS}^{\oracle}\|_{\min} \ge \| \bbeta_{\mS}^{(0)}\|_{\min} -\|(\V_{\mS}^\top \V_{\mS})^{-1} \V_{\mS}^\top \wb\bE\|_{\infty} $.
	Then we have
	\begin{equation}\label{eq:delta_2_1_n}
		\delta_2 \le P\Big(\|(\V_{\mS}^\top \V_{\mS})^{-1} \V_{\mS}^\top \wb\bE\|_{\infty} \ge \| \bbeta_{\mS}^{(0)}\|_{\min} - \gamma \lambda\Big).
	\end{equation}
	Note that
	\begin{align*}
		&\|(\V_{\mS}^\top \V_{\mS})^{-1} \V_{\mS}^\top \wb\bE\|_{\infty}
		\le \|(\V_{\mS}^\top \V_{\mS})^{-1}\V_{\mS}^\top \wb\bE\|\le\|(\V_{\mS}^\top \V_{\mS})^{-1}\| \|\V_{\mS}^\top \wb\bE\|\\
		\le & (p\tau_{\min})^{-1} \sqrt{s+1}\|\V_{\mS}^\top \wb\bE\|_{\infty} = \sqrt{s+1}(p\tau_{\min})^{-1} \max_{k\in \mS}|n^{-1}\sum_{i=1}^n \by_i^\top \bW_k \by_i - \tr(\bW_k \bSigma_0)| ,
	\end{align*}
	where the first inequality is due to inequality \eqref{ineq:vector_max2} in Lemma \ref{lemma:norm_ineq}, and the third inequality is due to Condition (C\ref{cond:minimal_eigen}) and \eqref{ineq:vector_max2} in Lemma \ref{lemma:norm_ineq}.
	Together with \eqref{eq:delta_2_1_n} and using Lemma \ref{lemma:hanson_wright}, we have
	\begin{align*}
		\delta_{2}\le& \sum_{k\in \mS} P\left\{ \big|n^{-1}\sum_{i=1}^n \by_i^\top \bW_k \by_i - \tr(\bW_k \bSigma_0) \big| \ge \frac{\tau_{\min}p}{(s+1)^{1/2}}(\| \bbeta_{\mS}^{(0)}\|_{\min} - \gamma \lambda) \right\}\\
		\le&2 (s+1)\exp \left[-\min\left\{\frac{C_5 \tau_{\min}^2 n p (\| \bbeta_{\mS}^{(0)}\|_{\min} - \gamma \lambda)^2 }{w^2\sigma_{\max}^2 (s+1)}, \frac{C_6 \tau_{\min} n p (\| \bbeta_{\mS}^{(0)}\|_{\min} - \gamma \lambda) }{w\sigma_{\max}(s+1)^{1/2}} \right\} \right].
	\end{align*}
	This competes the proof of Step 2.
	
	\noindent{\bf Step 3.}
	To obtain the desired result, it suffices to prove that $\delta_1$, $\delta_2$, and $\delta_0'$ tend to $0$ as $p\to \infty$ under the assumed conditions.
	By Condition (C\ref{cond:minimal_signal}), we know that $\| \bbeta_{\mS}^{(0)}\|_{\min} - \gamma \lambda > \lambda$.
	Then, by inspecting the forms of upper bounds of $\delta_0,\delta_1, \delta_2$, it remains to prove that
	\begin{align}\label{eq:probs_n}
		\min\left\{\frac{n p\lambda^2}{s^2}, \frac{n p\lambda}{s}, \frac{n p\lambda^2}{s}, \frac{n p\lambda}{\sqrt{s}}, n p\lambda_0^2,
		n p\lambda_0,\right\} \Big/ \log(K) \to 0
	\end{align}
	as $p\to\infty$.
	Further note $\lambda\ge (3\sqrt{s+1}\lambda_0 ) / (a_0 \kappa)$.
	Then we can easily verify that, \eqref{eq:probs_n} holds as long as $n p\lambda_0^2/\{s\log(K)\}\to\infty$ as $np\to \infty$.
	This completes the proof of Step 3 and completes the proof of the theorem.

	\subsection{Useful Lemmas}
	\label{append:lemmas}
	
	\begin{lemma}\label{lemma:norm_ineq}
		{\sc(Norm Inequalities)} Let $\bv \in \mR^{p}$ be an arbitrary vector, and   $\bDelta \in \mR^{p\times p}$ be an arbitrary symmetric matrix. Then we should have
		\begin{align}
			&\|\bv\| \le \|\bv\|_1 \le \sqrt{p} \|\bv\|, \label{ineq:vector_21} \\
			&\|\bv\|_{\infty}\le \|\bv\|\le \sqrt{p} \|\bv\|_{\infty}, \label{ineq:vector_max2}\\
			&\|\bDelta\| \le \|\bDelta\|_F \le \sqrt{p} \|\bDelta\|, \label{ineq:matrix_2F}\\
			&\| \bDelta\| \le \|\bDelta\|_1 = \|\bDelta\|_\infty \le \sqrt{p}\|\bDelta\|. \label{ineq:matrix_21}
		\end{align}
	\end{lemma}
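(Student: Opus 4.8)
The statement collects four pairs of elementary norm-equivalence bounds, so the plan is to verify each line separately using only Cauchy--Schwarz, the singular value decomposition, and (for the symmetric operator norms) Gershgorin's disc theorem. None of the steps requires machinery beyond these, and I would present them in the order they are displayed.

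For the first line, I would note that $\|\bv\|^2 = \sum_j v_j^2 \le \big(\sum_j |v_j|\big)^2 = \|\bv\|_1^2$ because the cross terms in the expanded square are nonnegative, which yields $\|\bv\| \le \|\bv\|_1$ after taking square roots; for the upper bound I would apply Cauchy--Schwarz to the pairing $\sum_j 1\cdot |v_j|$ to get $\|\bv\|_1 \le \sqrt{p}\,\|\bv\|$. The second line follows instantly from the chain $\max_j v_j^2 \le \sum_j v_j^2 \le p\max_j v_j^2$ upon taking square roots.

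For the third line I would pass to the singular values $\sigma_1 \ge \cdots \ge \sigma_p \ge 0$ of $\bDelta$. Since $\|\bDelta\| = \sigma_1$ and $\|\bDelta\|_F = \big(\sum_i \sigma_i^2\big)^{1/2}$, the estimate $\sigma_1 \le \big(\sum_i \sigma_i^2\big)^{1/2} \le \sqrt{p}\,\sigma_1$ holds because a $p\times p$ matrix has at most $p$ singular values, giving exactly $\|\bDelta\| \le \|\bDelta\|_F \le \sqrt{p}\,\|\bDelta\|$.

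The fourth line is the only part where symmetry is genuinely used and the only part worth flagging. The identity $\|\bDelta\|_1 = \|\bDelta\|_\infty$ is immediate, since for a symmetric matrix the $j$th column sum of $|\delta_{ij}|$ coincides with the $j$th row sum. For the lower bound $\|\bDelta\| \le \|\bDelta\|_1$, I would use that a symmetric matrix has real eigenvalues with $\|\bDelta\| = \max_i |\lambda_i(\bDelta)|$, and invoke Gershgorin's disc theorem: every eigenvalue $\lambda$ satisfies $|\lambda - \delta_{ii}| \le \sum_{j\ne i} |\delta_{ij}|$ for some index $i$, whence $|\lambda| \le \sum_j |\delta_{ij}| \le \|\bDelta\|_\infty = \|\bDelta\|_1$. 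For the upper bound I would apply the already-established first line columnwise, writing $\|\bDelta\|_1 = \max_j \|\bDelta\e_j\|_1 \le \sqrt{p}\,\max_j \|\bDelta\e_j\| \le \sqrt{p}\,\|\bDelta\|$, where the last step uses $\|\bDelta\e_j\| \le \|\bDelta\|\,\|\e_j\| = \|\bDelta\|$. Assembling the four displays completes the proof.
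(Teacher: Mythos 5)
Your proposal is correct in every step, but it takes a genuinely different route from the paper, whose proof is essentially citation-based: the vector inequalities \eqref{ineq:vector_21}--\eqref{ineq:vector_max2} and the Frobenius bound \eqref{ineq:matrix_2F} are quoted directly from Golub and Van Loan, and the crucial lower bound in \eqref{ineq:matrix_21} is obtained from the general interpolation inequality $\|\bDelta\| \le \sqrt{\|\bDelta\|_1 \|\bDelta\|_\infty}$ (their Corollary 2.3.2), which holds for \emph{arbitrary} matrices and collapses to $\|\bDelta\| \le \|\bDelta\|_1$ once symmetry forces $\|\bDelta\|_1 = \|\bDelta\|_\infty$; the rightmost bound $\|\bDelta\|_\infty \le \sqrt{p}\,\|\bDelta\|$ is likewise quoted. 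You instead prove everything from scratch: Cauchy--Schwarz and termwise comparison for the vector lines, the singular value decomposition for the Frobenius line, Gershgorin's theorem combined with $\|\bDelta\| = \max_i |\lambda_i(\bDelta)|$ for the lower bound in \eqref{ineq:matrix_21}, and a clean columnwise reduction $\|\bDelta\|_1 = \max_j \|\bDelta \e_j\|_1 \le \sqrt{p}\,\max_j \|\bDelta \e_j\| \le \sqrt{p}\,\|\bDelta\|$ for the upper bound, which elegantly recycles your own line \eqref{ineq:vector_21}. The one structural difference worth flagging is that your Gershgorin argument leans on symmetry twice --- both for $\|\bDelta\| = \max_i|\lambda_i|$ (operator norm equals spectral radius only for normal matrices) and for equating row and column sums --- so unlike the paper's interpolation inequality it does not extend beyond the symmetric case; since the lemma assumes symmetry, this costs nothing here. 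What your approach buys is a fully self-contained elementary proof; what the paper's buys is brevity and a bound that remains valid without symmetry.
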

	\begin{proof}
		The inequalities \eqref{ineq:vector_21}, \eqref{ineq:vector_max2}, and \eqref{ineq:matrix_2F} are directly from (2.2.5), (2.2.6), and (2.3.7) in \cite[][p. 69, 72]{golub2013matrix}, respectively.
		Since $\bDelta$ is symmetric, we immediately obtain that $\|\bDelta\|_1 = \|\bDelta\|_\infty$ by definitions of the two norms; see for example (2.3.9) and (2.3.10) in \cite[][p. 72]{golub2013matrix}.
		Then by Corollary 2.3.2  in \cite[][p. 73]{golub2013matrix}, we have
		\begin{align*}
			\|\bDelta\| \le \sqrt{\|\bDelta\|_1\|\bDelta\|_\infty} = \|\bDelta\|_1=\bDelta\|_\infty.
		\end{align*}
		The rightmost inequality $ \|\bDelta\|_\infty \le \sqrt{p}\|\bDelta\|$ follows from  (2.3.11) in \cite[][p. 72]{golub2013matrix}.
		This completes the proof.
	\end{proof}

	\begin{lemma} \label{lemma:hanson_wright}
		{\sc (Hanson-Wright Inequality)} Let $\by = \bSigma^{1/2}\bZ$, where $\bZ = (Z_1,\dots,Z_p)^\top \in \mR^p$ is a random vector with independent and identically distributed sub-Gaussian coordinates. Assume that $E(Z_j)=0$, $\var(Z_j)=1$ for each $1\le j\le p$, and $\bSigma\in\mR^{p\times p}$ is a positive definite matrix.
		Let $\bDelta \in\mR^{p\times p}$ be a symmetric matrix. Then, for every $t\ge 0$, we have
		\begin{align*}
			P\Big\{ \big|\by^\top \bDelta \by - \tr(\bDelta \bSigma) \big| \ge t   \Big\}  \le  2\exp\left\{-  \min\left(\frac{C_1t^2}{p\|\bDelta\|^2\|\bSigma\|^2}, \frac{C_2 t}{\|\bDelta\|\|\bSigma\|} \right) \right\},
		\end{align*}
		where $C_1$ and $C_2$ are two positive constants.
		{Furthermore, suppose that $\by_i\ (1\le i\le n)$ are $n$ independent copies of $\by$, then we have
			\begin{align*}
				P\Big\{ \Big|n^{-1}\sum_{i=1}^n\by_i^\top \bDelta \by_i - \tr(\bDelta \bSigma) \Big| \ge t   \Big\}  \le  2\exp\left\{-  \min\left(\frac{C_1nt^2}{p\|\bDelta\|^2\|\bSigma\|^2}, \frac{C_2 nt}{\|\bDelta\|\|\bSigma\|} \right) \right\}.
		\end{align*}}
		\begin{proof}
			By using ordinary Hanson-Wright inequality \citep[e.g., Theorem 6.2.1 in][]{vershynin2018high}, we have $P\big\{ \big|\by^\top \bDelta \by - \tr(\bDelta \bSigma) \big| \ge t  \big\} =$
			\begin{align*}
				P\Big\{ \big|\bZ^\top(\bSigma^{1/2} \bDelta\bSigma^{1/2}) \bZ - \tr(\bDelta \bSigma) \big| \ge t   \Big\} \le 2\exp\left\{-  \min\left(\frac{C_1t^2}{\|\bSigma^{1/2} \bDelta\bSigma^{1/2}\|_F^2}, \frac{C_2 t}{\|\bSigma^{1/2} \bDelta\bSigma^{1/2}\|} \right) \right\}.
			\end{align*}
			By norm inequality \eqref{ineq:matrix_2F} in Lemma \ref{lemma:norm_ineq}, we have $\|\bSigma^{1/2} \bDelta\bSigma^{1/2}\|_F^2 \le p\|\bSigma^{1/2} \bDelta\bSigma^{1/2}\|^2$.
			Further note that $\|\bSigma^{1/2} \bDelta\bSigma^{1/2}\|\le \|\bSigma^{1/2}\|^2 \| \bDelta\| = \|\bDelta\|\|\bSigma\|$.
			Then we can immediately obtain the first inequality of the lemma.
			
			We next prove the second inequality of the lemma.
			Note that $\by_i = \bSigma^{1/2}\bZ_i$, where $\bZ_i\ (1\le i\le n)$ are $n$ independent and identically distributed random vectors, and $\mZ = (\bZ_1^\top,\dots, \bZ_n^\top)^\top\in\mR^{np}$ independent and identically distributed sub-Gaussian coordinates.
			Denote $\mA = \bI_n \otimes (\bSigma^{1/2} \bDelta\bSigma^{1/2}) \in \mR^{(np)\times (np)}$.
			Then, by using ordinary Hanson-Wright inequality, we have
			\begin{align*}
				&P\big\{ \big| n^{-1}\sum_{i=1}^n \by_i^\top \bDelta \by_i - \tr(\bDelta \bSigma) \big| \ge t  \big\} =P\bigg\{ \Big| \sum_{i=1}^n \bZ_i^\top (\bSigma^{1/2} \bDelta\bSigma^{1/2})\bZ_i -n\tr(\bDelta \bSigma) \Big|>nt  \bigg\}\\
				=&  P\bigg\{ \Big| \mZ^\top \mA \mZ - \tr(\mA) \Big|>nt  \bigg\}
				\le 2\exp\left\{-  \min\left(\frac{C_1 n^2 t^2}{\|\mA\|_F^2}, \frac{C_2 nt}{\|\mA\|} \right) \right\}.
			\end{align*}
			By using the relationship between matrix norm and Kronecker product \citep[e.g., results on Page 709 of][]{golub2013matrix}, we have $\|\mA\|_F^2 = \|\bI_n\|_F^2 \|\bSigma^{1/2} \bDelta\bSigma^{1/2}\|_F^2 \le np\|\bDelta\|^2\|\bSigma\|^2$, and $\|\mA\| =\|\bI_n\| \|\bSigma^{1/2} \bDelta\bSigma^{1/2}\|\le \|\bDelta\|\|\bSigma\|$.
			Then we can immediately obtain the second inequality of the lemma. This completes the proof of the lemma.
			
		\end{proof}
		
	\end{lemma}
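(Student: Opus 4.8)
The plan is to reduce both inequalities to the standard Hanson--Wright inequality for a vector with independent sub-Gaussian coordinates (e.g., Theorem 6.2.1 in Vershynin), and then translate the resulting Frobenius- and operator-norm bounds into the stated bounds involving $\|\bDelta\|$, $\|\bSigma\|$, and $p$. First I would rewrite the quadratic form in terms of $\bZ$: since $\by = \bSigma^{1/2}\bZ$, setting $\bM \defeq \bSigma^{1/2}\bDelta\bSigma^{1/2}$ gives $\by^\top\bDelta\by = \bZ^\top\bM\bZ$, and by the cyclic property of the trace the centering constant satisfies $\tr(\bDelta\bSigma) = \tr(\bM)$. Because the $Z_j$ are independent, mean zero, and unit variance, one checks directly that $E(\bZ^\top\bM\bZ) = \tr(\bM)$, so the left-hand event is precisely the deviation of the quadratic form $\bZ^\top\bM\bZ$ from its own mean.

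Next I would apply the ordinary Hanson--Wright inequality to $\bZ^\top\bM\bZ$, whose sub-Gaussian parameter is the constant $c$ from Condition (C\ref{cond:distribution}); this produces a tail bound of the form $2\exp\{-\min(C_1' t^2/\|\bM\|_F^2,\, C_2' t/\|\bM\|)\}$, with $C_1', C_2'$ absorbing $c$. The two remaining steps are purely deterministic norm manipulations supplied by Lemma \ref{lemma:norm_ineq}: inequality \eqref{ineq:matrix_2F} gives $\|\bM\|_F^2 \le p\|\bM\|^2$, while submultiplicativity of the operator norm together with $\|\bSigma^{1/2}\|^2 = \|\bSigma\|$ gives $\|\bM\| \le \|\bSigma^{1/2}\|^2\|\bDelta\| = \|\bDelta\|\|\bSigma\|$. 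Substituting these bounds into the tail estimate produces exactly the first displayed inequality.

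For the repeated-observations bound I would stack the copies. Writing $\bZ_i$ for the sub-Gaussian vector behind $\by_i$ and $\mZ = (\bZ_1^\top,\dots,\bZ_n^\top)^\top \in \mR^{np}$, whose coordinates are again i.i.d.\ sub-Gaussian, and setting $\mA \defeq \bI_n\otimes\bM$, one has $\sum_{i=1}^n \by_i^\top\bDelta\by_i = \mZ^\top\mA\mZ$ and $n\tr(\bDelta\bSigma) = \tr(\mA)$. Applying the ordinary Hanson--Wright inequality to $\mZ^\top\mA\mZ$ with threshold $nt$, and then using the Kronecker-product norm identities $\|\mA\|_F^2 = \|\bI_n\|_F^2\,\|\bM\|_F^2 = n\|\bM\|_F^2 \le np\|\bDelta\|^2\|\bSigma\|^2$ and $\|\mA\| = \|\bM\| \le \|\bDelta\|\|\bSigma\|$, yields the second displayed inequality after rescaling the event by $n$.

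I expect the only real subtlety to be bookkeeping rather than a genuine obstacle: one must confirm that the sub-Gaussian norm of $\bZ$ is a universal constant, guaranteed by (C\ref{cond:distribution}), so that it can be merged into $C_1, C_2$; and for the $n$-sample case one must track carefully how the threshold $nt$ interacts with $\|\mA\|_F^2 = n\|\bM\|_F^2$ and $\|\mA\| = \|\bM\|$ so that both exponents acquire the advertised factor of $n$. No restricted-eigenvalue or distributional structure beyond independence and sub-Gaussianity is needed, so the argument is entirely self-contained once the standard Hanson--Wright inequality is invoked.
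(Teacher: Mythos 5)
Your proposal follows exactly the same route as the paper's proof: reduce to the ordinary Hanson--Wright inequality via $\by^\top\bDelta\by = \bZ^\top(\bSigma^{1/2}\bDelta\bSigma^{1/2})\bZ$, convert the Frobenius and operator norms using $\|\bSigma^{1/2}\bDelta\bSigma^{1/2}\|_F^2 \le p\|\bSigma^{1/2}\bDelta\bSigma^{1/2}\|^2$ and submultiplicativity, and handle the $n$-sample case by stacking into $\mZ\in\mR^{np}$ with the block-diagonal matrix $\bI_n\otimes(\bSigma^{1/2}\bDelta\bSigma^{1/2})$ and the same Kronecker norm identities. The argument is correct and complete, matching the paper step for step.
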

	
	\begin{lemma} \label{lemma:normality}
		Let $\bZ = (Z_1, \dots, Z_p)^\top \in \mR^{p}$, where $Z_1, \dots, Z_p$ are independent and identically distributed with mean $0$ and variance $1$. Define
		\begin{align*}
			S_p = \begin{pmatrix}
				\vec^\top(\bDelta_1) \\
				\vdots\\
				\vec^\top(\bDelta_L)
			\end{pmatrix}\vec(\bZ\bZ^\top - \bI_p),
		\end{align*}
		where $\bDelta_l\in\mR^{p\times p}$ is a symmetric matrix for $1\le l \le L$ with $L<\infty$. Suppose that $\sup_{p} \| \bDelta_l\|_1 < \infty$ for $1\le l\le L$, and $E|Z_j|^{4+\eta}<\infty$ for some $\eta>0$. Then we have
		$E(S_p)=0$, and
		\begin{align*}
			\cov(S_p) = 2 \{\tr(\bDelta_k\bDelta_l):1\le l \le L \} + ( \mu_4 - 3) \{\tr(\bDelta_k \circ \bDelta_l): 1\le k,l\le L \},
		\end{align*}
		where $\mu_4 = E(Z_j^4)$. Moreover, $p^{-1/2-\varepsilon}S_p\to^{L_2} 0$ for ant $\varepsilon>0$. In addition, assume that there is a positive definite matrix $\bV \in \mR^{L\times L}$ such that $p^{-1}\cov(S_p) \to \bV$, then we have $p^{-1/2} S_p \to_d \mN(\zero, \bV)$ as $p \to \infty$.
		
		\begin{proof}
			This is directly modified from Lemma 4 in the supplementary material of \cite{zou2021network}.
		\end{proof}

	\end{lemma}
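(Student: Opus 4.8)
The plan is to reduce all four assertions to moment computations for the scalar quadratic forms $S_{p,l}=\vec^\top(\bDelta_l)\vec(\bZ\bZ^\top-\bI_p)=\bZ^\top\bDelta_l\bZ-\tr(\bDelta_l)$, where I use $\vec^\top(\bM_1)\vec(\bM_2)=\tr(\bM_1^\top\bM_2)$ and the symmetry of $\bDelta_l$. Writing $\bDelta_l=(\delta^{(l)}_{ij})$ and using $E(Z_iZ_j)=\mathbf{1}\{i=j\}$, the mean is immediate: $E(\bZ^\top\bDelta_l\bZ)=\sum_i\delta^{(l)}_{ii}=\tr(\bDelta_l)$, so $E(S_p)=\zero$. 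For the covariance I would expand $E[(\bZ^\top\bDelta_k\bZ)(\bZ^\top\bDelta_l\bZ)]=\sum_{i,j,m,n}\delta^{(k)}_{ij}\delta^{(l)}_{mn}E(Z_iZ_jZ_mZ_n)$ and use that, by independence and mean zero, $E(Z_iZ_jZ_mZ_n)$ is nonzero only when all four indices coincide (contributing $\mu_4$) or when they form two matching pairs of distinct values (each contributing $1$). Collecting the three pairing patterns, subtracting $\tr(\bDelta_k)\tr(\bDelta_l)$, and using symmetry gives $\cov(S_{p,k},S_{p,l})=2\tr(\bDelta_k\bDelta_l)+(\mu_4-3)\tr(\bDelta_k\circ\bDelta_l)$, which is the claimed formula; I would sanity-check it against $\bDelta_k=\bDelta_l=\bI_p$, where it reduces to $\var(\sum_iZ_i^2)=(\mu_4-1)p$.

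For the $L_2$ statement, since $E(S_p)=\zero$ it suffices to control $E\|p^{-1/2-\varepsilon}S_p\|^2=p^{-1-2\varepsilon}\tr\{\cov(S_p)\}$. Each diagonal entry is $\var(S_{p,l})=2\tr(\bDelta_l^2)+(\mu_4-3)\tr(\bDelta_l\circ\bDelta_l)$, and I would bound $\tr(\bDelta_l^2)=\|\bDelta_l\|_F^2\le p\|\bDelta_l\|^2\le p\|\bDelta_l\|_1^2$ using \eqref{ineq:matrix_2F} and \eqref{ineq:matrix_21} in Lemma \ref{lemma:norm_ineq}, together with $\tr(\bDelta_l\circ\bDelta_l)=\sum_i(\delta^{(l)}_{ii})^2\le p\|\bDelta_l\|_1^2$ since $|\delta^{(l)}_{ii}|\le\|\bDelta_l\|_1$. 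As $\sup_p\|\bDelta_l\|_1<\infty$ and $L$ is fixed, $\tr\{\cov(S_p)\}=O(p)$, whence $E\|p^{-1/2-\varepsilon}S_p\|^2=O(p^{-2\varepsilon})\to0$.

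For the central limit theorem I would invoke the Cram\'er--Wold device, so that it suffices to prove $p^{-1/2}\bc^\top S_p\to_d\mN(0,\bc^\top\bV\bc)$ for each fixed $\bc\in\mR^L$. Since $\bc^\top S_p=\bZ^\top\bDelta\bZ-\tr(\bDelta)$ with $\bDelta=\sum_{l}c_l\bDelta_l$ symmetric and $\sup_p\|\bDelta\|_1\le\sum_l|c_l|\sup_p\|\bDelta_l\|_1<\infty$, the problem collapses to a CLT for one centered quadratic form. I would establish this through a martingale-difference decomposition along $\mathcal{F}_k=\sigma(Z_1,\dots,Z_k)$: with $S=\bZ^\top\bDelta\bZ-\tr(\bDelta)=\sum_i\delta_{ii}(Z_i^2-1)+2\sum_{i<j}\delta_{ij}Z_iZ_j$, a direct computation gives $S=\sum_{k=1}^pD_k$ where $D_k=E(S\mid\mathcal{F}_k)-E(S\mid\mathcal{F}_{k-1})=\delta_{kk}(Z_k^2-1)+2Z_kL_k$ with $L_k=\sum_{i<k}\delta_{ik}Z_i$, a martingale-difference array. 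The martingale CLT (e.g.\ Hall and Heyde) then applies once (a) the conditional variance $p^{-1}\sum_kE(D_k^2\mid\mathcal{F}_{k-1})\to\sigma^2:=\bc^\top\bV\bc$ in probability and (b) a Lyapunov condition $p^{-1-\eta'/2}\sum_kE|D_k|^{2+\eta'}\to0$ for some $\eta'>0$ are verified.

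The two moment hypotheses are exactly calibrated to these two conditions. For (b) I would take $\eta'=\eta/2$, so that every moment of $Z$ entering $E|D_k|^{2+\eta'}$ has order at most $4+\eta$; bounding $E|L_k|^{2+\eta'}$ by Rosenthal's inequality with $\sum_{i<k}\delta_{ik}^2\le\|\bDelta\|_1^2$ gives $E|D_k|^{2+\eta'}=O(1)$ uniformly in $k$, so $\sum_kE|D_k|^{2+\eta'}=O(p)=o(p^{1+\eta'/2})$. The main obstacle is condition (a): one finds $\sum_kE(D_k^2\mid\mathcal{F}_{k-1})=(\mu_4-1)\sum_k\delta_{kk}^2+4\sum_kL_k^2+4\mu_3\sum_k\delta_{kk}L_k$ with $\mu_3=E(Z_j^3)$, whose expectation equals $\var(S)$ so that $p^{-1}E\{\cdot\}\to\sigma^2$; the real work is to show the two random terms concentrate, namely $\var(\sum_kL_k^2)=o(p^2)$ and $\var(\sum_k\delta_{kk}L_k)=o(p^2)$. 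These variance bounds, obtained from a careful fourth-moment expansion together with the bounded $\ell_1$-norm of $\bDelta$, constitute the technical heart of the proof and are where the bulk of the effort lies.
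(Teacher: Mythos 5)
Your proposal is correct, but it cannot be compared with the paper's proof step by step for a simple reason: the paper does not actually prove this lemma --- its entire ``proof'' is the single sentence that the result is directly modified from Lemma 4 in the supplementary material of Zou et al.\ (2021). You have therefore supplied a self-contained argument where the authors delegate to an external quadratic-form CLT. Your mean and covariance computations are the standard fourth-moment expansion and are exact (the mixed index patterns involving $\mu_3 = E(Z_j^3)$ vanish because they pair an odd power with a mean-zero singleton, and your sanity check $\var(\sum_i Z_i^2)=(\mu_4-1)p$ is right). The $L_2$ step correctly uses the same norm inequalities the paper records in its Lemma 1, giving $\tr\{\cov(S_p)\}=O(p)$. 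For the CLT, the Cram\'er--Wold reduction to a single centered quadratic form with $\sup_p\|\bDelta\|_1<\infty$, followed by the martingale-difference decomposition $D_k=\delta_{kk}(Z_k^2-1)+2Z_kL_k$ and the Hall--Heyde conditions, is the classical route for such results, and your bookkeeping is accurate: the conditional variance is indeed $(\mu_4-1)\sum_k\delta_{kk}^2+4\sum_kL_k^2+4\mu_3\sum_k\delta_{kk}L_k$, and the choice $\eta'=\eta/2$ is calibrated exactly so that $E|Z_k^2-1|^{2+\eta'}$ requires only $E|Z_j|^{4+\eta}<\infty$, matching the hypothesis.

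The one step you defer --- concentration of the conditional variance --- does go through under $\sup_p\|\bDelta\|_1<\infty$, and with room to spare, so there is no genuine gap. Writing $\sum_k L_k^2 - E(\sum_k L_k^2)=\sum_i a_i(Z_i^2-1)+\sum_{i\ne j}b_{ij}Z_iZ_j$ with $a_i=\sum_{k>i}\delta_{ik}^2$ and $b_{ij}=\sum_{k>\max(i,j)}\delta_{ik}\delta_{jk}$, one has $|a_i|\le\|\bDelta\|_1^2$, $|b_{ij}|\le\|\bDelta\|_1^2$, and $\sum_j|b_{ij}|\le\|\bDelta\|_1^2$, whence both variances are $O(p)$ rather than merely $o(p^2)$; likewise $\sum_k\delta_{kk}L_k=\sum_i c_iZ_i$ with $|c_i|\le\|\bDelta\|_1^2$ has variance $O(p)$. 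So the normalized conditional variance is $\bc^\top\bV\bc+O_p(p^{-1/2})$, the Lyapunov sum is $O(p)=o(p^{1+\eta'/2})$ by your Rosenthal bound, and the triangular-array martingale CLT applies. In substance this is presumably the same kind of argument the cited Zou et al.\ lemma encapsulates; your version simply makes explicit what the paper leaves to a reference, which is a net gain in self-containedness at the cost of the fourth-moment bookkeeping above.
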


	\subsection{Verification of Conditions (C\ref{cond:minimal_eigen}), (C\ref{cond:RE}), and (C\ref{cond:convergence})}
	\label{append:verify}
	
	We consider a specific example to verify Conditions (C\ref{cond:minimal_eigen}), (C\ref{cond:RE}), and (C\ref{cond:convergence}). Specifically, we assume that $\bW_k\ (1\le k\le K)$ are $K$ similarity matrices independently generated as follows.
	More specifically, assume that $\bW_k=(w_{k,j_1 j_2}) \in \mR^{p\times p}$ is a symmetric matrix, whose diagonal elements are set to be zeros, and off-diagonal elements are independently and identically generated from Bernoulli distributions with probability $\theta  / (p-1) \in (0,1)$ for some constant $\theta\ge 1$.
	We then have the following lemma, which is useful for the subsequent verification of the conditions.

	\begin{lemma}\label{lemma:omega_k1k2}
		Let $\wh\omega_{ k_1 k_2}=p^{-1}\tr(\bW_{k_1} \bW_{k_2})$ for each $1\le k_1, k_2\le K$. Then for any $t\ge 0$, we have
		\begin{align}
			P\Big(| \wh\omega_{kk}-\theta) | \ge t \Big) \le 2\exp\left\{-\frac{p t^2}{4\theta+ 4t/3} \right\} \label{ineq:omega_kk},
		\end{align}
		for any $1\le k\le K$.
		In addition, for any $t\ge 2\theta^2/p$, we have
		\begin{align}
			P\Big( | \wh\omega_{k_1 k_2}| \ge   t \Big) \le
			2\exp\left\{-\frac{p \big(t- 2\theta^2/p\big)^2}{4\theta^2 + 4t/3} \right\}, \label{ineq:omega_k1k2}  \end{align}
		for any $ k_1 \ne k_2$.
		
	\end{lemma}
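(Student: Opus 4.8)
The plan is to reduce both inequalities to the classical Bernstein inequality for sums of independent bounded random variables. First I would use the symmetry of $\bW_{k_1},\bW_{k_2}$ and the vanishing of their diagonals to write
\[
\tr(\bW_{k_1}\bW_{k_2}) = \sum_{i,j} w_{k_1,ij}w_{k_2,ij} = 2\sum_{i<j} w_{k_1,ij}w_{k_2,ij},
\]
so that $\wh\omega_{k_1 k_2} = (2/p)\sum_{i<j} w_{k_1,ij}w_{k_2,ij}$ is, up to the scaling $2/p$, a sum of $\binom{p}{2}$ independent terms. Throughout I will invoke the two-sided Bernstein inequality: for independent mean-zero $Y_\alpha$ with $|Y_\alpha|\le M$ and $\sum_\alpha \var(Y_\alpha)=\sigma^2$, one has $P(|\sum_\alpha Y_\alpha|\ge s)\le 2\exp\{-(s^2/2)/(\sigma^2+Ms/3)\}$.

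For the diagonal case $k_1=k_2=k$, I use that the Bernoulli entries satisfy $w_{k,ij}^2=w_{k,ij}$, so $\wh\omega_{kk}=(2/p)\sum_{i<j}w_{k,ij}$ with $E(\wh\omega_{kk})=(2/p)\binom{p}{2}\theta/(p-1)=\theta$, matching the claimed centering. Setting $Y_{ij}=(2/p)(w_{k,ij}-\theta/(p-1))$ gives $|Y_{ij}|\le 2/p$ and $\sigma^2=(4/p^2)\binom{p}{2}\frac{\theta}{p-1}(1-\frac{\theta}{p-1})\le 2\theta/p$. Feeding $M=2/p$ and $\sigma^2\le 2\theta/p$ into Bernstein at level $s=t$ and simplifying the denominator via $\sigma^2+Mt/3\le(2/p)(\theta+t/3)$ produces exactly the bound $2\exp\{-pt^2/(4\theta+4t/3)\}$ in \eqref{ineq:omega_kk}.

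For the off-diagonal case $k_1\ne k_2$, the crucial point is that the products $Z_{ij}\defeq w_{k_1,ij}w_{k_2,ij}$, $i<j$, are independent Bernoulli variables with success probability $\theta^2/(p-1)^2$ (independence across the two matrices makes each product Bernoulli, and independence across pairs is inherited). Hence $\mu\defeq E(\wh\omega_{k_1k_2})=\theta^2/(p-1)$ and the variance sum obeys $\sigma^2\le 2\theta^2/\{p(p-1)\}$. I then apply Bernstein to $\wh\omega_{k_1k_2}-\mu$ at level $t-\mu\ge 0$. In the resulting exponent $(t-\mu)^2/\{2\sigma^2+2M(t-\mu)/3\}$ I would bound the numerator from below using $\mu\le 2\theta^2/p$ (valid for $p\ge 2$), which gives $t-\mu\ge t-2\theta^2/p\ge 0$ precisely under the hypothesis $t\ge 2\theta^2/p$, and I would bound the denominator from above using $2\sigma^2\le 4\theta^2/p$ and $2M(t-\mu)/3\le 4t/(3p)$. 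Together these yield $P(\wh\omega_{k_1k_2}\ge t)\le \exp\{-p(t-2\theta^2/p)^2/(4\theta^2+4t/3)\}$; since $\wh\omega_{k_1k_2}\ge 0$ the lower tail is empty, so keeping the (weaker) prefactor $2$ gives \eqref{ineq:omega_k1k2}.

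The calculation is essentially routine once the setup is fixed, so I do not anticipate a single hard obstacle; the two points requiring genuine care are (a) correctly recognizing the independence structure of the products $Z_{ij}$ in the off-diagonal case, and (b) reconciling the exact mean $\theta^2/(p-1)$ with the cosmetically cleaner shift $2\theta^2/p$ in the statement, while tracking the Bernstein denominator so that the stated constants $4\theta^2$ and $4t/3$ emerge exactly. The restriction $t\ge 2\theta^2/p$ is precisely what guarantees that the shifted quantity $(t-2\theta^2/p)^2$ is a valid lower bound for $(t-\mu)^2$.
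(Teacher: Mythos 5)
Your proposal is correct and follows essentially the same route as the paper's proof: write $\tr(\bW_{k_1}\bW_{k_2})$ as twice a sum of independent Bernoulli (or Bernoulli-product) variables over pairs $i<j$, apply the two-sided Bernstein inequality, and in the off-diagonal case absorb the exact mean $\theta^2/(p-1)$ into the shift $2\theta^2/p$ using $(p-1)^{-1}\le 2/p$. The only difference is cosmetic — you rescale by $2/p$ before invoking Bernstein, while the paper applies it to the unscaled sum and then substitutes $t\mapsto pt/2$ — and your constants check out.
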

	\begin{proof}
		We first prove \eqref{ineq:omega_kk}.
		In fact, we can compute that $\wh\omega_{kk}=p^{-1}\tr(\bW_k^2) = 2p^{-1} \sum_{j_1>j_2} w_{k,j_1 j_2}^2= 2p^{-1} \sum_{j_1>j_2} w_{k,j_1 j_2}$, since $w_{k,j_1 j_2}$s are Bernoulli random variables.
		Note that $E (w_{k,j_1 j_2}) = \theta / (p-1)$ and $\var(w_{k,j_1 j_2}) = \{\theta / (p-1)\} \{1- \theta / (p-1)\} \le \theta / (p-1)$.
		Then by Bernstein's inequality for sum of independent bounded random variables  \citep[e.g., Theorem 2.8.4 in][]{vershynin2018high}, we have
		\begin{align*}
			P\left(\bigg| \sum_{j_1>j_2}  \Big(w_{k,j_1 j_2} - \frac{\theta}{p-1}\Big) \bigg| \ge t \right) \le 2\exp\left\{-\frac{ t^2 / 2}{p\theta /2 + t/3} \right\},
		\end{align*}
		for any $t\ge 0$. By Replacing $t$ with $pt / 2$, we can directly obtain \eqref{ineq:omega_kk}.
		
		We next prove \eqref{ineq:omega_k1k2}.
		Note that $\wh\omega_{k_1 k_2}=p^{-1} \tr(\bW_{k_1}\bW_{k_2}) = 2p^{-1} \sum_{j_1>j_2} w_{k_1,j_1 j_2} w_{k_2,j_1 j_2}$.
		Then it is easy to compute that $E( w_{k_1,j_1 j_2} w_{k_2,j_1 j_2}) = \theta^2 / (p-1)^2$ and $\var( w_{k_1,j_1 j_2} w_{k_2,j_1 j_2})  \le \theta^2 / (p-1)^2$. Similarly, by using Bernstein's inequality we have
		\begin{align*}
			P\left(\bigg| \sum_{j_1>j_2}  \Big(w_{k_1,j_1 j_2} w_{k_2,j_1 j_2} - \frac{\theta^2}{(p-1)^2}\Big) \bigg| \ge t \right) \le 2\exp\left\{-\frac{ t^2 / 2}{\theta^2+ t/3} \right\},
		\end{align*}
		for any $t\ge 0$.  By Replacing $t$ with $pt / 2$, we can obtain that
		\begin{align*}
			P\Big( \Big| \wh\omega_{k_1 k_2}- \theta^2 / (p-1)  \Big| \ge t \Big) \le 2\exp\left\{-\frac{p t^2}{8\theta^2 /p+ 4t/3} \right\}.
		\end{align*}
		Then by using $(p-1)^{-1} \le 2/p$ for $p\ge 2$, we can derive that for any $t \ge 2\theta^2/p$,
		\begin{align*}
			P\Big( | \wh\omega_{k_1 k_2}| \ge   t \Big) \le P\Big( | \wh\omega_{k_1 k_2} - \theta^2 / (p-1) | \ge   t - \theta^2 / (p-1) \Big)  \le
			2\exp\left\{-\frac{p \big(t- 2\theta^2/p\big)^2}{4\theta^2 + 4t/3} \right\}.
		\end{align*}
		This proves \eqref{ineq:omega_k1k2} and completes the proof of the lemma.
	\end{proof}

	\noindent{\bf Verification of Condition (C\ref{cond:minimal_eigen})}. Define $\wh \bOmega_{\mS} =  p^{-1}\bSigma_{W, \mS} = (\wh\omega_{ k_1 k_2} ) \in \mR^{(s+1)\times (s+1)}$ with $\wh\omega_{k_1 k_2} = p^{-1}\tr(\bW_{k_1} \bW_{k_2})$ for $k_1, k_2 \in \mS$. Recall that $\bW_0 = \bI_p$. Then one can easily verify that $\wh\omega_{k0}=\wh\omega_{0k} = 1$ if $k=1$ and $\wh\omega_{k0}=\wh\omega_{0k} =0$ otherwise. Further define  $\bOmega_{\mS} = \diag\{1, \theta,\dots,\theta\} \in \mR^{(s+1)\times (s+1)}$. Then by Lemma \ref{lemma:omega_k1k2}, we know that
	\begin{align*}
		P\left\{ \| \wh\bOmega_{\mS} -\bOmega_{\mS} \|_{\max} \ge t  \right\} \le 2 s^2 \exp\left\{-\frac{p \big(t- 2\theta^2/p\big)^2}{4\theta^2  +  4t/3}  \right\},
	\end{align*}
	for any $t \ge 2\theta^2/p$. Here, $\| \bM\|_{\max} = \max_{i,j}|m_{ij}|$ denotes the element-wise max-norm for an arbitrary matrix $\bM = (m_{ij})$.
	This implies that $\bOmega_{\mS} $ should be the probabilistic limit of $\wh\bOmega_{\mS}$.
	By matrix norm inequality, we know that $\| \wh\bOmega_{\mS} -\bOmega_{\mS} \|  \le (s+1) \| \wh\bOmega_{\mS} -\bOmega_{\mS} \|_{\max} $.
	Since $2s\ge s+1$, we can deduce that
	\begin{align*}
		P\left\{ \| \wh\bOmega_{\mS} -\bOmega_{\mS} \| \ge t  \right\} \le P\left\{ \| \wh\bOmega_{\mS} -\bOmega_{\mS} \|_{\max} \ge t/(s+1)  \right\} \le 2 s^2 \exp\left\{-\frac{p \big\{t/(2s)- 2\theta^2/p\big\}^2}{4\theta^2  +  4t/3}  \right\},
	\end{align*}
	for any $t \ge 4\theta^2 s / p$.
	This implies that $\lambda_{\min}(\wh\bOmega_{\mS}) \ge \lambda_{\min}(\bOmega_{\mS}) - \| \wh\bOmega_{\mS} -\bOmega_{\mS} \| \to_p 1 $ as $p\to\infty$, provided $p / \{s^2\log(s)\} \to \infty$ as $p\to\infty$.
	Consequently, we should expect that Condition (C\ref{cond:minimal_eigen}) holds with high probability.

	\noindent{\bf Verification of Condition (C\ref{cond:RE})}.
	Similarly, define $\wh \bOmega =  p^{-1}\bSigma_{W} = (\wh\omega_{ k_1 k_2} ) \in \mR^{(K+1)\times (K+1)}$ with $\wh\omega_{k_1 k_2} = p^{-1}\tr(\bW_{k_1} \bW_{k_2})$ for $0\le k_1, k_2\le K$.
	Recall that $\bdelta \in \C_3(\mS) \defeq \{\bdelta \in \mR^{K+1}: \|\bdelta_{\mS^c}\|_1 \le 3 \|\bdelta_{\mS}\|_1\}$.
	Let $\mT \subset \mS^c$ collect the indexes of the $s+1$ largest $|\delta_k|$ in $\mS^c$. Further define $\wb \mS = \mS \cup \mT$. Then we should have
	\begin{align*}
		\frac{1}{p} \left\|\sum_{k=0}^K \delta_k \bW_k \right\|_F^2  =& \frac{1}{p} \left\|\sum_{k\in
			\wb\mS} \delta_k \bW_k \right\|_F^2 + 2\sum_{k_1 \in \wb\mS}\sum_{k_2\in \wb\mS^c} \delta_{k_1}\delta_{k_2} \wh\omega_{k_1 k_2} + \frac{1}{p} \left\|\sum_{k\in \wb\mS^c} \delta_k \bW_k \right\|_F^2\\
		\ge & \frac{1}{p} \left\|\sum_{k\in
			\wb\mS} \delta_k \bW_k \right\|_F^2 + 2\sum_{k_1 \in \wb\mS}\sum_{k_2\in \wb\mS^c} \delta_{k_1}\delta_{k_2} \wh\omega_{k_1 k_2} = Q_1 + Q_2.
	\end{align*}
	We next investigate $Q_1$ and $Q_2$, respectively.
	
	Let $\wh\bOmega_{\wb\mS} = \big(\wh\omega_{ k_1 k_2}: k_1,k_2\in \wb\mS\big) \in \mR^{(2s+2)\times (2s+2)}$ be the sub-matrix of $\wh\bOmega$. Similarly, let $\bOmega_{\wb\mS} = \diag\{1,\theta,\dots,\theta\}\in\mR^{(2s+2)\times (2s+2)}$. Then by similar procedures in the verification of Condition (C\ref{cond:minimal_eigen}), we can derive that  $\|\wh\bOmega_{\wb\mS}  - \bOmega_{\wb\mS} \| \to_p 0$ as long as $p / \{s^2\log(s)\} \to \infty$ as $p\to\infty$.
	Then it follows that
	\begin{align*}
		Q_1 = \frac{1}{p} \left\|\sum_{k\in \wb\mS} \delta_k \bW_k \right\|_F^2 = \bdelta_{\wb\mS}^\top \wh \bOmega_{\wb\mS} \bdelta_{\wb\mS} \ge \lambda_{\min}(\bOmega_{\wb\mS})\|\bdelta_{\wb\mS}\|^2 + \bdelta_{\wb\mS}^\top (\wh \bOmega_{\wb\mS} -\bOmega_{\wb\mS}  )\bdelta_{\wb\mS}  = \|\bdelta_{\wb\mS}\|^2 \{1+ o_p(1)\},
	\end{align*}
	as long as $p / \{s^2\log(s)\} \to \infty$ as $p\to\infty$.
	
	For the term $Q_2$, we can derive that
	\begin{align*}
		&|Q_2|=\left|2\sum_{k_1 \in \wb\mS}\sum_{k_2\in \wb\mS^c}
		\delta_{k_1}\delta_{k_2} \wh\omega_{k_1 k_2}\right|
		\le 4(s+1) \max_{k_1\in\wb\mS}|\delta_{k_1}| \cdot \max_{k_1\in\wb\mS, k_2 \in \wb\mS^c}|\wh\omega_{k_1 k_2}| \cdot \sum_{k_2\in \wb\mS^c} |\delta_{k_2}|\\
		\le & 4(s+1) \|\bdelta_{\wb\mS}\| \cdot \max_{k_1\in\wb\mS, k_2 \in \wb\mS^c}|\wh\omega_{k_1 k_2}| \cdot \|\bdelta_{\wb\mS^c}\|_1
		\le 12(s+1)^{3/2} \|\bdelta\|^2   \cdot \max_{k_1\in\wb\mS, k_2 \in \wb\mS^c}|\wh\omega_{k_1 k_2}|,
	\end{align*}
	where we have used the facts that $\|\bdelta_{\wb\mS}\|\le \|\bdelta\|$ and  $\|\bdelta_{\wb\mS^c}\|_1 \le \|\bdelta_{\mS^c}\|_1 \le 3 \|\bdelta_{\mS}\|_1 \le  3(s+1)^{1/2} \|\bdelta_{\mS}\| \le  3(s+1)^{1/2} \|\bdelta\|$.
	By \eqref{ineq:omega_k1k2} in Lemma \ref{lemma:omega_k1k2}, we know that
	\begin{align*}
		P\Big(\max_{k_1\in\wb\mS, k_2 \in \wb\mS^c}|\wh\omega_{k_1 k_2}|\ge t\Big) \le 4 (s+1) (K-2s-1)\exp\left\{-\frac{p \big(t- 2\theta^2/p\big)^2}{4\theta^2 + 4t/3} \right\},
	\end{align*}
	for any $t \ge 2\theta^2/p$.
	Hence, we should have $\max_{k_1\in\wb\mS, k_2 \in \wb\mS^c}|\wh\omega_{k_1 k_2}| = O_p(\sqrt{\log(Ks)/p})$.
	This indicates that $|Q_2| = o_p(\|\bdelta\|^2)$ as long as $ p / \{s^3 \log(Ks)\} \to \infty$ as $p\to \infty$.

	By far, we have shown that $p^{-1} \left\|\sum_{k=0}^K \delta_k \bW_k \right\|_F^2  \ge \|\bdelta_{\wb\mS}\|^2 \{1+ o_p(1)\} + o_p(\|\bdelta\|^2) = \|\bdelta_{\wb\mS}\|^2+ o_p(\|\bdelta\|^2)$.
	Thus, if we can show that $ \|\bdelta_{\wb\mS}\|^2\ge \kappa \|\bdelta\|^2$ for some $\kappa>0$ and $\bdelta \in \C_3(\mS)$, then Condition (C\ref{cond:RE}) should hold with high probability.
	In fact, by Lemma 2.2 of \cite{van2009conditions}, we have $\|\bdelta_{\wb\mS^c}\| \le (s+1)^{-1/2}\|\bdelta_{\mS^c}\|_1$. Since  $\bdelta \in \C_3(\mS)$, it follows that $ \|\bdelta_{\wb\mS^c}\|  \le 3(s+1)^{-1/2}\|\bdelta_{\mS}\|_1 \le 3 \|\bdelta_{\mS}\| \le  3 \|\bdelta_{\wb\mS}\|$, where we have used $\|\bdelta_{\mS}\|_1 \le (s+1)^{1/2}\|\bdelta_{\mS}\| $ in the second inequality.
	Then we should have $\|\bdelta\|^2 = \|\bdelta_{\wb\mS}\|^2 + \|\bdelta_{\wb\mS^c}\|^2 \le 10 \|\bdelta_{\wb\mS}\|^2$, or equivalently, $\|\bdelta_{\wb\mS}\|^2 \ge 0.1 \|\bdelta\|^2$.
	Combine above results, we can obtain that $p^{-1} \left\|\sum_{k=0}^K \delta_k \bW_k \right\|_F^2  \ge  0.1\|\bdelta\|^2+ o_p(\|\bdelta\|^2)$, as long as $ p / \{s^3 \log(Ks)\} \to \infty$ as $p\to \infty$. Thus, we should expect that RE Condition (C\ref{cond:RE}) holds with high probability.

	\noindent{\bf Verification of Condition (C\ref{cond:convergence})}.
	We consider a special case that $\bSigma_0 = \bSigma(\bbeta^{(0)}) = \beta_0^{(0)} \bI_p + \beta_1^{(0)}\bW_1 $ with $\beta_0^{(0)}, \beta_1^{(0)}>0$.
	By our above results, we can show that $\bG_{0,p} =  p^{-1}\bSigma_{W, \mS} \to_p \bG_0 \defeq \diag\{1, \theta \}$, which is positive definite.
	In addition, we have
	\begin{align*}
		\bG_{1,p} = p^{-1} \begin{bmatrix}
			\tr(\bSigma_0^2) & \tr(\bSigma_0^2\bW_1)\\
			\tr(\bSigma_0^2\bW_1) & \tr\{(\bSigma_0\bW_1)^2\}
		\end{bmatrix}.
	\end{align*}
	We next examine each entry of $\bG_{1,p}$.
	First, we can compute that $p^{-1}\tr(\bSigma_0^2) = (\beta_0^{(0)})^2 +  p^{-1} \tr(\bW_1^2) (\beta_1^{(0)})^2\to_p (\beta_0^{(0)})^2 + \theta(\beta_1^{(0)})^2$.
	For the off-diagonal entries, we shoud have $p^{-1}  \tr(\bSigma_0^2\bW_1) = 2 p^{-1} \tr(\bW_1^2) \beta_0^{(0)}\beta_1^{(0)}+  p^{-1} \tr(\bW_1^3)(\beta_1^{(0)})^2$. By Corollary 2.1.2 of \cite{aguilar2021introduction}, we can show that $p^{-1} \tr(\bW_1^3) \to_p 0$. Then we should have $p^{-1}  \tr(\bSigma_0^2\bW_1) \to_p 2\theta \beta_0^{(0)}\beta_1^{(0)}$. Last, note that $p^{-1}\tr\{(\bSigma_0\bW_1)^2\}= p^{-1} \tr(\bW_1^2) (\beta_0^{(0)})^2 + 2p^{-1} \tr(\bW_1^3) \beta_0^{(0)}\beta_1^{(0)} + p^{-1} \tr(\bW_1^4) (\beta_1^{(0)})^2$. By Corollary 2.1.2 of \cite{aguilar2021introduction}, we can show that $p^{-1} \tr(\bW^4) \to_p 2\theta^2 + \theta$. Then we should have $p^{-1}\tr\{(\bSigma_0\bW_1)^2\}\to_p \theta (\beta_0^{(0)})^2 + ( 2\theta^2 + \theta)(\beta_1^{(0)})^2 $.
	Thus, we obtain that $\bG_{1,p} \to_p \bG_1$ with
	\begin{align*}
		\bG_1 =\begin{bmatrix}
			(\beta_0^{(0)})^2 + \theta(\beta_1^{(1)})^2  & 2\theta\beta_0^{(0)}\beta_1^{(0)}\\
			2\theta\beta_0^{(0)}\beta_1^{(0)} & \theta(\beta_0^{(0)})^2 + ( 2\theta^2 + \theta)(\beta_1^{(1)})^2
		\end{bmatrix}.
	\end{align*}
	It can be verified that the determinant $|\bG_1| >0$, which implies $\bG_1$ is also positive definite. This indicates that Condition (C\ref{cond:convergence}) (i) can hold with high probability.
	
	We next verify Condition (C\ref{cond:convergence}) (ii). Suppose the eigen-decomposition of $\bW_1$
	is $\bW_1 = \bV \bD \bV^\top$, where $\bV$ is an orthogonal matrix, and $\bD$ is a diagonal matrix collecting the eigenvalues of $\bW_1$.
	Then we can derive that,
	\begin{align*}
		&\bSigma_0^{1/2}\bW_1 \bSigma_0^{1/2}= (\beta_0^{(0)}\bI_p +\beta_1^{(0)}\bW_1 )^{1/2} \bW_1 (\beta_0^{(0)}\bI_p +\beta_1^{(0)}\bW_1 )^{1/2} \\
		=&\beta_0^{(0)} \bV \Big\{\bI_p +(\beta_1^{(0)}/\beta_0^{(0)})\bD \Big\}^{1/2} \bV^\top \Big(\bV\bD\bV^\top\Big)  \bV \Big\{\bI_p +(\beta_1^{(0)}/\beta_0^{(0)})\bD \Big\}^{1/2} \bV^\top \\
		=&\beta_0^{(0)} \bV \Big\{\bI_p +(\beta_1^{(0)}/\beta_0^{(0)})\bD \Big\}^{1/2} \bD \Big\{\bI_p +(\beta_1^{(0)}/\beta_0^{(0)})\bD \Big\}^{1/2} \bV^\top \\
		=&\beta_0^{(0)} \bV \Big\{\bD +(\beta_1^{(0)}/\beta_0^{(0)})\bD^2 \Big\} \bV^\top =\beta_0^{(0)} \bW_1 +  \beta_1^{(0)} \bW_1^2.
	\end{align*}
	Consequently, it follows that
	\begin{align*}
		\bH_p =& p^{-1} \begin{bmatrix}
			\tr(\bSigma_0\circ \bSigma_0) & \tr\{(\bSigma_0\circ (\bSigma_0^{1/2}\bW_1 \bSigma_0^{1/2})\}\\
			\tr\{(\bSigma_0\circ (\bSigma_0^{1/2}\bW_1 \bSigma_0^{1/2})\}&  \tr\{(\bSigma_0^{1/2}\bW_1 \bSigma_0^{1/2})\circ (\bSigma_0^{1/2}\bW_1 \bSigma_0^{1/2})\}
		\end{bmatrix}\\
		=&  \begin{bmatrix}
			(\beta_0^{(0)})^2 & p^{-1}\tr(\bW_1^2)  \beta_0^{(0)}  \beta_1^{(0)} \\
			p^{-1}\tr(\bW_1^2)  \beta_0^{(0)}  \beta_1^{(0)} &  p^{-1}\tr(\bW_1^2 \circ \bW_1^2) (\beta_1^{(0)})^2
		\end{bmatrix}.
	\end{align*}
	Recall that $p^{-1}\tr(\bW_1^2)\to_p \theta$. We can also derive that $p^{-1}\tr(\bW_1^2 \circ \bW_1^2) \to_p \theta^2 + \theta$. Then we should have $\bH_p \to_p \bH$ with
	\begin{align*}
		\bH = \begin{bmatrix}
			(\beta_0^{(0)})^2 &  \theta \beta_0^{(0)}  \beta_1^{(0)} \\
			\theta \beta_0^{(0)}  \beta_1^{(0)} &  (\theta^2 + \theta) (\beta_1^{(0)})^2
		\end{bmatrix}.
	\end{align*}
	One can easily verify that the determinant $|\bH| >0$, which implies $\bH$ is also positive definite. This indicates that Condition (C\ref{cond:convergence}) (ii) can also hold with high probability.

	\subsection{Additional Simulation Results}
	\label{append:simu}
	
	In this subsection, we conduct three additional experiments to better evaluate our method. For the first two experiments, we try two different data generation processes of the components of $\bZ$, while holding other simulation settings in Section \ref{subsec:simu} unchanged.
	Specifically, the components of $\bZ$ are assumed to be independently and identically generated from a mixture normal distribution $\xi \cdot \mN(0, 5/9) + (1-\xi)\cdot \mN(0, 5)$ with $P(\xi=1)=0.9$ and $P(\xi=0)=0.1$, or a standardized exponential distribution Exp$(1)-1$.
	The simulation results are presented in Tables \ref{tab:sim_mn}--\ref{tab:sim_se}, respectively.
	For the third experiment, we construct $\bW_k$s with moderate correlation , while generating $\bZ$  from the standard normal distribution and holding other simulation settings in Section \ref{subsec:simu} unchanged.
	Specifically, we independently generate each $\bx_j = (X_{j1}, \dots,X_{jK} )^\top\in \mR^{K}$ $(1\le j\le p)$  from the multivariate normal distribution $\mN_K(\zero, \bSigma_x)$, where $\bSigma_x = (0.5^{|k_1 - k_2|})_{1\le k_1, k_2\le K} \in \mR^{K\times K}$. Then we should have $X_{jk}$s with the same $j$ but different $k$ are linearly correlated with $\textup{corr}(X_{j,k_1}, X_{j, k_2} ) = 0.5^{|k_1 - k_2|}$.  We then construct $\bW_k = (w_{k,j_1j_2})_{1\le j_1, j_2\le p} \in \mR^{p\times p}$ with $w_{k,j_1 ,j_2} = X_{j_1 ,k}  X_{j_2 ,k} \times  \exp\{- p(X_{j_1 ,k}-  X_{j_2 ,k})^2\}$ for each $1\le k\le K$.
	The simulation results are presented in Table \ref{tab:sim_sn_add}.
	By the three tables, we can see that all the results are qualitatively similar to those in Table \ref{tab:sim_sn} of the main text.
	This further demonstrates the robustness and broad applicability of our proposed method.

	\begin{table}[htbp]
		\setlength{\abovecaptionskip}{1.0cm}
		\setlength{\belowcaptionskip}{0.3cm}
		\begin{center}
			\renewcommand\tablename{Table}
			\caption{Simulation results for $\bZ$ generated from the mixture normal distribution.}
			\label{tab:sim_mn}
			\resizebox{\textwidth}{!}{
				\begin{tabular}{cc|ccc|ccc|cc}
					\hline
					$(p,K)$&Penalty&$\TPR$&$\FPR$&$\CS$&$\rmse$&Bias&$\SD$&$\|\cdot\|_2$&$\|\cdot\|_F$\\
					\hline
					\multirow{4}{*}{(200,10)}
					&SCAD    &  0.787 &  0.061 &  0.290 & 0.602 & 0.052 & 0.596 &  8.053 &  2.883 \\
					&MCP     &  0.790 &  0.060 &  0.290 & 0.602 & 0.052 & 0.596 &  8.037 &  2.875 \\
					&OLS     & --     & --     & --     & 0.616 & 0.049 & 0.612 &  8.090 &  3.057 \\
					&ORACLE  &  1.000 &  0.000 &  1.000 & 0.535 & 0.026 & 0.531 &  5.403 &  2.058 \\
					\hline
					\multirow{4}{*}{(500,100)}
					&SCAD    &  0.927 &  0.060 &  0.580 & 0.125 & 0.004 & 0.124 &  6.093 &  1.883 \\
					&MCP     &  0.927 &  0.060 &  0.580 & 0.125 & 0.004 & 0.125 &  6.130 &  1.885 \\
					&OLS     & --     & --     & --     & 0.250 & 0.018 & 0.249 & 19.142 &  5.305 \\
					&ORACLE  &  1.000 &  0.000 &  1.000 & 0.105 & 0.001 & 0.105 &  3.973 &  1.356 \\
					\hline
					\multirow{4}{*}{(1000,1000)}
					&SCAD    &  0.993 &  0.047 &  0.800 & 0.025 & 0.000 & 0.025 &  3.466 &  1.113 \\
					&MCP     &  0.993 &  0.047 &  0.800 & 0.025 & 0.000 & 0.025 &  3.460 &  1.112 \\
					&OLS     & --     & --     & --     & 0.161 & 0.013 & 0.160 & 31.005 & 11.299 \\
					&ORACLE  &  1.000 &  0.000 &  1.000 & 0.022 & 0.000 & 0.022 &  2.482 &  0.878 \\
					\hline
			\end{tabular}}
		\end{center}
	\end{table}

	\begin{table}[htbp]
		\setlength{\abovecaptionskip}{1.0cm}
		\setlength{\belowcaptionskip}{0.3cm}
		\begin{center}
			\renewcommand\tablename{Table}
			\caption{Simulation results for $\bZ$ generated from the standardized exponential distribution.}
			\label{tab:sim_se}
			\resizebox{\textwidth}{!}{
				\begin{tabular}{cc|ccc|ccc|cc}
					\hline
					$(p,K)$&Penalty&$\TPR$&$\FPR$&$\CS$&$\rmse$&Bias&$\SD$&$\|\cdot\|_2$&$\|\cdot\|_F$\\
					\hline
					\multirow{4}{*}{(200,10)}
					&SCAD    &   0.823 &  0.074 &  0.260 & 0.635 & 0.058 & 0.630 &  7.938 &  2.886 \\
					&MCP     &   0.820 &  0.070 &  0.280 & 0.635 & 0.059 & 0.630 &  7.922 &  2.870 \\
					&OLS     &  --     & --     & --     & 0.644 & 0.045 & 0.642 &  8.958 &  3.038 \\
					&ORACLE  &   1.000 &  0.000 &  1.000 & 0.573 & 0.023 & 0.571 &  5.564 &  2.098 \\
					\hline
					\multirow{4}{*}{(500,100)}
					&SCAD    &  0.940 &  0.076 &  0.510 & 0.124 & 0.005 & 0.123 &  5.146 &  1.782 \\
					&MCP     &  0.938 &  0.074 &  0.510 & 0.124 & 0.005 & 0.123 &  5.183 &  1.788 \\
					&OLS     & --     & --     & --     & 0.247 & 0.019 & 0.246 & 15.220 &  5.166 \\
					&ORACLE  &  1.000 &  0.000 &  1.000 & 0.104 & 0.001 & 0.104 &  3.240 &  1.198 \\
					\hline
					\multirow{4}{*}{(1000,1000)}
					&SCAD    &  0.995 &  0.034 &  0.830 & 0.027 & 0.000 & 0.027 &  3.339 &  1.132 \\
					&MCP     &  0.995 &  0.034 &  0.830 & 0.027 & 0.000 & 0.027 &  3.339 &  1.132 \\
					&OLS     & --     & --     & --     & 0.162 & 0.013 & 0.161 & 29.949 & 11.331 \\
					&ORACLE  &  1.000 &  0.000 &  1.000 & 0.025 & 0.000 & 0.025 &  2.757 &  0.973 \\
					\hline
			\end{tabular}}
		\end{center}
	\end{table}

	\begin{table}[htbp]
		{
			\setlength{\abovecaptionskip}{1.0cm}
			\setlength{\belowcaptionskip}{0.3cm}
			\begin{center}
				\renewcommand\tablename{Table}
				\caption{Simulation results for $\bZ$ generated from the standard normal distribution and $\bW_k$s constructed with moderate correlation.}
				\label{tab:sim_sn_add}
				\resizebox{\textwidth}{!}{
					\begin{tabular}{cc|ccc|ccc|cc}
						\hline
						$(p,K)$&Penalty&$\TPR$&$\FPR$&$\CS$&$\rmse$&Bias&$\SD$&$\|\cdot\|_2$&$\|\cdot\|_F$\\
						\hline
						\multirow{4}{*}{(200,10)}
						&SCAD    &  0.588 &  0.103 &  0.060 & 0.793 & 0.164 & 0.748 & 18.883 &  4.172 \\
						&MCP     &  0.575 &  0.115 &  0.050 & 0.830 & 0.182 & 0.776 & 18.925 &  4.222 \\
						&OLS     & --     & --     & --     & 0.833 & 0.062 & 0.826 & 18.902 &  4.398 \\
						&ORACLE  &  1.000 &  0.000 &  1.000 & 0.619 & 0.043 & 0.610 & 15.865 &  3.277 \\
						\hline
						\multirow{4}{*}{(500,100)}
						&SCAD    &   0.745 &  0.054 &  0.160 & 0.210 & 0.021 & 0.155 & 18.136 &  3.615 \\
						&MCP     &   0.733 &  0.051 &  0.150 & 0.218 & 0.023 & 0.150 & 18.234 &  3.679 \\
						&OLS     &  --     & --     & --     & 0.453 & 0.022 & 0.451 & 26.706 &  7.355 \\
						&ORACLE  &   1.000 &  0.000 &  1.000 & 0.118 & 0.004 & 0.115 & 12.488 &  2.322 \\
						\hline
						\multirow{4}{*}{(1000,1000)}
						&SCAD    &   0.845 &  0.093 &  0.280 & 0.066 & 0.002 & 0.039 & 17.189 &  3.281 \\
						&MCP     &   0.848 &  0.087 &  0.320 & 0.068 & 0.003 & 0.038 & 17.068 &  3.311 \\
						&OLS     &  --     & --     & --     & 0.264 & 0.013 & 0.263 & 56.135 & 15.673 \\
						&ORACLE  &   1.000 &  0.000 &  1.000 & 0.024 & 0.000 & 0.024 & 10.051 &  1.751 \\
						\hline
				\end{tabular}}
			\end{center}
		}
	\end{table}

	\begin{table}[htbp]
		\setlength{\abovecaptionskip}{1.0cm}
		\setlength{\belowcaptionskip}{0.3cm}
		\begin{center}
			\renewcommand\tablename{Table}
			\caption{Simulation results for two different tuning parameter selection approaches. Approach (I) is to separately select $\lambda_0$ and $\lambda$, and Approach (II) is to select a single value for both $\lambda_0$ and $\lambda$. }
			\label{tab:selection}
			\resizebox{\textwidth}{!}{
				\begin{tabular}{cc|ccc|ccc|cc}
					\hline
					Approach&Penalty&$\TPR$&$\FPR$&$\CS$&$\rmse$&Bias&$\SD$&$\|\cdot\|_2$&$\|\cdot\|_F$\\
					\hline
					(I) &SCAD    &   0.796 & 0.069 & 0.235 & 0.464 & 0.051 & 0.458 & 7.667 & 2.642 \\
					(II) &SCAD     &   0.792 & 0.070 & 0.230 & 0.465 & 0.053 & 0.459 & 7.732 & 2.656 \\
					(I)&MCP     &  0.796 & 0.070 & 0.230 & 0.464 & 0.051 & 0.458 & 7.690 & 2.645 \\
					(II) &MCP  &   0.794 & 0.071 & 0.220 & 0.465 & 0.053 & 0.459 & 7.730 & 2.656 \\
					\hline
			\end{tabular}}
		\end{center}
	\end{table}

	\subsection{Selection of Tuning Parameters}
	\label{subsec:selection}
	
	To implement the LLA algorithm, we need first compute the Lasso estimator \eqref{eq:lasso} as an initial estimator.
	This requires selecting two tuning parameters: $\lambda_0$ for the Lasso estimator, and $\lambda$ in the folded concave penalized loss function \eqref{eq:Q_lambda}.
	We can separately select the two tuning parameters $\lambda_0$ and $\lambda$.
	However, this approach can be very time-consuming because we need to consider all possible pairs $(\lambda_0, \lambda)$.
	In addition, we can expect that $\lambda \asymp \lambda_0$ as remarked at the end of Appendix \ref{proof of thm:lasso_est}
	Therefore, another approach is to select a single value for both $\lambda_0$ and $\lambda$ by setting $\lambda_0=\lambda$.
	We conducted a preliminary experiment to assess the performance of the two approaches.
	Specifically, we adopt the same simulation setting as in Section \ref{subsec:simu} with $(p,K)=(200,10)$ and $\bZ$ generated from a normal distribution.
	For both approaches, we use the BIC-type criterion \eqref{eq:bic}.
	We replicate the experiment 200 times and compute the same measurements as those in Table \ref{tab:sim_sn}. The results are given in Table \ref{tab:selection}.
	From Table \ref{tab:selection}, we observe that the results of Approach (I) are slightly better than Approach (II). This is expected because Approach (I) explores all possible pairs $(\lambda_0, \lambda)$, while Approach (II) only considers pairs with $\lambda_0 = \lambda$.
	Nevertheless, the two approaches perform very similarly for both the SCAD and MCP estimators.
	In addition, Approach (II) requires less computational time. Consequently, we adopt Approach (II) in the subsequent simulation experiments and real data analysis.
	
}\fi

\end{document}